\documentclass[10pt]{article}
\usepackage{lineno,hyperref}
\modulolinenumbers[5]
\usepackage{authblk}
\usepackage[table,xcdraw]{xcolor}
\usepackage[utf8]{inputenc}
\usepackage[T1]{fontenc}
\usepackage[colorinlistoftodos]{todonotes}

\usepackage{tikz}
  \usetikzlibrary{patterns}
\usepackage{standalone}     
  \standaloneconfig{mode=image} 

\usepackage{graphicx}
\usepackage{caption}           
\usepackage{subcaption}
\usepackage{fancybox}
\usepackage{float}      	     
\usepackage{algorithm}
\usepackage{algcompatible}
\usepackage{multirow}           
\usepackage{hhline}             
  \renewcommand{\algorithmicrequire}{\textbf{Input:}}
  \renewcommand{\algorithmicensure}{\textbf{Output:}} 
  \algblockdefx{FORALLP}{ENDFORALLP}[1]%
    {\textbf{for all }#1 \textbf{do in parallel}}%
    {\textbf{end for}}
  \algloopdefx{RETURN}[1][]{\textbf{return} #1}  
\usepackage{color, colortbl}
\usepackage{array}
\usepackage{amsmath,amsfonts,amssymb,amsthm,xspace}
\usepackage{ bbold }

 \newlength{\myMheight}
 \settoheight{\myMheight}{M`}  

\newcommand{\II}{\mathbb{1}}
\newcommand{\OO}{\mathbb{0}}

\newcommand{\vvee}{\doubleSim{$\vee$}}
\newcommand{\wwedge}{\doubleSim{$\wedge$}}
\newcommand{\nneg}{\doubleSim{$\neg$}}
\newcommand{\redsquare}{\textcolor{red}{\blacksquare}}
\newcommand{\gsquare}{\textcolor{green}{\blacksquare}}
\newcommand{\bsquare}{\textcolor{blue}{\blacksquare}}

\newcommand{\NL}{\textbf{NL}\xspace}
\newcommand{\NC}{\textbf{NC}\xspace}
\newcommand{\Pt}{\textbf{P}\xspace}
\newcommand{\NP}{\textbf{NP}\xspace}
\newcommand{\PtC}{\textbf{P}-Complete\xspace}
\newcommand{\NPC}{\textbf{NP}-Complete\xspace}
\newcommand{\cO}{\mathcal{O}}
\newcommand{\AsyncStability}{\textsc{AsyncUnstability}\xspace}
\newcommand{\SAT}{\textsc{SAT}\xspace}
\newcommand{\CSAT}{\textsc{Circuit-SAT}\xspace}
\newcommand{\GCSAT}{\textsc{GridCircuit-SAT}\xspace}
\newcommand{\RGCSAT}{\textsc{Restricted-GridCircuit-SAT}\xspace}
\newcommand{\Robpred}{\textsc{Robust-Prediction}\xspace}
\newcommand{\stability}{\textsc{Unstability}\xspace}
\newcommand{\defproblemu}[3]{
  \vspace{1mm}
\noindent\fbox{
  \begin{minipage}{0.95\textwidth}
  #1 \\
  {\bf{Input:}} #2  \\
  {\bf{Question:}} #3
  \end{minipage}
  }
  \vspace{1mm}
}

\newcommand\doubleSim[1]{%
  \mathrel{
    \vbox{\offinterlineskip
      \ialign{##\cr
              #1\crcr
              \noalign{\vskip-0.6ex}
              #1\cr
      }
    }
  } 
} 

\newtheorem{theorem}{Theorem}
\newtheorem{lemma}{Lemma}
\newtheorem{proposition}{Proposition}
\newtheorem{definition}{Definition}


\newcommand{\PreserveBackslash}[1]{\let\temp=\\#1\let\\=\temp}
\newcolumntype{C}[1]{>{\PreserveBackslash\centering}p{#1}}
\newcolumntype{R}[1]{>{\PreserveBackslash\raggedleft}p{#1}}
\newcolumntype{L}[1]{>{\PreserveBackslash\raggedright}p{#1}}

\title{On the Complexity of Asynchronous Freezing Cellular Automata }



\providecommand{\keywords}[1]{\textbf{\textit{Keywords:}} #1}
\begin{document}
\author[1]{Eric Goles}
\author[1]{Diego Maldonado}
\author[1]{Pedro Montealegre}
\author[2]{Mart\'in R\'ios-Wilson}


\affil[1]{Facultad de Ingenieria y Ciencias, Universidad Adolfo Iba\~nez, Santiago, Chile.}
\affil[2]{Departamento de Ingenier\'ia Matem\'atica,  Facultad de Ciencias F\'isicas y Matem\'aticas, Universidad de Chile, Santiago, Chile.}
\maketitle

\begin{abstract}
In this paper we study the family of freezing cellular automata (FCA) in the context of asynchronous updating schemes. A cellular automaton is called \emph{freezing} if there exists an order of its states, and the transitions are only allowed to go from a lower to a higher state. A cellular automaton is asynchronous if at each time-step only one cell is updated. 
Given configuration, we say that a cell is \emph{unstable} if there exists a sequential updating scheme that changes its state. In this context, we define the problem \AsyncStability, which consists in deciding if a cell is unstable or not. In general \AsyncStability is in \NP, and we study in which cases we can solve the problem by a more efficient algorithm. 

We begin showing that  \AsyncStability is  in \NL for any one-dimensional FCA. Then we focus on the family of \emph{life-like freezing} CA (LFCA), which is a family of two-dimensional two-state FCA that generalize  the \emph{freezing version} of the game of life, known as \emph{life without death}. We study the complexity of  \AsyncStability for all LFCA in the triangular and square grids, showing that almost all of them can be solved in \NC, except for one rule for which the problem is \NPC.

\end{abstract}

\keywords{Cellular Automata,  Computational Complexity, Freezing Dynamics}




\section{Introduction}\label{subsec: intro}
Cellular Automata (CA) are discrete mathematical models initially developed by Ulam and Von Neumann in the 1940's in order to study self-replicating systems. These models can be described as dynamical systems in which the state space is defined by different uniform units, called cells. Each of these cells has an associated state and the local interactions between them, that is to say, the interaction between the cells that are nearby (in a given neighborhood) will define the consecutive transitions of the dynamics of the CA according to some rule, that we will call the \emph{local rule} of the CA. Additionally, it is assumed that all these interactions take place simultaneously, so we say that the states are updated in a \emph{synchronous scheme}. In the last 70 years, CA have been broadly studied\cite{wolfram1983statistical,martin1984algebraic,wolfram1984universality,wolfram1984computation,wolfram1985undecidability} and there are several applications of these models in Biology \cite{adamatzky2015actin,lehotzky2019cellular,deveaux2019defining}, Sociology \cite{hoekstra2010simulating,torrens2001cellular,hegselmann1998understanding}, Computer Science \cite{10.1007/11786986_13,conf/focs/Banks70}, \cite{di2008computational} etc. In this context, the latter assumption on the interactions between all the cells being perfectly synchronized is very useful as it turns the model into a massive parallel computing environment \cite{cannataro1995parallel}. Nevertheless, for several applications this assumption may be unrealistic. This statement can be explained from two perspectives: first,  from the point of biology \cite{cornforth2003artificial} in which these perfectly simultaneous interactions between cells are fairly rare and second, from the point of view of a computation model, the synchronous update scheme implies the existence of an internal clock that synchronizes each processor which is computing the state of a cell. From the perspective of the hardware design for performing the calculations, this implies an increase in the complexity and it also implies a decreasing on the efficiency of simulating the model. Therefore, different alternative approaches to the synchronous update scheme, all of them generically called \emph{asynchronous update schemes},  have been proposed. Some of the most known types of these update schemes are the following:
\begin{enumerate}
 \item Fully asynchronous update scheme \cite{fates2006fully}: at each time step, the local rule of the CA is applied to a single cell that is uniformly randomly chosen.
 \item $\alpha$-Asynchronous update scheme \cite{fates2004experimental}: at each time step, each cell has a probability $\alpha$ to update its state and a probability $1-\alpha$ to stay in the same state. The parameter $\alpha$ is called the \emph{synchronization rate}.
 \item Fixed random sweep \cite{SCHONFISCH1999123,kitagawa1974cell}:  a randomly chosen permutation of the cells is fixed at initial time and from there, at each time, cells are updated according to this order.
 \item Random new sweep \cite{SCHONFISCH1999123,kitagawa1974cell}: at each time step, cells are updated according a randomly chosen permutation of the cells.
 \item Non-random update schemes\cite{robert2012discrete}: this class of update schemes includes the sequential update schemes, in which, cells are updated one by one, according to a given order. We will refer to these by simply update schemes, when the context is clear. In addition, it is also possible to not just update one cell at each time,  but to update a set of cells simultaneously. This is called a \emph{block sequential update scheme}.
\end{enumerate}

In this paper, we will consider a specific class of asynchronous update schemes, in which only a single cell is updated at the same time following a pre-set order. We will refer to the latter class as simply asynchronous update schemes whenever the context is clear. Many of the results presented in this article can be extended to other types of updating schemes, but always considering that only a single cell is updated in every time-step. Observe that when a cell is updated its state may or may not change according to the local rule and the states of its neighbors. In this regard, we distinguish the situation when the state of the cell changes after updating it, and we say that the cell is \emph{iterated} to emphasize this fact.

We focus in the asynchronous versions of freezing cellular automata (FCA). This model was  introduced  by Goles et al. \cite{goles:hal-01294144}, in order to study dynamics that are inherently irreversible. A cellular automaton is \emph{freezing} if there exists a partial order ($\leq$) of the state set, and a cell can only update its current state to a greater state according to this order. For example, one can define an order in which the set of states is partitioned into  \textit{active} and \textit{inactive} states, such that an inactive cell that is iterated to an active state will permanently stay in some active state. 

We call freezing asynchronous cellular automata to the FCA which its dynamics is defined by some asynchronous update scheme. A straightforward result in the context of the study of these CA is that any initial periodic configuration with $n$ cells eventually reaches a fixed point in at most $\mathcal{O}(n^2)$ steps. This is because after updating each cell, at least one of the $n$ cells is iterated (effectively change its state), and a given cell can change at most $|Q|$ times. Therefore, in at most $n^2(|Q|-1) $ time-steps, the dynamics reaches a fixed point.

In that context, we observe that, given an initial configuration, there might exist some cells that will always remain in their initial state, regardless of the chosen update scheme.  We call these cells \emph{stable cells}. Conversely, a cell is \emph{unstable} if there exists an updating scheme that iterates it.  We will call \AsyncStability to the problem of deciding, given an initial condition and a cell, if the given cell is unstable. 

 Our results classify FCA according to the computational complexity of {\AsyncStability}. The computational complexity of a decision problem is defined as the amount of resources (e.g., time, space, number of processors) required to give an answer, as a function of the size of input.  Our main aim is to understand what makes a FCA rule \emph{simple} or \emph{complex}. Roughly speaking, a rule that has a \emph{hard} complexity will require an algorithm solving  {\AsyncStability} to simulate the rule step by step on \emph{every possible} updating scheme. Conversely, if the complexity is \emph{low}, then there exist some properties of the dynamics that can be exploited algorithmically in order to give an efficient solution, better than simply simulating the dynamics exhaustively for each updating scheme.

In that context, we consider the following complexity classes, usually considered in the context of the theory of computational complexity: 
\begin{itemize}

\item {\NP} is the class of problems verifiable in polynomial time. In other words, the class {\NP} is the set of problems for which, given a polynomial-sized certificate, a yes-instances can be verified by a polynomial time-algorithm. Observe that {\AsyncStability} belongs to {\NP}, because here the updating scheme plays the role of the certificate, i.e. given \emph{the right} updating scheme, it is possible to verify is a cell is unstable if we simply simulate the dynamics according to the order given by the updating scheme. 

\item \Pt is the class of problems solvable in polynomial time by a deterministic Turing machine. The convention states that \Pt is the class of problems that can be \emph{efficiently solved} with respect to computation time. 

\item \NC is the class of problems solvable in poly-logarithmic time in a parallel machine (PRAM) using a polynomial number of processors. For this class, the convention states that $\NC$ is the class of problems that are \emph{efficiently paralelizable} \cite{Greenlaw:1995}. 

\item Finally, \NL is the  class of problems solvable by a non-deterministic logarithmic space Turing machine. If {\AsyncStability} belongs to {\NL} it means that the problem can be \emph{verified} with extreme efficiency in the use of memory, i.e., given the \emph{right certificate}, one can verify a \emph{yes-instance} of {\AsyncStability} using only \emph{logarithmic-space}. 

\end{itemize}

It is well-known (see for example \cite{arora2009computational}) that  $ \textbf{NL} \subseteq \textbf{NC} \subseteq \textbf{P} \subseteq \textbf{NP}$  and it has been conjectured that all inclusions are proper, meaning that,  there may be problems in $\textbf{NC}$ that do not belong to $\textbf{NL}$, problems in $\textbf{P}$ that are not in  $\textbf{NC}$, and problems in $\textbf{NP}$ that do not belong to $\textbf{P}$.  

As we know that \AsyncStability is in general in \NP, we would like to know whether there exist specific sets of FCA rules for which this problem is in a lower complexity class, that is to say, if there are some particular FCA rules for which we can exhibit algorithms that are more efficient than the exhaustive simulation approach. Conversely, if we are unable to show such algorithms, we would like to provide evidence that it is impossible to solve the \AsyncStability more efficiently. 

The problems in $\textbf{NP}$ that are the most likely to not belong to $\textbf{P}$, are the \textbf{NP}-Complete problems. A problem is  \textbf{NP}-Complete if any other problem in $\textbf{NP}$ can be reduced by a polynomial time reduction to it. One of the best-known is the \textsc{Boolean Satisfiability problem (SAT)} \cite{Cook:1971:CTP:800157.805047}, consisting in deciding if a given CNF formula can be satisfied. Roughly speaking, the fact that \textsc{SAT} is {\NP}-Complete means that any algorithm that solves the problem has to essentially try all possible combinations of the values of the input variables.  Analogously, the problems in $\textbf{P}$ that are the most likely to not belong to $\textbf{NC}$ are the $\textbf{P}$-complete problems, which are the problems to which every other problem in the class $\textbf{P}$ can be reduced by a function computable in logarithmic space~\cite{Greenlaw:1995}. 

In the context of the study of the dynamics of freezing cellular automata there exist one well-studied rule that is known as \textit{Life without dead}. This rule was introduced for the first time by Toffoli and Margolous in \cite{toffoli1987cellular}, who called it $\textit{Inkspot}.$ As the name Life without dead suggests, it is simply the freezing version of the well-known Conway's Game of Life  \cite{ConwaysLife,Durand1999, Berlekamp1982}. In this rule, the transitions are the same that in Game of Life (that is to say, a cell will born if exactly three of its neighbors are active) with the exception that if a cell borns it will remain alive for all time-step. Within this framework, Griffeath and Moore \cite{RePEc:wop:safiwp:97-05-044} have studied decision problems such as \textsc{Prediction}, which consist in predicting, for a finite amount of time, if a cell will change its state. In this work it is shown that the latter problem is $\Pt$-complete as a consequence of the capability of the rule to simulate boolean circuits. This result is based in the richness of the local interactions defined by the rule which produce the emergency of traveling patterns called \textit{ladders}. Additionally, it is shown that answering if a given finite configuration grows to infinity is $\Pt$-hard and in the case of a given initial condition with periodic background the question is undecidable.  However, all these studies have been focus in the synchronous case, leaving open the questions regarding how difficult are these problems in the asynchronous case.

On the other hand, the study of {\AsyncStability} has been previously addressed within other relevant contexts.  In fact,  in  \cite{StabilityMajority}  this problem is studied for the \emph{freezing majority cellular automaton} (FMCA), i.e., the two-state freezing cellular automaton (say with \emph{inactive} and \emph{active} cells) for which an inactive cell is updated to the state of the most represented cell in its neighborhood.  As it is shown in \cite{StabilityMajority}, for each configuration of inactive and active cells, an inactive cell is stable for the FMCA if and only if  the same cell remains inactive in the fixed point reached by the synchronous update of the FMCA. In other words, if a cell does not become active in the synchronous update of FMCA, it wont become active under any updating scheme, and vice-versa. 

Interestingly, the latter property does not only hold for the FMCA but for every monotone FCA. A CA $F$ is called \emph{monotone} if there exists a total order of the set of states of $F$, such that for every pair of configurations $x$ and $y$, if $x \leq y$ then $F(x) \leq F(y)$ (where last inequalities are coordinate-wise). As it is shown in \cite{StabilityMajority}, for every monotone FCA $F$, and every initial configuration, the stable cells are exactly the cells that remain in their initial configuration on the fixed point reached by $F$ updated synchronously (note that in \cite{StabilityMajority} the property is shown for two-state monotone CA, but the result is indeed easly extended to the case in which there are more states). The latter observation implies that  {\AsyncStability} is in \Pt for every monotone FCA. In fact, it suffices to simply simulate the local rule under the synchronous updating scheme until the dynamic reaches a fixed point.
\subsection{Our results}

We start studying the one-dimensional FCA. We show that, restricted to one-dimensional FCA,  \AsyncStability  belongs to the class {\NL}.  Since $\NL \subseteq \Pt$, this directly implies that we can solve \AsyncStability with a much more efficient algorithm rather than the exhaustive simulation of all possible updating schemes. This algorithm is an extension of a result of Goles et. al.  \cite{goles:hal-01294144} where it is shown that, for all one-dimensional FCA (i.e. a freezing CA where every cell is updated synchronously),  there is a non-deterministic logarithmic-space algorithm that, given a initial configuration, computes the state of a cell on any given time-step. Here we present an adaptation of the algorithm proposed in \cite{goles:hal-01294144} to solve \AsyncStability on all one-dimensional FCA. 

We remark that, unlike the \NP algorithm solving  \AsyncStability explained above,  in this context, the certificate of this given \NL algorithm is not the updating scheme that iterates the given input cell. 

Roughly, the certificate of the  algorithm presented in \cite{goles:hal-01294144} consists in the time-steps on which each cell is iterated. If the given local rule is freezing and has $Q$ states, the certificate of each cell can be stored in $\cO(|Q|\log n)$ bits of information. In order to verify the certificate in logarithmic-space, the algorithm uses the fact that the FCA is one-dimensional and sequentially reads the certificates in order, starting from the left-most cell and finishing with the rightmost one. The algorithm checks, given the information related to a cell and its neighbors (the time steps on which they change their state), whether it is valid, in the sense that, it describe consistently a valid iteration of the FCA. In order to solve \AsyncStability, we adapt the algorithm presented in the latter work to look through the options to decide whether the iterations are consistent with an asynchronous updating scheme. In order to do that, we show it suffices to check that each pair of adjacent cells are not updated at the same time. Fortunately, this can be done during the verification process of the latter algorithm.

As \AsyncStability restricted to one-dimensional FCA has a relatively \emph{low} complexity, we wonder to what extent we can show the same for two (or more) dimensional FCA. In this sense, we focus our study to \emph{life-like freezing} CA (LFCA). A FCA is called a LFCA if it has two states, namely \emph{inactive} and \emph{active}, and (2) the local transition function satisfies that there exist a pair of non-negative integers $k_1 \leq k_2$ such that, if an inactive cell has at least $k_1$ and at most $k_2$ active neighbors it becomes active, and otherwise it remains inactive (active cells never become inactive). We call $R_{k_1,k_2}$ the LFCA rule with parameters $k_1, k_2$. This family of rules includes many interesting cases. For example in two dimensions with Moore neighborhood, $R_{3,3}$ corresponds to the the \emph{life without death}. In two dimensions with von Neumann neighborhood $R_{2,4}$ is the freezing majority rule, and $R_{3,4}$ is the freezing strict majority rule.

We study the family of LFCA in the triangular and square grids with von Neumann neighborhood. In that context, we rename rule $R_{k_1,k_2}$  by $Tk_1k_2$ (respectively  $Sk_1k_2$) when rule $R_{k_1,k_2}$ is defined in the triangular grid (respectively in the square grid). Therefore, there exist $10$ different LFCA defined in the triangular grid, namely rules $T00$, $T01$, $T02$, $T03$, $T11$, $T12$, $T13$, $T22$, $T23$ and $T33$. Similarly, there exist $15$ different LFCA defined in the square grid, namely rules $S00$, $S01$, $S02$, $S03$, $S04$, $S11$, $S12$, $S13$, $S14$, $S22$, $S23$, $S24$, $S33$, $S34$ and $S44$. 

In the previous context, we show that {\AsyncStability} is in \NC when the problem is restricted to any LFCA defined in a triangular grid. Moreover, we show that the same is true for almost all LFCA defined over the square grid, with the exception rule $S22$. For the upper-bounds, unlike the case of one-dimensional FCA, we are not able to exhibit a single algorithm that solves {\AsyncStability} for all LFCA. Instead, we classify rules according in three groups, according to the complexity of {\AsyncStability} for the corresponding rule. 

\begin{enumerate}
  \item \emph{Trivial rules}: Rules for which we can decide if a cell is unstable by simply inspecting its neighborhood. 
  \item \emph{Infiltration Rules}: Rules where there exists a connected set $U$ of inactive cells that has the following property: in order to decide the unstability of the cells in $U$, one simply has to look to the boundary of $U$. 
  \item \emph{Monotone-like Rules}: Rules which dynamic can be related to a monotone rule, and therefore the algorithm solving \AsyncStability uses a result given in \cite{StabilityFTCA}, that presents a relation between the dynamics of a sequentially updated monotone freezing rule with respect to the synchronous dynamics of the same rule.
\end{enumerate}

As we said above, there is one LFCA rule defined over the square grid for which we are not able to provide efficient algorithms, namely the rule $S22$. In the last part of this article, we tackle the problem \AsyncStability restricted to this rule. Is relevant to notice that rule $S22$ is known to be \emph{hard} in the context of \textsc{(Synchronous)-Stability} (i.e. the problem of deciding if a given cell is stable for the synchronous update of a given initial configuration) \cite{StabilityFTCA}. More precisely, it is known that \textsc{Stability} restricted to any FCA is solvable in polynomial time, because the dynamics of any FCA reaches a fixed point in a polynomial number of time-steps. In \cite{StabilityFTCA} it shown that \textsc{Stability} restricted to rule $S22$ is \PtC, meaning that the existence of a (more) efficient algorithm solving the problem (say, by a \NL algoritm) it is extremely unlikely. 

We show that \AsyncStability restricted to rule $S22$ is \NPC. In order to prove our result, we show that any FCA capable of simulating a certain set of gadgets satisfies that \AsyncStability, restricted to that FCA, is \NPC. This  gadgets are, roughly, an \emph{conjunction gate}, an \emph{disjunction gate} and a \emph{selector}. A conjunction gate is a squared pattern that receives as input two signals (namely \emph{true} or \emph{false}) on the top and left edges, and outputs the conjunction of the two signals through the left and bottom edges. A \emph{disjunction gate} is defines analogously, except that the output values is the disjunction of the two input signals. Our result requires a \emph{robustness} property of conjunction and disjunction gates. This robustness means that (1) when these gates are supposed to output a true value (e.g. when there is two true signals entering through a conjunction gate) then there must exists \emph{at least} one updating scheme of the pattern on which the dynamics produces a true value on both output sides; and (2) when these gates are supposed to output a false value (e.g. when in a conjunction gate one input signal is false), then both outputs sides must be false, for \emph{every} updating scheme of the gadget. The third gadget is a \emph{selector}, which has no inputs and two output edges. The selector has the property that on any updating scheme \emph{at most} one output edge can send a \emph{true} signal, while the other must send a \emph{false} signal. Roughly speaking, an updating scheme of the selector gadget \emph{selects} (at most) one over two possible paths. We show that these gadgets can be used to simulate \textrm{SAT}. More precisely, disjunction gates are used to simulate the clauses, conjunction gates are used to simulate the conjunction of all clauses, and the selector is used to choose between truth values of the variables. Moreover, we show that the combination of these gadgets allows to construct a \emph{crossing gadget}, which allows to simulate a non-necessarily planar formula in a planar topology. The \NP-Completeness associated to rule $S22$ follows by simply exhibiting the grid patterns that simulate previous gadgets. 

\subsection*{Related Work}

Up to our knowledge, most previous research regarding freezing cellular automata consider only synchronous updating schemes. Perhaps the first work involving FCA and complexity was done by Moore in \cite{RePEc:wop:safiwp:97-05-044}, where it is studied a rule called the \emph{life without death}, which corresponds to the well-known \emph{game of life}, when living cells cannot die. In his paper, Moore shows that, for the life without death, \textsc{Prediction} is  \Pt-Complete. \textsc{Prediction} is the problem consisting in, given an initial configuration and a positive integer $t^*$,  computing the state of a cell after a given number of $t^*$ (synchronous) time-steps. Observe that for any FCA, \textsc{Prediction} can be solved in polynomial time by simply simulating the automaton, because for any updating scheme the dynamics reach a fixed-point in a polynomial number of time-steps. The fact that \textsc{Prediction} is  \Pt-Complete for this rule implies that, roughly, simple simulation is essentially \emph{the best algorithm} for solving \textsc{Prediction} for the life without death FCA.

Another example of a research involving freezing dynamics and complexity is given in \cite{goles:hal-00914603}, where it is studied the majority freezing automaton, also known as \emph{bootstrap percolation}  \cite{0022-3719-12-1-008}.  This rule is studied on an arbitrary topology, i.e., the cells are located the vertices of a given undirected graph. In this case, an inactive cell will become active, if the most represented cells in its neighborhood are active cells. In  \cite{goles:hal-00914603} it is shown that for the freezing majority rule  \textsc{Prediction} is in $\NC$ when the maximum degree of the input graph is at most $4$ (for example the two-dimensional grid) and in the family of graphs of degree at least $5$ the same problem is \PtC.

Previous examples consider specific freezing rules, which that have only two states. There exist also studies that study the complexity of families of FCA. In  \cite{goles:hal-01294144}  Goles et. al introduce the one-dimensional FCA (with an arbitrary number of states) and show that \textsc{Prediction} restricted this any rule in this family is in \NL.  More recently, in \cite{StabilityFTCA} is proposed a study of the complexity of the 32 different two-dimensional and two-state totalistic freezing CA (2FTCA) with von Neumann neighborhood. In that paper, the decision problem is \textsc{Stability}, which consists in deciding if a given cell is stable in the fixed-point reached by the automata given a finite periodic configuration.  The study of \cite{StabilityFTCA} shows that (unlike one-dimensional FCA) there is a two-state totalistic FCA that on which \textsc{Stability} is \PtC.
 
Regarding asynchronous CA and complexity, in \cite{StabilityMajority} it is studied the complexity of the \emph{asynchronous majority rule} (i.e. the majority rule updated asynchronously). The authors study the problem \textsc{Asynchronous-Prediction} (\textsc{AsyncPrediction})  restricted to this rule.  \textsc{AsyncPrediction} is the problem consisting in, given an initial configuration and a positive integer $t^*$ and a cell $v$, decide whether there exists an asynchronous updating scheme on which $v$ switches its state after updating $t^*$ cells. As we explained above, an interesting result of this work shows that {\AsyncStability} is solvable in polynomial time, for all monotone FCA.   Even though {\AsyncStability} and  \textsc{AsyncPrediction}  may look similar problems, in \cite{StabilityMajority} it is shown that their complexity can be very different. Indeed, restricted to freezing majority automaton in three-dimensions, the first problem is solvable in polynomial time (actually is \PtC \cite{RePEc:wop:safiwp:96-08-060}) while the second one is \NPC.  


\subsection*{Organization of this paper}

This paper is organized in the following way: first, in Section \ref{sec: definitions} we give the main formal definitions and previous results. Later, in Section \ref{sec: 1DF} we show that for all one-dimensional asynchronous FCA problem \AsyncStability is in \NL. In Section \ref{sec:upperbounds} we study the complexity upper-bounds for the problem {\AsyncStability} restricted to the family of LFCA defined over a triangular or square grid, showing that for all  rules except rule $S22$ the problem is solvable in \NC. Finally in Section \ref{sec:complexityofruleS22} we show that for rule $S22$ problem \AsyncStability is \NPC. Section \ref{sec:conclu} finishes the article with a discussion of the results and further questions.


\section{Preliminaries }\label{sec: definitions}
In this section we give the formal definitions used along the article. We begin defining the topologies that we consider, namely the one-dimensional grid, and the triangular and square two-dimensional grid. We continue defining freezing cellular automata, asynchronous updating schemes, and problem \textsc{Asynchronous-Unstability}. Then we define the family of life-like freezing cellular automata. Finally, we give the definitions of the complexity clases that will appear in our results. 

\subsection{Cellular automata and grids}
 Cellular automata are  discrete dynamical systems defined on a regular grid of cells, where each cell change its state by the action of a local function or automata rule, which depends on the state of the cell and the state of its neighbors. In this work we will consider three classes of grids:
\begin{itemize}
  \item the one-dimensional grid, where the cells are arranged in a line. The neighborhoods in this case are defined by the adjacent cells, that is to say, the left and right neighbors.
  \item the two-dimensional grid, where we consider two possible tessellations of plane: the tessellation by triangles and tessellation by squares.  In both cases, the neighbors of each cell are simply given by the nearest cells as it is shown in Figure \ref{fig: grids}. Thus, each cell has $3$ and $4$ neighbors respectively. This definition of neighborhood is known in literature as Von Neumann neighborhood.
\end{itemize}

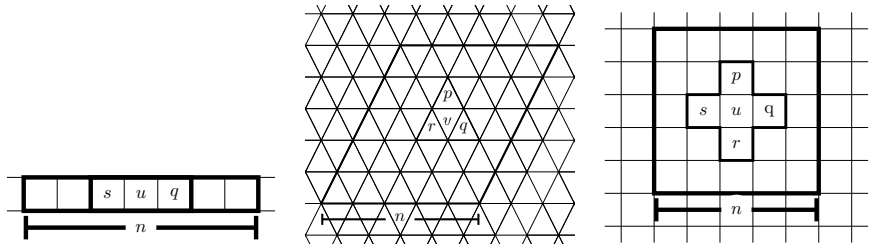
\begin{figure}
\centering
\resizebox{0.3\textwidth}{!}{%
  \begin{tikzpicture}[scale=0.8, every node/.style={scale=1.25}]
  \node at (0.5,0.5) {$u$};
  \node at (1.5,0.5) {$q$};
  \node at (-0.5,0.5) {$s$};
  \draw [line width=3pt,|-|] (-3,-0.5) -- (4,-0.5);
  \node[fill=white,circle] at (0.5,-0.5) {$n$};
  \draw [line width=3pt](-1,0) rectangle (2,1);
  \draw [line width=3pt](-3,0) rectangle (4,1);
  \draw (-2,1) -- (-2,0);
  \draw (-1,1) -- (-1,0);
  \draw (0,1) -- (0,0);
  \draw (1,1) -- (1,0);
  \draw (2,1) -- (2,0);
  \draw (3,1) -- (3,0);
  \draw (-3,1) -- (-3,0);
  \draw (4,1) -- (4,0);
  \draw (4.5,1) -- (-3.5,1);
  \draw (4.5,0) -- (-3.5,0);
  \end{tikzpicture}
}\hfil
\resizebox{0.3\textwidth}{!}{
\begin{tikzpicture}[scale=1.6, every node/.style={scale=2.5}]
\clip (11.5,-15.25) rectangle (20,-7.75);
\node[fill=white,circle]  at (16,-11.4) {$u$};
\node[fill=white,circle] at (16,-10.6) {$p$};
\node[fill=white,circle] at (16.5,-11.6) {$q$};
\node[fill=white,circle] at (15.5,-11.6) {$r$};
\draw [line width=3pt,|-|](12,-14.5) -- (17,-14.5);
\node[fill=white,circle] at (14.4921,-14.4478) {$n$};
\draw (10,-6)  -- (9.5,-7) -- (10.5,-7) -- (10,-6);
\draw (10.5,-7) -- (10,-6) -- (11,-6) -- (10.5,-7);
\draw (11,-6)  -- (10.5,-7) -- (11.5,-7) -- (11,-6);
\draw (11.5,-7) -- (11,-6) -- (12,-6) -- (11.5,-7);
\draw (12,-6)  -- (11.5,-7) -- (12.5,-7) -- (12,-6);
\draw (13,-6)  -- (12.5,-7) -- (13.5,-7) -- (13,-6);
\draw (14,-6)  -- (13.5,-7) -- (14.5,-7) -- (14,-6);
\draw (15,-6)  -- (14.5,-7) -- (15.5,-7) -- (15,-6);
\draw (16,-6)  -- (15.5,-7) -- (16.5,-7) -- (16,-6);
\draw (17,-6)  -- (16.5,-7) -- (17.5,-7) -- (17,-6);
\draw (18,-6)  -- (17.5,-7) -- (18.5,-7) -- (18,-6);
\draw (18.5,-7) -- (18,-6) node (v23) {} -- (19,-6) -- (18.5,-7);
\draw (19,-6)  -- (18.5,-7) -- (19.5,-7) -- (19,-6);
\draw (19.5,-7) -- (19,-6) -- (20,-6) -- (19.5,-7);
\draw (20,-6)  -- (19.5,-7) -- (20.5,-7) -- (20,-6);
\draw (20.5,-7) -- (20,-6) -- (21,-6) -- (20.5,-7);
\draw (21,-6)  -- (20.5,-7) -- (21.5,-7) -- (21,-6);
\draw (21.5,-7) -- (21,-6) -- (22,-6) -- (21.5,-7);
\draw (10,-16)  -- (9.5,-17) -- (10.5,-17) -- (10,-16);
\draw  (10.5,-17) -- (10,-16) -- (11,-16) -- (10.5,-17);
\draw (11,-16)  -- (10.5,-17) -- (11.5,-17) -- (11,-16);
\draw   (11.5,-17) -- (11,-16) -- (12,-16) -- (11.5,-17);
\draw (12,-16)  -- (11.5,-17) node (v25) {} -- (12.5,-17) -- (12,-16);
\draw (13,-16)  -- (12.5,-17) node (v24) {} -- (13.5,-17) -- (13,-16);
\draw (14,-16)  -- (13.5,-17) -- (14.5,-17) -- (14,-16);
\draw (15,-16)  -- (14.5,-17) -- (15.5,-17) -- (15,-16);
\draw (16,-16)  -- (15.5,-17) -- (16.5,-17) -- (16,-16);
\draw (17,-16)  -- (16.5,-17) -- (17.5,-17) -- (17,-16);
\draw (18,-16)  -- (17.5,-17) -- (18.5,-17) -- (18,-16);
\draw (18.5,-17) -- (18,-16) -- (19,-16) -- (18.5,-17);
\draw (19,-16)  -- (18.5,-17) -- (19.5,-17) -- (19,-16);
\draw (19.5,-17) -- (19,-16) -- (20,-16) -- (19.5,-17);
\draw (20,-16)  -- (19.5,-17) -- (20.5,-17) -- (20,-16);
\draw (20.5,-17) -- (20,-16) -- (21,-16) -- (20.5,-17);
\draw (21,-16)  -- (20.5,-17) -- (21.5,-17) -- (21,-16);
\draw (21.5,-17) -- (21,-16) -- (22,-16) -- (21.5,-17);
\draw (9.5,-15)  -- (9,-16) -- (10,-16) -- (9.5,-15);
\draw (10,-16) -- (9.5,-15) -- (10.5,-15) -- (10,-16);
\draw (10.5,-15)  -- (10,-16) -- (11,-16) -- (10.5,-15);
\draw (11,-16) -- (10.5,-15) -- (11.5,-15) -- (11,-16);
\draw (11.5,-15)  -- (11,-16) -- (12,-16) -- (11.5,-15);
\draw (12.5,-15)  -- (12,-16) -- (13,-16) -- (12.5,-15);
\draw (13,-16) -- (12.5,-15) -- (13.5,-15) -- (13,-16);
\draw (13.5,-15)  -- (13,-16) -- (14,-16) -- (13.5,-15);
\draw (14,-16) -- (13.5,-15) -- (14.5,-15) -- (14,-16);
\draw (14.5,-15)  -- (14,-16) -- (15,-16) -- (14.5,-15);
\draw (15,-16) -- (14.5,-15) -- (15.5,-15) -- (15,-16);
\draw (15.5,-15)  -- (15,-16) -- (16,-16) -- (15.5,-15);
\draw (16,-16) -- (15.5,-15) -- (16.5,-15) -- (16,-16);
\draw (16.5,-15)  -- (16,-16) -- (17,-16) -- (16.5,-15);
\draw (17,-16) -- (16.5,-15) -- (17.5,-15) -- (17,-16);
\draw (17.5,-15)  -- (17,-16) -- (18,-16) -- (17.5,-15);
\draw (18.5,-15)  -- (18,-16) -- (19,-16) -- (18.5,-15);
\draw (19,-16) -- (18.5,-15) -- (19.5,-15) -- (19,-16);
\draw (19.5,-15)  -- (19,-16) -- (20,-16) -- (19.5,-15);
\draw (20,-16) -- (19.5,-15) -- (20.5,-15) -- (20,-16);
\draw (20.5,-15)  -- (20,-16) -- (21,-16) -- (20.5,-15);
\draw (21,-16) -- (20.5,-15) -- (21.5,-15) -- (21,-16);
\draw (9.5,-7)  -- (9,-8) -- (10,-8) -- (9.5,-7);
\draw (10,-8) -- (9.5,-7) -- (10.5,-7) -- (10,-8);
\draw (10.5,-7)  -- (10,-8) -- (11,-8) -- (10.5,-7);
\draw (11,-8) -- (10.5,-7) -- (11.5,-7) -- (11,-8);
\draw (11.5,-7)  -- (11,-8) -- (12,-8) -- (11.5,-7);
\draw (12.5,-7)  -- (12,-8) -- (13,-8) -- (12.5,-7);
\draw (13,-8) -- (12.5,-7) -- (13.5,-7) -- (13,-8);
\draw (13.5,-7)  -- (13,-8) -- (14,-8) -- (13.5,-7);
\draw (14,-8) -- (13.5,-7) -- (14.5,-7) -- (14,-8);
\draw (14.5,-7)  -- (14,-8) -- (15,-8) node (v4) {} -- (14.5,-7);
\draw (15,-8) -- (14.5,-7) -- (15.5,-7) -- (15,-8);
\draw (15.5,-7)  -- (15,-8) -- (16,-8) -- (15.5,-7);
\draw (16,-8) -- (15.5,-7) -- (16.5,-7) -- (16,-8);
\draw (16.5,-7)  -- (16,-8) -- (17,-8) -- (16.5,-7);
\draw (17,-8) -- (16.5,-7) -- (17.5,-7) -- (17,-8);
\draw (17.5,-7)  -- (17,-8) -- (18,-8) -- (17.5,-7);
\draw (18.5,-7)  -- (18,-8) -- (19,-8) -- (18.5,-7);
\draw (19,-8) -- (18.5,-7) -- (19.5,-7) -- (19,-8);
\draw (19.5,-7)  -- (19,-8) -- (20,-8) node (v3) {} -- (19.5,-7);
\draw (20,-8) -- (19.5,-7) -- (20.5,-7) -- (20,-8);
\draw (20.5,-7)  -- (20,-8) -- (21,-8) -- (20.5,-7);
\draw (21,-8) -- (20.5,-7) -- (21.5,-7) -- (21,-8);
\draw (10,-8)  -- (9.5,-9) -- (10.5,-9) -- (10,-8);
\draw (10.5,-9) -- (10,-8) -- (11,-8) -- (10.5,-9);
\draw (11,-8)  -- (10.5,-9) -- (11.5,-9) -- (11,-8);
\draw (12,-8)  -- (11.5,-9) -- (12.5,-9) -- (12,-8);
\draw (12.5,-9) -- (12,-8) -- (13,-8) -- (12.5,-9);
\draw (13,-8)  -- (12.5,-9) -- (13.5,-9) -- (13,-8);
\draw (13.5,-9) -- (13,-8) -- (14,-8) -- (13.5,-9);
\draw (14,-8)  -- (13.5,-9) -- (14.5,-9) -- (14,-8);
\draw (14.5,-9) -- (14,-8) -- (15,-8) -- (14.5,-9);
\draw (15,-8)  -- (14.5,-9) -- (15.5,-9) -- (15,-8);
\draw (15.5,-9) -- (15,-8) -- (16,-8) -- (15.5,-9);
\draw (16,-8)  -- (15.5,-9) -- (16.5,-9) -- (16,-8);
\draw (16.5,-9) -- (16,-8) -- (17,-8) -- (16.5,-9);
\draw (17,-8)  -- (16.5,-9) -- (17.5,-9) -- (17,-8);
\draw  (17.5,-9) -- (17,-8) -- (18,-8) -- (17.5,-9);
\draw (18,-8)  -- (17.5,-9) -- (18.5,-9) -- (18,-8);
\draw (19,-8)  -- (18.5,-9) -- (19.5,-9) -- (19,-8);
\draw (19.5,-9) -- (19,-8) -- (20,-8) -- (19.5,-9);
\draw (20,-8)  -- (19.5,-9) -- (20.5,-9) -- (20,-8);
\draw (20.5,-9) -- (20,-8) -- (21,-8) -- (20.5,-9);
\draw (21,-8)  -- (20.5,-9) -- (21.5,-9) -- (21,-8);
\draw (21.5,-9) -- (21,-8) -- (22,-8) -- (21.5,-9);
\draw (9.5,-9)  -- (9,-10) -- (10,-10) -- (9.5,-9);
\draw (10,-10) -- (9.5,-9) -- (10.5,-9) -- (10,-10);
\draw (10.5,-9)  -- (10,-10) -- (11,-10) -- (10.5,-9);
\draw (11.5,-9)  -- (11,-10) -- (12,-10) -- (11.5,-9);
\draw (12,-10) -- (11.5,-9) -- (12.5,-9) -- (12,-10);
\draw (12.5,-9)  -- (12,-10) -- (13,-10) -- (12.5,-9);
\draw (13,-10) -- (12.5,-9) -- (13.5,-9) -- (13,-10);
\draw (13.5,-9)  -- (13,-10) -- (14,-10) -- (13.5,-9);
\draw (14,-10) -- (13.5,-9) -- (14.5,-9) -- (14,-10);
\draw (14.5,-9)  -- (14,-10) -- (15,-10) -- (14.5,-9);
\draw (15,-10) -- (14.5,-9) -- (15.5,-9) -- (15,-10);
\draw (15.5,-9)  -- (15,-10) -- (16,-10) -- (15.5,-9);
\draw (16,-10) -- (15.5,-9) -- (16.5,-9) -- (16,-10);
\draw (16.5,-9)  -- (16,-10) -- (17,-10) -- (16.5,-9);
\draw (17,-10) -- (16.5,-9) -- (17.5,-9) -- (17,-10);
\draw (17.5,-9)  -- (17,-10) -- (18,-10) -- (17.5,-9);
\draw  (18,-10) -- (17.5,-9) -- (18.5,-9) -- (18,-10);
\draw (18.5,-9)  -- (18,-10) -- (19,-10) -- (18.5,-9);
\draw (19.5,-9)  -- (19,-10) -- (20,-10) -- (19.5,-9);
\draw (20,-10) -- (19.5,-9) -- (20.5,-9) -- (20,-10);
\draw (20.5,-9)  -- (20,-10) -- (21,-10) -- (20.5,-9);
\draw (21,-10) -- (20.5,-9) -- (21.5,-9) -- (21,-10);
\draw (10,-10)  -- (9.5,-11) -- (10.5,-11) -- (10,-10);
\draw (11,-10)  -- (10.5,-11) -- (11.5,-11) -- (11,-10);
\draw (11.5,-11) -- (11,-10) -- (12,-10) -- (11.5,-11);
\draw (12,-10)  -- (11.5,-11) -- (12.5,-11) -- (12,-10);
\draw (12.5,-11) -- (12,-10) -- (13,-10) -- (12.5,-11);
\draw (13,-10)  -- (12.5,-11) -- (13.5,-11) -- (13,-10);
\draw (13.5,-11) -- (13,-10) -- (14,-10) -- (13.5,-11);
\draw (14,-10)  -- (13.5,-11) -- (14.5,-11) -- (14,-10);
\draw (14.5,-11) -- (14,-10) -- (15,-10) -- (14.5,-11);
\draw (15,-10)  -- (14.5,-11) -- (15.5,-11) -- (15,-10);
\draw (15.5,-11) -- (15,-10) -- (16,-10) -- (15.5,-11);
\draw (16,-10)  -- (15.5,-11) -- (16.5,-11) -- (16,-10);
\draw (16.5,-11) -- (16,-10) -- (17,-10) -- (16.5,-11);
\draw (17,-10)  -- (16.5,-11) -- (17.5,-11) -- (17,-10);
\draw (17.5,-11) -- (17,-10) -- (18,-10) -- (17.5,-11);
\draw (18,-10)  -- (17.5,-11) -- (18.5,-11) -- (18,-10);
\draw  (18.5,-11) -- (18,-10) -- (19,-10) -- (18.5,-11);
\draw (19,-10)  -- (18.5,-11) -- (19.5,-11) -- (19,-10);
\draw (20,-10)  -- (19.5,-11) -- (20.5,-11) -- (20,-10);
\draw (20.5,-11) -- (20,-10) -- (21,-10) -- (20.5,-11);
\draw (21,-10)  -- (20.5,-11) -- (21.5,-11) -- (21,-10);
\draw (21.5,-11) -- (21,-10) -- (22,-10) -- (21.5,-11);
\draw (9.5,-11)  -- (9,-12) -- (10,-12) -- (9.5,-11);
\draw (10.5,-11)  -- (10,-12) -- (11,-12) -- (10.5,-11);
\draw (11,-12) -- (10.5,-11) -- (11.5,-11) -- (11,-12);
\draw (11.5,-11)  -- (11,-12) -- (12,-12) -- (11.5,-11);
\draw (12,-12) -- (11.5,-11) -- (12.5,-11) -- (12,-12);
\draw (12.5,-11)  -- (12,-12) -- (13,-12) -- (12.5,-11);
\draw (13,-12) -- (12.5,-11) -- (13.5,-11) -- (13,-12);
\draw (13.5,-11)  -- (13,-12) -- (14,-12) -- (13.5,-11);
\draw (14,-12) -- (13.5,-11) -- (14.5,-11) -- (14,-12);
\draw (14.5,-11)  -- (14,-12) -- (15,-12) -- (14.5,-11);
\draw (15,-12) -- (14.5,-11) -- (15.5,-11) -- (15,-12);

\draw (15.5,-11)  -- (15,-12) -- (16,-12) -- (15.5,-11);
\draw (16,-12) -- (15.5,-11) -- (16.5,-11) -- (16,-12);
\draw (16.5,-11)  -- (16,-12) -- (17,-12) -- (16.5,-11);
\draw (17,-12) -- (16.5,-11) -- (17.5,-11) -- (17,-12);
\draw (17.5,-11)  -- (17,-12) -- (18,-12) -- (17.5,-11);
\draw (18,-12) -- (17.5,-11) -- (18.5,-11) -- (18,-12);
\draw (18.5,-11)  -- (18,-12) -- (19,-12) -- (18.5,-11);
\draw  (19,-12) -- (18.5,-11) -- (19.5,-11) -- (19,-12);
\draw (19.5,-11)  -- (19,-12) -- (20,-12) -- (19.5,-11);
\draw (20.5,-11)  -- (20,-12) -- (21,-12) -- (20.5,-11);
\draw (21,-12) -- (20.5,-11) -- (21.5,-11) -- (21,-12);
\draw (10,-12)  -- (9.5,-13) -- (10.5,-13) -- (10,-12);
\draw (11,-12)  -- (10.5,-13) -- (11.5,-13) -- (11,-12);
\draw (11.5,-13) -- (11,-12) -- (12,-12) -- (11.5,-13);
\draw (12,-12)  -- (11.5,-13) -- (12.5,-13) node (v1) {} -- (12,-12);
\draw (12.5,-13) -- (12,-12) -- (13,-12) -- (12.5,-13);
\draw (13,-12)  -- (12.5,-13) -- (13.5,-13) -- (13,-12);
\draw (13.5,-13) -- (13,-12) -- (14,-12) -- (13.5,-13);
\draw (14,-12)  -- (13.5,-13) -- (14.5,-13) -- (14,-12);
\draw (14.5,-13) -- (14,-12) -- (15,-12) -- (14.5,-13);
\draw (15,-12)  -- (14.5,-13) -- (15.5,-13) -- (15,-12);
\draw (15.5,-13) -- (15,-12) -- (16,-12) -- (15.5,-13);
\draw (16,-12)  -- (15.5,-13) -- (16.5,-13) -- (16,-12);
\draw (16.5,-13) -- (16,-12) -- (17,-12) -- (16.5,-13);
\draw (17,-12)  -- (16.5,-13) -- (17.5,-13) node (v2) {} -- (17,-12);
\draw (17.5,-13) -- (17,-12) -- (18,-12) -- (17.5,-13);
\draw (18,-12)  -- (17.5,-13) -- (18.5,-13) -- (18,-12);
\draw  (18.5,-13) -- (18,-12) -- (19,-12) -- (18.5,-13);
\draw (19,-12)  -- (18.5,-13) -- (19.5,-13) -- (19,-12);
\draw (20,-12)  -- (19.5,-13) -- (20.5,-13) -- (20,-12);
\draw (20.5,-13) -- (20,-12) -- (21,-12) -- (20.5,-13);
\draw (21,-12)  -- (20.5,-13) -- (21.5,-13) -- (21,-12);
\draw (21.5,-13) -- (21,-12) -- (22,-12) -- (21.5,-13);
\draw (9.5,-13)  -- (9,-14) -- (10,-14) -- (9.5,-13);
\draw (10,-14) -- (9.5,-13) -- (10.5,-13) -- (10,-14);
\draw (10.5,-13)  -- (10,-14) -- (11,-14) -- (10.5,-13);
\draw (11.5,-13)  -- (11,-14) -- (12,-14) -- (11.5,-13);
\draw (12,-14) -- (11.5,-13) -- (12.5,-13) -- (12,-14);
\draw (12.5,-13)  -- (12,-14) -- (13,-14) -- (12.5,-13);
\draw (13,-14) -- (12.5,-13) -- (13.5,-13) -- (13,-14);
\draw (13.5,-13)  -- (13,-14) -- (14,-14) -- (13.5,-13);
\draw (14,-14) -- (13.5,-13) -- (14.5,-13) -- (14,-14);
\draw (14.5,-13)  -- (14,-14) -- (15,-14) -- (14.5,-13);
\draw (15,-14) -- (14.5,-13) -- (15.5,-13) -- (15,-14);
\draw (15.5,-13)  -- (15,-14) -- (16,-14) -- (15.5,-13);
\draw (16,-14) -- (15.5,-13) -- (16.5,-13) -- (16,-14);
\draw (16.5,-13)  -- (16,-14) -- (17,-14) -- (16.5,-13);
\draw (17,-14) -- (16.5,-13) -- (17.5,-13) -- (17,-14);
\draw (17.5,-13)  -- (17,-14) -- (18,-14) -- (17.5,-13);
\draw  (18,-14) -- (17.5,-13) -- (18.5,-13) -- (18,-14);
\draw (18.5,-13)  -- (18,-14) -- (19,-14) -- (18.5,-13);
\draw (19.5,-13)  -- (19,-14) -- (20,-14) -- (19.5,-13);
\draw (20,-14) -- (19.5,-13) -- (20.5,-13) -- (20,-14);
\draw (20.5,-13)  -- (20,-14) -- (21,-14) -- (20.5,-13);
\draw (21,-14) -- (20.5,-13) -- (21.5,-13) -- (21,-14);
\draw (10,-14)  -- (9.5,-15) -- (10.5,-15) -- (10,-14);
\draw (10.5,-15) -- (10,-14) -- (11,-14) -- (10.5,-15);
\draw (11,-14)  -- (10.5,-15) -- (11.5,-15) -- (11,-14);
\draw (12,-14)  -- (11.5,-15) -- (12.5,-15) -- (12,-14);
\draw (12.5,-15) -- (12,-14) -- (13,-14) -- (12.5,-15);
\draw (13,-14)  -- (12.5,-15) -- (13.5,-15) -- (13,-14);
\draw (13.5,-15) -- (13,-14) -- (14,-14) -- (13.5,-15);
\draw (14,-14)  -- (13.5,-15) -- (14.5,-15) -- (14,-14);
\draw (14.5,-15) -- (14,-14) -- (15,-14) -- (14.5,-15);
\draw (15,-14)  -- (14.5,-15) -- (15.5,-15) -- (15,-14);
\draw (15.5,-15) -- (15,-14) -- (16,-14) -- (15.5,-15);
\draw (16,-14)  -- (15.5,-15) -- (16.5,-15) -- (16,-14);
\draw (16.5,-15) -- (16,-14) -- (17,-14) -- (16.5,-15);
\draw (17,-14)  -- (16.5,-15) -- (17.5,-15) -- (17,-14);
\draw  (17.5,-15) -- (17,-14) -- (18,-14) -- (17.5,-15);
\draw (18,-14)  -- (17.5,-15) -- (18.5,-15) -- (18,-14);
\draw (19,-14)  -- (18.5,-15) -- (19.5,-15) -- (19,-14);
\draw (19.5,-15) -- (19,-14) -- (20,-14) -- (19.5,-15);
\draw (20,-14)  -- (19.5,-15) -- (20.5,-15) -- (20,-14);
\draw (20.5,-15) -- (20,-14) -- (21,-14) -- (20.5,-15);
\draw (21,-14)  -- (20.5,-15) -- (21.5,-15) -- (21,-14);
\draw (21.5,-15) -- (21,-14) -- (22,-14) -- (21.5,-15);
\draw [line width=3pt](12,-14) -- (17,-14) -- (19.5,-9) -- (14.5,-9) -- (12,-14);
\draw[line width=2pt] (16,-10) -- (15,-12) -- (17,-12) -- (16,-10);
\end{tikzpicture}
}\hfil
\resizebox{0.3\textwidth}{!}{
\begin{tikzpicture}[scale=0.8, every node/.style={scale=1.25}]
\draw [line width=3pt](-2,-2) rectangle (3,3);
\node at (0.5,0.5) {$u$};
\node at (0.5,1.5) {$p$};
\node at (1.5,0.5) {$$q};
\node at (0.5,-0.5) {$r$};
\node at (-0.5,0.5) {$s$};
\draw [line width=3pt,|-|] (-2,-2.5) -- (3,-2.5);
\node[fill=white,circle] at (0.5,-2.5) {$n$};
\draw (-2,3.5) -- (-2,-3.5);
\draw (-1,3.5) -- (-1,-3.5);
\draw (0,3.5) -- (0,-3.5);
\draw (1,3.5) -- (1,-3.5);
\draw (2,3.5) -- (2,-3.5);
\draw (3,3.5) -- (3,-3.5);
\draw (-3,3.5) -- (-3,-3.5);
\draw (4,3.5) -- (4,-3.5);
\draw (-3.5,3) -- (4.5,3);
\draw (4.5,2) -- (-3.5,2);
\draw (4.5,1) -- (-3.5,1);
\draw (4.5,0) -- (-3.5,0);
\draw (4.5,-1) -- (-3.5,-1);
\draw (4.5,-2) -- (-3.5,-2);
\draw (4.5,-3) -- (-3.5,-3);
\draw[line width=2pt] (0,1) -- (-1,1) -- (-1,0) -- (0,0) -- (0,-1) -- (1,-1) -- (1,0) -- (2,0) -- (2,1) -- (1,1) -- (1,2) -- (0,2) -- (0,1);
\end{tikzpicture}
}
  \caption{ Left: $G(7)$ in the one dimensional grid. Middle:  $G(5)$ in the triangular grid; Right: $G(5)$ in the square grid. All figures are represented with the von Neumann neighborhood of cell $u$.}     
  \label{fig: grids}
\end{figure} 
In this model, each cell can only have a finite number of states. We call  $Q$  the set of states. If only two states are considered, they will be denoted by $0$ and $1$. 
We say that a site in state $1$ is \emph{active} and a site in state $0$ is \emph{inactive}. 
A \emph{configuration} of the grid is a function $x$ that assigns values in $Q$ to a region in the $d$ dimensional lattice for $d = 1,2$. For instance if $d=2$,  in the case of the triangular grid, this region is given by a rhomboid shaped area of $2n^2$ cells and for the square grid it is given by a square area of $n\times n$ cells (see Figure \ref{fig: grids}). In any case, we call this region $G(n)$ and we refer to it as the $n$-grid. In addition, the value of the cell $u$ in the configuration $x$ is denoted $x_u$.

In any case, if $u \in \mathbb{Z}^d$ with $d = 1,2$, we will refer as $N(u)$ to the neighborhood of the cell $u$ and $N(\cdot)$ the neighborhood of the cell at the origin. Depending on the dimensions, a configuration $x$ is considered to be defined over a cycle graph (one-dimensional) or a torus (two-dimensional), by identifying each cell in the boundary of $x$ as a neighbor of the cells placed in the opposite boundary of $x$. In addition, for a cell $u\in \mathbb{Z}^d$, we call $x_{N(u)}$ the restriction of $x$ to the neighborhood of $u$. 

Formally, a \emph{cellular automaton} (CA) with states $Q$ and \emph{local function} $f:Q^{N(\cdot)} \to Q$, is a map $F:Q^{G(n)}\to Q^{G(n)}$, such that $F(x)_u = f(x_{N(u)})$. We call $F$ the \emph{global function} or the \emph{global rule} of the CA.  The dynamics is defined by assigning to the configuration $x$ a new state given by the synchronous update of the local function on $x$.  Given a partial order ($\leq$) in the set $Q$, we say that a cellular automaton $F$ is  \emph{freezing}
\cite{goles:hal-01294144} (FCA) if $F$ satisfies that $x_u\leq F(x)_u$ for every $u\in G(n)$ and $x \in Q^{G(n)}$. 
An important remark is that every configuration of a FCA reaches a fixed point in at most  $\cO(|G(n)|)$ steps. 

\subsection{Asynchronous updates and Unstability}

\emph{Asynchronous cellular automata} (ACA)  are defined similarly to CA. In this case, the global function $F:Q^{G(n)}\to Q^{G(n)}$ is defined by assigning to the configuration $x$ a new state given by the sequential update (cell by cell) of the local function on $x$. More precisely, in each time step, we evaluate the local function in a given cell while the others remain in the same state. We chose in each time step the cell that is going to be updated by the local rule according to some given order.
Formally, the $t$-th sequential update of $x$ is given by, 

$$
    F^{\sigma(0)}(x)=x;\qquad
    F^{\sigma(t)}(x)_z =
    \begin{cases}
         f(F^{\sigma(t-1)}(x)_{N(z)}) &\text{ if } z =\sigma(t)\\
         x_{z} &\text{ otherwise.}
       \end{cases}
$$
Where $\sigma:\mathbb{N}\to G(n)$, is a function such that for every $k \geq 1$,  $\left.\sigma \right|_{[(k-1)|G(n)|,k|G(n)|]}$ is a permutation of the $\{1,...|G(n)|\}$ where $|G(n)|$ is the number of cells inside the $n$-grid. In other words, this means that in at most $t=n$ time steps each cell of the grid is updated. We call $\sigma$ a \emph{sequential or asyncrhonous update scheme}. 

On the other hand, if we re-define $\sigma$ to change simultaneously more than one cell per time step, we will say that $\sigma$ is a \emph{block-sequential update scheme}.

We are interested in the cells that always remain in the same state for any sequential updating scheme. We call such cells \emph{stable} cells.

\begin{definition}
Given a configuration $x \in Q^{G(n)}$ and a FCA $F$, we say that a site $v$ is \emph{stable} for $x$ if and only if $v$ remains in its initial state after the application of the rule under any updating scheme, i.e., $F^{\sigma(t)}(x)_v= x_v$ for all $t\geq 0$ and any updating scheme $\sigma$.
\end{definition}

From the previous definition, we consider the problem {\AsyncStability}, which consists in deciding if a cell on a periodic configuration $x$ is stable. 
Formally speaking, if $F$ is a FCA, then:

  \medskip

  \defproblemu{Asynchronous Unstability ({\AsyncStability})}{A natural number $n$, a finite configuration $x$ of $G(n)$ and a site $u \in G(n)$. }{Does there exist a sequential updating scheme $\sigma$ and $T>0$ such that $F^{\sigma(T)}(x)_u  \neq x_u$? }

  \medskip

We call the input cell of problem {\AsyncStability} as the \emph{decision cell}. Observe that the question of the latter problem is whether the decision cell is stable or no. We say that a cell is \emph{iterated} if it changes its state.

Let us note that for every FCA, \AsyncStability is in \NP. Indeed, given an arbitrary FCA and the input of the latter problem,  there exists an updating scheme that iterates $u$ if and only if there exists a non-deterministic Turing Machine that can decide \AsyncStability. The latter can be deduced by considering the update scheme as a non-deterministic way of simulating the FCA. In other words, the update scheme plays the role of the certificate of the polynomial verifier simulating a non-deterministic Turing machine.

Our approach is to study the complexity of the family of FCA by the computational complexity of \AsyncStability. 
We say that a FCA is \emph{hard} if, roughly, the best strategy for solving \AsyncStability is the \emph{trivial one}, i.e. the strategy consisting in computing the future state of a cell  by trying every possible updating scheme until the decision cell changes its state.

\subsection{Life-Like Freezing Cellular Automata}

We will be particularly interested in the family of \emph{life-like freezing cellular automata} (LFCA). 
In this family, the active cells remain active, because the rule is freezing, and the inactive cells become active depending to an interval to which belongs the number of active neighbors. More precisely, there exist two non-negative integers $k_1, k_2$ such that $k_1\leq k_2$, we call $F_{k_1,k_2}$ the rule defined by:
$$F_{k_1,k_2}(x)_u = \left\{ \begin{array}{cl} 1 & \textrm{if } (x_u = 1) \vee (k_1 \leq \sum_{v\in N(u)\setminus\{u\} }x_{v} \leq k_2), \\ 0 & \textrm{otherwise.}  \end{array} \right.$$
in other words, in $F_{k_1, k_2}$ an inactive cell becomes active if the number of active cells in its  neighborhood is at least $k_1$ and at most $k_2$. 

Let $F$ be a LFCA. We can identify $F$ with a set $\mathcal{I}_F\subseteq \{0, 1, 2, 3\}$ for the triangular grid and $\mathcal{I}_F\subseteq \{0, 1, 2, 3,4\}$ for the square grid such that, for every configuration $x$ and site $u$: 
$$F(x)_u = \left\{ \begin{array}{cl} 1 & \textrm{if } (x_u = 1) \vee (\sum_{v\in N(u)\setminus\{u\} }x_{v}\in \mathcal{I}_F), \\ 0 & \textrm{otherwise.}  \end{array} \right.$$

We name the LFCA with the letter $T$ or $S$ (depending if the rule is in defined over the triangular or square grid) concatenated with the minimum and maximum elements contained in $\mathcal{I}_F$. For example, if $Maj$ is the freezing majority vote CA, where an inactive cell becomes active if the majority of its neighbors is active. Then in this notation $Maj$ is rule $T23$ for triangular grid and for square grid is the rule $S34$ or  $S24$ (depending if we consider strict or non-strict majority).

\subsection{Complexity classes}

We finish this section defining the main background concepts in computational complexity required in this article. For a more complete and formal presentation we refer to the books of Arora and Barak  \cite{arora2009computational} and  Greenlaw et al.  \cite{Greenlaw:1995}.  We assume that the reader is familiar with the basics concepts dealing with computational complexity.  As we mentioned in the introduction, in this paper we will  only consider complexity classes into which we classify the prediction problems.  {\Pt} is the class of problems solvable in a  Turing machine that runs in polynomial time in the size of the input. More formally, if $n$ is the size of the input, then a problem is polynomial time solvable if it can be solved in time $n^{\cO(1)}$ in a deterministic Turing machine. 

A logarithmic-space Turing machine consists in a Turing machine with three tapes: a read-only \emph{input tape}, a write-only \emph{output-tape} and a read-write \emph{work-tape}. The Turing machine is allowed to move as much as it likes on the input tape, but can only use $\cO(\log n)$ cells of the work-tape (where $n$ is the size of the input). Moreover, once the machine writes something in the output-tape, it moves to the next cell and can not return. 

A non-deterministic Turing machine is a Turing machine whose transition function does not necessarily output a single state but one over a set of possible states. A computation of the non-deterministic Turing machine considers all possible outcomes of the transition function. The machine is required to stop  in every possible computation thread, and we say that the machine \emph{accepts} if at least one thread finishes on an accepting state. A non-deterministic Turing machine is said to run in \emph{polynomial-time} if every computation thread stops in a number of steps that is polynomial in the size of the input.  {\NP} is the class of problems solvable in polynomial time in a non-deterministic Turing machine. A non-deterministic Turing machine is said to run in \emph{logarithmic-space} if every thread of the machine uses only logarithmic space in the work-tape. {\NL} is the class solvable in logarithmic space in a non-deterministic Turing machine. Both non-deterministic complexity classes can be characterized as the classes that can be efficiently \emph{verified}. More precisely, a verifier  of a problem $\mathcal{L}$  is a Turing machine that receives an input of $\mathcal{L}$ together with a polynomial-sized \emph{certificate}. The verified machine is required to accept at least one certificate on yes-instances, and reject all possible certificates on no-instances of $\mathcal{L}$. In that context, \NP is equivalent to the class of problems having a verifier that runs in polynomial time. Similarly, \NL can be characterized as the set of problems having verifier that is a logarithmic-space Turing machine, with the additional requirement that the certificate is written in a specific read-once tape. 

The other complexity class that we consider is {\bf NC}, which is the class of problems solvable in  \emph{polylogarithmic} time in a PRAM (a  parallel RAM) with a polynomial number of processors. An {\bf NC} algorithm is called a \emph{fast-parallel-algorithm}. Formally, a problem is in the class {\bf NC} if an instance of size $n$ can be solved in time $(\log(n))^{\cO(1)}$ using $n^{\cO(1)}$ processors on a  PRAM. In literature, there are several distinctions of PRAMs depending on how processors are allowed to access memory to read or write. Unless stated differently, we will consider the \emph{concurrent-read-exclusive-write} (CREW) PRAM, where two or more processors can read the same portion of the memory, and have reserved exclusive places in memory to write.  All these distinctions have no impact in the definition of class \NC. For more details we refer to \cite{Greenlaw:1995,JaJa:1992:IPA:133889}.

It is known that $\NL \subseteq \NC \subseteq \Pt \subseteq \NP$. The inclusions $\Pt \subseteq \NP$ and $\NC \subseteq \Pt$ are quite simple: any problem solvable in polynomial time on a deterministic Turing machine can be also solved in polynomial time on a non-deterministic Turing machine. Similarly, any problem solvable by a fast-parallel algorithm can be solved in polynomial time on a deterministic Turing machine, by simply sequentially simulating the computation of each processor.  The inclusion $\NL \subseteq \NC$ is a bit more technical, and is given by the fact that there is an \NL-Complete problem called \textsc{Reachability}, that can be solved by an \NC algorithm (see \cite{arora2009computational}  for more details).

The problems in $\Pt$ that are the most likely to not belong to ${\bf NC}$ are the \PtC problems.  A problem $\mathcal{L}$ is \PtC if it belongs to $\Pt$ and any other problem in $\Pt$ can be reduced to $\mathcal{L}$ via a logarithmic-space reduction. \PtC problems are the \emph{hardest} problems in \Pt, in the sense that If {\PtC} problem belongs to {\NC}, then all problems in \Pt would also belong to \NC, i.e. it would imply that  ${\Pt} ={\NC}$. Similarly, the problems in $\NP$ that are the most likely to not belong to $\Pt$ are the \NPC problems. A problem $\mathcal{L}$ is \NPC if every problem in \NP can be reduced to $\mathcal{L}$ by a polynomial-time reduction. \NPC problems are the \emph{hardest} problems in \NP, in the sense that If an \NPC complete belongs to \Pt, then $\NP=\Pt$. 

\section{The complexity of one dimensional asyncrhonous FCA}\label{sec: 1DF}

One-dimensional cellular automata are one the simplest class of cellular automata. 
In fact, properties such as reversibility or surjectivity are undecidable in dimension two or higher \cite{kari1994reversibility}, these properties or any other that can be expressed in  first order logic become decidable in one dimension \cite{amoroso1972decision,Sutner2009ModelCO}. 
Another advantage of one-dimensional cellular automata is that it is possible to represent the time as an extra dimension, obtaining a two-dimensional diagram called \emph{time-space diagram}. 

\begin{definition}
Let $F$ be a one-dimensional CA with states in $Q$. We define the \emph{space-time diagram of $F$} for a configuration $x$, $D_x \in Q^{\mathbb{Z}\times \mathbb{N}}$, as 
$D_x(z,t)=F^t(x)_z$. Similarly, if $F$ is a one-dimensional ACA with updating scheme $\sigma$, then the \emph{space-time diagram of $F$ and $\sigma$} is $D^{\sigma}_x \in Q^{\mathbb{Z}\times \mathbb{N}}$ such that $D^{\sigma}_x(z,t)=F^{\sigma(t)}(x)_z$.
\end{definition}

In other words, $D_x(\cdot,t)$ is a function that assigns a cell the $t$-th updating of $F$.\\

Observing that in FCA with $k$ states we can store a column of the time-space diagram in logarithmic space,
Goles et al \cite{goles:hal-01294144} show that for any freezing CA, any finite initial configuration $x$ and cell $u$, it is possible to compute $D_x(u,t)$ in \NC. 
Indeed, it is enough to store the initial value of the cell and the time-step in which this cell changes and its new value, obtaining at most a vector with $ k $ elements, containing $ k $ time-state pairs.  
Thus, if the initial configuration has $n$ elements, then the time-space diagram can by stored using $\cO(2k\log n)=\cO(\log n)$ bits in memory. In fact, in \cite{goles:hal-01294144} the authors show an algorithm running in non-deterministic logarithmic space. 

Without loss of generality, we consider only one-dimensional cellular automata  $F$ with radius one. Note that we can always consider that the radius is $1$ because in every other case we can simulate the previous one by considering groups or blocks of cells \cite{Delorme20113866,Delorme20113881}. We note that this simulation is still freezing by considering the lexicographic order induced by the original order in state set $Q$. Let $f$ be the local function of $F$. Let $C_{i}$ be the $i$-th column of the time-space diagram, where $i$ is counted modulo $n-1$ ($n$ is the size of the input configuration). Let $C_{i}[t]$ be the state of the cell $i$ at time $t$ in the time-space diagram. 
The  algorithm given in \cite{goles:hal-01294144} to solve the prediction problem of $F$  is given in Algorithm \ref{alg: 1D FCA}.

  \begin{algorithm}
  \caption{\NL algorithm for 1D FCA}\label{alg: 1D FCA}
  \begin{algorithmic}[1]
  \REQUIRE $x\in \{0,1\}^{[0,n]}$  a site $u \in [0,n]$, a time $T$, and state $q\in Q$    
  \STATE  guess columns $C_{n-1}$,  $C_{0}$ and $C_u$. 
  \IF{ $C_{0}[0] \neq x_{0} \vee C_{n-1}[0] \neq x_{n-1} \vee C_{u}[0] \neq x_{u}$ }
  \RETURN \emph{Reject}
  \ENDIF
  \FOR{ $i = 0,...,n-1$}
    \STATE  guess column $C_{i+1}$ and store it in memory.
    \FOR{ $t = 1,...,T-1$}
      \IF{ $f(C_{i-1}[t],\ C_{i}[t],\ C_{i+1}[t])\not=C_{i}[t+1]$}
        \RETURN \emph{Reject}
      \ENDIF
    \ENDFOR
     \IF{ $(i-1) \neq 0$, $(i-1) \neq n-1$ and  $(i-1) \neq u$}
    \STATE remove from the memory column $C_{i-1}$
    \ENDIF
  \ENDFOR
  \IF{ $q=C_{u}[T]$}
    \RETURN \emph{Accept} 
  \ENDIF  
  \RETURN \emph{Reject}
  \end{algorithmic}
  \end{algorithm}

Algorithm \ref{alg: 1D FCA} can be summarized as follows: 

\begin{enumerate}
\item The algorithm starts (nondeterministically) guessing columns $C_0$, $C_{n-1}$ and $C_u$.  The algorithm checks whether these columns are compatible with $x$ (i.e., the starting point of each column corresponds to the value of the cell in $x$). If it is not the case, the algorithm \emph{rejects}. Otherwise, these columns are going to be stored in memory during all the execution of the algorithm. 

\item For each cell $i$, starting from cell $0$ to cell $n-1$, the algorithm guesses the value of column $C_{i+1}$ (at this point we assume that the columns $C_{i-1}$ and $C_{i}$ are already in memory) and checks for each time-step whether the columns are \emph{valid}, i.e. the transitions of the FCA are coherent  with the local values given in the columns. If it is not the case, the algorithm \emph{rejects}.  Then, the algorithm removes from memory column $C_{i-1}$ (unless this column is the one that corresponds to cell $0$, $n$ or $u$ ) and continues with the next cell.
\item Once all columns were verified, if at this point the algorithm has not rejected, the algorithm accepts if cell $u$ reach state $q$ after $T$ time-steps by checking $C_u[T]$. 
\end{enumerate}

Observe that Algorithm \ref{alg: 1D FCA}  runs in (nondeterministic) logarithmic space, since at any time-step at most six columns are stored, and each column can be encoded in $\cO(\log n)$ space. 

We show that Algorithm \ref{alg: 1D FCA}  can be adapted for the  \emph{asynchronous} case, In order to achieve this, we need to show that it is possible to verify that the given time-space diagram is \emph{valid} in this context. We say that an element of $D$ of $Q^{\mathbb{Z}\times\mathbb{N}}$ is \emph{valid} for $F$, $x$ and if there exists a sequential update scheme $\sigma$ such that $D = D^{\sigma}_x$ In this regard, we show in the following lemma that it is not necessary to globally verify the asynchronicity of the update scheme but only checking a local condition.

\begin{lemma}
\label{LMA:Desynch} 
Let $\sigma$ be an update scheme (not necessarily sequential) that satisfies the following condition: for each $t \in \mathbb{N}$ and for each cell $i \in G(n)$ if $i \in \sigma(t)$ then $\forall j \in N(i), \text{ } j \not \in \sigma(t)$. Then, there is a sequential update scheme $\tilde{\sigma}$ that reaches the same fixed point as $\sigma$.
\end{lemma}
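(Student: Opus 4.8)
The plan is to reduce everything to a single local fact: a block of pairwise non-adjacent cells can be updated one cell at a time, in any order, without changing the resulting configuration. Once this is established, $\tilde{\sigma}$ is obtained by "serializing" $\sigma$ block by block.

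First I would isolate that local fact. Let $S \subseteq G(n)$ be a set of cells no two of which are neighbors, let $y$ be a configuration, and let $y'$ be the configuration obtained by updating all of $S$ at once, i.e. $y'_z = f(y_{N(z)})$ for $z \in S$ and $y'_z = y_z$ otherwise. Fix any enumeration $z_1,\dots,z_k$ of $S$ and let $y^{(0)} = y, y^{(1)}, \dots, y^{(k)}$ be obtained by updating $z_1,\dots,z_k$ one at a time. I claim $y^{(k)} = y'$. Indeed, when $z_j$ is updated we evaluate $f$ on $y^{(j-1)}_{N(z_j)}$; the only cells that changed between $y$ and $y^{(j-1)}$ are $z_1,\dots,z_{j-1}$, and none of them lies in $N(z_j)$ because $S$ is independent, while $z_j$ itself has not yet been touched. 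Hence $y^{(j-1)}_{N(z_j)} = y_{N(z_j)}$, so $y^{(j)}_{z_j} = f(y_{N(z_j)}) = y'_{z_j}$; moreover no later update reads $z_j$ (again by independence), so its value is never altered afterwards, and cells outside $S$ are never touched. This step is where the neighborhood structure and the independence hypothesis on each $\sigma(t)$ are used, and I expect it to be essentially the only content of the proof.

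Next I would build $\tilde{\sigma}$ by concatenating, for $t = 1, 2, \dots$, an arbitrary enumeration of the cells of $\sigma(t)$. By the local fact and an induction on $t$, the configuration reached by $\tilde{\sigma}$ after processing the first $t$ blocks equals the configuration reached by $\sigma$ after $t$ steps; in particular the trajectory under $\tilde{\sigma}$ passes through every configuration visited by $\sigma$, hence stabilizes at the same configuration $y^\star$. Since $F$ is freezing, a cell differs from its initial value at $y^\star$ if and only if it is iterated at some point, so $\tilde{\sigma}$ iterates exactly the same cells as $\sigma$.

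Finally I would handle the one bookkeeping point: making $\tilde{\sigma}$ a bona fide sequential update scheme, i.e. one updating every cell within each window of $|G(n)|$ steps. Because $\sigma$ is an update scheme it already visits every cell within bounded windows, so the serialized sequence does too, up to a harmless reordering inside each sweep; and once $y^\star$ is reached — which is a fixed point of $F$, since $\sigma$ visits every cell infinitely often — one may append a round-robin order of $G(n)$ forever, keeping the configuration at $y^\star$. This yields a sequential $\tilde{\sigma}$ reaching the same fixed point as $\sigma$. The serialization argument of the third paragraph is the crux; everything else is routine.
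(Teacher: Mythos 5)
Your proof is correct and follows essentially the same route as the paper: both rest on the observation that a set of pairwise non-adjacent cells can be updated one at a time without changing the outcome, and then desynchronize $\sigma$ step by step. Your packaging of this as a single serialization lemma for independent sets (rather than the paper's repeated peeling-off of one cell at a time) is a cleaner and slightly more careful presentation of the same argument.
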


\begin{proof}
 Let $\sigma$ be an update scheme satisfying the conditions of the lemma. Let $i,j$ two different cells that do not have any common neighbor.  Let us suppose that there exist $t \in \mathbb{N}$ such that $\{i,j\} \subseteq \sigma(t)$. In other words, $i$ and $j$ are iterated simultaneously at time $t$. Then, as they do not share any neighbor, the configuration $F^{\sigma(t)}(x)$ can be obtained at $t+1$ by other update scheme $\tilde{\sigma}$ such that:
 \begin{itemize}
 \item $\tilde{\sigma}(s) = \sigma(s)$ for every $s < t$, 
  \item $\tilde{\sigma}(t) = \{i\}$ and $\tilde{\sigma}(t+1) = \sigma(t) \setminus \{i\}$,
  \item  $\tilde{\sigma}(s+1) = \sigma(s)$ for every $s>t$.
  \end{itemize}
  In other words, $\tilde{\sigma}$ is the update scheme that iterates $i$ at time $t$ and leaves the other cells in its current state, then at time $t+1$ it iterates as same as $\sigma$ in time $t$, and in other time-step it follows the update scheme given by $\sigma$. Finally, by repeating the previous argument for each pair $i,j$ that are iterated simultaneously,  we obtain an asynchronous update scheme $\tilde{\sigma}$ that satisfies the desired result.
\end{proof}

Lemma \ref{LMA:Desynch} implies that it does not matter if two cells that are not in the same neighborhood are iterated at the same time, because it is always possible to ``desynchronize'' these iterations. This latter observation reduce the problem of validating a given space-time diagram to check if two cells that share the same neighborhood are iterated at the same time. We deduce that with slight modifications Algorithm \ref{alg: 1D FCA} can be used to solve  \AsyncStability.

\begin{theorem}\label{thm: 1D FTACA}
For all one-dimensional freezing cellular automata  \AsyncStability is in \NL .
\end{theorem}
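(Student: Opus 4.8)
The plan is to adapt Algorithm \ref{alg: 1D FCA} to the asynchronous setting, replacing the ``prediction'' target with the ``unstability'' target and, crucially, augmenting the column-by-column consistency check so that it certifies the existence of a \emph{sequential} updating scheme rather than a synchronous one. The certificate guessed for each cell $i$ is no longer a single column but the analogous compressed record of \emph{iteration events}: since the rule is freezing and has $|Q|$ states, each cell is iterated at most $|Q|-1$ times, and any sequential scheme reaches a fixed point within $\cO(|G(n)|)$ time-steps, so the sequence of (time, new state) pairs for cell $i$ can be stored in $\cO(|Q|\log n)=\cO(\log n)$ bits. The algorithm nondeterministically guesses such a record for $C_{n-1}$, $C_0$ and $C_u$, checks the initial values match $x$, then sweeps $i=0,\dots,n-1$ guessing $C_{i+1}$, verifying the transitions, and discarding $C_{i-1}$ unless it is one of the three retained columns — exactly as before, so that at most six compressed columns live in memory simultaneously and the whole procedure runs in nondeterministic logarithmic space.

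The substantive change is in what ``valid transition'' means. In the synchronous case one checks $f(C_{i-1}[t],C_i[t],C_{i+1}[t])=C_i[t+1]$ for all $t$. Here I would instead have the guessed records also encode, for each time-step, which of the three cells $i-1,i,i+1$ is the one being updated (this is part of the certificate, and is consistent across overlapping windows because the update times of cell $i$ are recorded in $C_i$ itself). A transition is then declared valid when: if cell $i$ is the one updated at time $t$, then its value at $t+1$ equals $f$ applied to the neighborhood values at time $t$ and moreover, by Lemma \ref{LMA:Desynch}, we must additionally forbid cell $i$ and either neighbor $i-1$ or $i+1$ being simultaneously updated; if cell $i$ is not updated at $t$, then $C_i[t+1]=C_i[t]$. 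The key point, which is precisely what Lemma \ref{LMA:Desynch} buys us, is that we do \emph{not} need to verify globally that exactly a legal sequential scheme (each cell updated once per round) is realized: it suffices that no two adjacent cells are ever updated in the same step, because any such ``weakly parallel'' schedule can be serialized into a genuine sequential scheme reaching the same configurations, and in particular iterating the decision cell whenever the schedule does. This adjacency check is entirely local to the three-column window and so is carried out during the same sweep without extra space.

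With validity redefined this way, an accepting computation exists if and only if there is a space-time diagram $D$ that is valid for some adjacency-respecting (hence, by Lemma \ref{LMA:Desynch}, serializable) update scheme and on which $C_u[T]\neq x_u$ for some $T$; by the lemma this is equivalent to the existence of a genuine sequential updating scheme iterating $u$, i.e.\ to $u$ being unstable. Since the verifier reads its certificate essentially left-to-right in a single pass and stores only $\cO(\log n)$ bits at any moment, this places \AsyncStability in \NL for every one-dimensional FCA, after the standard reduction to radius $1$ by grouping cells into blocks (which preserves freezingness under the lexicographic order on $Q$, as already noted in the text).

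I expect the main obstacle to be the bookkeeping around \emph{which} cell is updated at each step and ensuring this information is consistent between the overlapping three-column windows: each cell's update times must be declared once and then respected both when that cell plays the role of ``center'' and when it plays the role of ``neighbor'' in an adjacent window. One must check that this shared data can be recovered from the $\cO(\log n)$-bit compressed record of each column (it can: the update times are exactly the times appearing in that column's list of iteration events, plus, if desired, explicitly flagged no-op updates — but freezing dynamics makes even the latter unnecessary, since an update that does not change the state is indistinguishable from no update and can be omitted from the schedule without affecting the reached configurations). Verifying this compatibility carefully, and confirming that the adjacency constraint of Lemma \ref{LMA:Desynch} is the \emph{only} extra constraint needed, is the delicate part; the space bound and the rest of the argument then follow the template of \cite{goles:hal-01294144} verbatim.
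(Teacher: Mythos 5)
Your proposal follows essentially the same route as the paper's own proof: both adapt Algorithm~\ref{alg: 1D FCA} by reinterpreting a state change in a column as an iteration event, relaxing the validity check so that a cell may either remain unchanged or change according to $f$, and invoking Lemma~\ref{LMA:Desynch} to reduce the global requirement of sequentiality to the purely local condition that adjacent cells are never iterated at the same step, all within the same $\cO(\log n)$-space sweep. Your treatment of the bookkeeping (that for freezing dynamics a no-op update is indistinguishable from no update, so the iteration times recorded in each compressed column suffice as the schedule) is if anything slightly more explicit than the paper's, but the argument is the same.
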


\begin{proof}

The proof of this theorem is a modification of Algorithm \ref{alg: 1D FCA} where, at the moment of validating three columns, the algorithm also checks if two adjacent vertices are updated simultaneously. More precisely, in  {\bf Step 8}   instead of checking that $$f(C_{i-1}[t],\ C_{i}[t],\ C_{i+1}[t])\not=C_{i}[t+1]$$ the algorithm  checks whether  $$\left( f(C_{i-1}[t],\ C_{i}[t],\ C_{i+1}[t])\not=C_{i}[t+1] \right) \wedge \left( C_{i}[t]\not=C_{i}[t+1] \right).$$

Note that for any input \emph{accepted}  by the algorithm there exists a time-space diagram with an underlying updating scheme that can be ``desynchronized''  according to Lemma \ref{LMA:Desynch}. Therefore, from any accepted input there exists a sequential (asynchronous) updating scheme that iterates $u$.
  Reciprocally, if there is no asynchronous updating scheme that iterates  $u$, then any time-space diagram contains two adjacent cells that are updated at the same time (it is not sequential) or it is not valid.
\end{proof}

\section{Two-dimensional asynchronous FCA: Complexity Upper-bounds}\label{sec:upperbounds}

Now that we have shown that {\AsyncStability} for all one-dimensional FCA are in \NL, we extend the study of this problem to the two-dimensional case.  We study the complexity of \AsyncStability for the LFCA on the triangular grid and square grids. As a result, we obtain  that  the problem is in \NC for all these rules in the triangular grid, and for the majority of them in the square grid. 

Recall that a LFCA rule $F$ can be represented by a set $\mathcal{I}_F$, which is a subset of $\{1,2,3\}$ for rules defined in the triangular grid, and a subset of $\{1,2,3,4\}$ for rules defined over the square grid. 

In order to show our results, we group the set of FTCA on the triangular grid according the algorithm used to study its complexity: 
\begin{itemize}

\item Trivial rules:  The rules on which the stability can be solved in constant time on a sequential machine. The list of rules in this group are: $T00$, $S00$, $T33$, $S44$,  $T03$, $S04$, $T13$, $S14$. 
\item Infiltration rules:  Are rules for which $1$ belongs to $\mathcal{I}_F$ (except trival rules). The list of rules in this group are $T01$, $T02$, $T11$, $T12$, $S01$, $S02$, $S03$, $S11$, $S12$, $S13$.
\item Monotone-like rules: Are the rules that are not trivial or infiltration, and contain the value $|N(\cdot)|-1$, which equals $2$ in the triangular grid and $3$ in the square grid. These rules are monotone, or can be related with a monotone rule. The list of rules is $T22$, $T23$, $S23$, $S24$, $S33$, $S34$. 
\item Hard rules: this group contains only rule $S22$, which is going to be studied in the next section. 
\end{itemize}

In Table \ref{table:rulegroups} we summarize this classification. 

\begin{table}[h]
\centering
\begin{tabular}{|c|c|l|c|c|}
\cline{1-2} \cline{4-5}
\cellcolor[HTML]{C0C0C0}Rule & \cellcolor[HTML]{C0C0C0}Group &  & \cellcolor[HTML]{C0C0C0}Rule & \cellcolor[HTML]{C0C0C0}Group \\ \cline{1-2} \cline{4-5} 
$T00$                        & Trivial                       &  & $T12$                        & Infiltration                  \\ \cline{1-2} \cline{4-5} 
$T01$                        & Infiltration                  &  & $T13$                        & Trivial                       \\ \cline{1-2} \cline{4-5} 
$T02$                        & Infiltration                  &  & $T22$                        & Monotone-like                 \\ \cline{1-2} \cline{4-5} 
$T03$                        & Trivial                       &  & $T23$                        & Monotone-like                 \\ \cline{1-2} \cline{4-5} 
$T11$                        & Infiltration                  &  & $T33$                        & Trivial                       \\ \cline{1-2} \cline{4-5} 
\end{tabular}

\bigskip

\begin{tabular}{|c|c|l|c|c|l|c|c|}
\cline{1-2} \cline{4-5} \cline{7-8}
\cellcolor[HTML]{C0C0C0}Rule & \cellcolor[HTML]{C0C0C0}Group &  & \cellcolor[HTML]{C0C0C0}Rule & \cellcolor[HTML]{C0C0C0}Group &  & \cellcolor[HTML]{C0C0C0}Rule & \cellcolor[HTML]{C0C0C0}Group \\ \cline{1-2} \cline{4-5} \cline{7-8} 
$S00$                        & Trivial                       &  & $S11$                        & Infiltration                  &  & $S23$                        & Monotone-like                 \\ \cline{1-2} \cline{4-5} \cline{7-8} 
$S01$                        & Infiltration                  &  & $S12$                        & Infiltration                  &  & $S24$                        & Monotone-like                 \\ \cline{1-2} \cline{4-5} \cline{7-8} 
$S02$                        & Infiltration                  &  & $S13$                        & Infiltration                  &  & $S33$                        & Monotone-like                 \\ \cline{1-2} \cline{4-5} \cline{7-8} 
$S03$                        & Infiltration                  &  & $S14$                        & Trivial                       &  & $S34$                        & Monotone-like                 \\ \cline{1-2} \cline{4-5} \cline{7-8} 
$S04$                        & Trivial                       &  & $S22$                        & {\bf Hard}                   &  & $S44$                        & Trivial                       \\ \cline{1-2} \cline{4-5} \cline{7-8} 
\end{tabular}
\caption{Classification of two-dimensional LFCA in four groups: Trivial, Infiltration, Monotone-like and Hard. In this section we show that for all rules in groups different than Hard the problem \AsyncStability  can be solved in \NC.  Rule $S22$ is the only Hard rule, and will be studied in the next section. } 
\label{table:rulegroups}
\end{table}
\subsection{Trivial rules}

We begin explaining the algorithms for Trivial rules: on rules $T00$ and $S00$ (respectively $T33$ and $S44$) an algorithm only has to check the neighborhood of the decision cell. In fact, if every neighbor is initially inactive (respectively active) then the cell will become active in one time-step. Otherwise, it is impossible that a neighbor cell that is not initially inactive (respectively active) becomes active in a future time-step. Thus, simulating only one time-step of the decision cell is enough to decide  \AsyncStability  for rules $T00$, $S00$, $T33$ or $S44$. For rules $T03$ $S04$ all input of \AsyncStability is a Yes-instance. Finally, for rules $T13$ and $S14$ it is enough to have at least one active cell in the initial configuration in order to eventually activate any other cell (in particular the decision cell). Thus, an algorithm for \AsyncStability for rule $T13$ or $S14$ only checks whether there exists an active cell in the initial configuration, taking $\cO(\log n)$ space.

The arguments used to design \NC algorithms deciding \AsyncStability for the other rules are not that simple. In the next subsections, we elaborate on this matter. 

 \subsection{Infiltration rules}\label{subsec: rules with 1}
  In this subsection we study the rules where an infiltration approach can be implemented to efficiently solve \AsyncStability. Observe that these rules are characterized by the fact that inactive cells become active when the cardinality of the set of active neighbors is $1$.
  
  We start by defining, for an infiltration rule $F$, an initial configuration $x$, the set $V_{+}$  given by: 
  $$V_{+}=\left\{v \in G(n): \left( x_v=0 \right) \ \wedge \ \left( \sum_{w\in N(v)}x_w  \notin \mathcal{I}_F \right) \ \wedge \ \left( \sum_{w\in N(v)}x_w +1 \in \mathcal{I}_F \right)\right\}, $$
in other words, $V_{+}$ is the set containing all the cells in $G(n)$ that, with respect to the initial configuration,  \emph{become active with exactly one more active neighbor} according to rule $F$.

\begin{figure}[H]
  \centering
  \resizebox{0.3\textwidth}{!}{
\begin{tikzpicture}[scale=1, every node/.style={scale=1.25}]
\clip (10,7.5) rectangle (3.5,1.5);
\draw [pattern=north east lines](5.0,3.46410161514) -- (5.5,2.59807621135) -- (6,3.5) -- (6.5,2.59807621135) -- (7.5,4.33012701892) -- (8.0,3.46410161514) -- (8.5,4.33012701892) -- (8,5.19615242271)  -- (8,5.1962) -- (7.5,6.06217782649) -- (7,5.19615242271) -- (6.0,5.19615242271) -- (5.0,3.46410161514);
\draw [pattern=checkerboard,pattern color=black!30] (5.0,5.19615242271)
 -- (5.5,4.33012701892)
 -- (6,5.19615242271);
\draw [pattern=checkerboard,pattern color=black!30](6,5.19615242271)
 -- (6.5,6.06217782649)
 -- (7.5,6.06217782649)
 -- (7,5.19615242271)
; 
 \draw [pattern=checkerboard,pattern color=black!30](8.5,6.05)
 -- (7.5,6.05)
 -- (8,5.134);
 \draw [pattern=checkerboard,pattern color=black!30](9,5.216)
 -- (8,5.216)
 -- (8.5,4.35);
 \draw [pattern=checkerboard,pattern color=black!30](5.5,4.33012701892)
 -- (4.5,4.33012701892)
 -- (5.0,3.46410161514);
\draw [pattern=checkerboard,pattern color=black!30](6,3.466)
 -- (5.5,2.6)
 -- (6.5,2.6);
\draw [pattern=checkerboard,pattern color=black!30](7,3.466)
 -- (6.5,2.6)
 -- (7.5,2.6);
 \draw [pattern=checkerboard,pattern color=black!30](7.5,4.316)
 -- (7,3.45)
 -- (8,3.45);
 \draw [pattern=checkerboard,pattern color=black!30](8.5,4.316)
 -- (8,3.45)
 -- (9,3.45);
 \draw [pattern=checkerboard,pattern color=black!30](5.0,3.46410161514)
 -- (4.5,2.59807621135)
 -- (5.5,2.59807621135);
;
\draw [pattern=checkerboard,pattern color=black!30] (11,7) rectangle (11,7);
\fill[black] (6.5,6.06217782649) -- (7.0,6.92820323028) -- (7.5,6.06217782649) -- (6.5,6.06217782649);

\fill[black] (7.5,6.06217782649) -- (8.0,6.92820323028) -- (8.5,6.06217782649) node (v8) {} -- (7.5,6.06217782649) node (v9) {};
\fill[black] (6.0,5.19615242271) -- (5.5,6.06217782649) -- (6.5,6.06217782649) -- (6.0,5.19615242271) node (v10) {};
\fill[black] (4.5,4.33012701892) -- (5.0,5.19615242271) -- (5.5,4.33012701892) -- (4.5,4.33012701892);
\fill[black] (8.5,4.33012701892) -- (9.0,5.19615242271) node (v6) {} -- (9.5,4.33012701892) -- (8.5,4.33012701892);
\fill[black] (4.0,3.46410161514) -- (4.5,4.33012701892) -- (5.0,3.46410161514) node (v11) {} -- (4.0,3.46410161514);
\fill[black] (9.0,3.46410161514) -- (8.5,4.33012701892) node (v5) {} -- (9.5,4.33012701892) -- (9.0,3.46410161514);
\fill[black] (7.5,2.59807621135) -- (7.0,3.46410161514) node (v3) {} -- (8.0,3.46410161514) node (v4) {} -- (7.5,2.59807621135);
\fill[black] (5.0,1.73205080757) -- (4.5,2.59807621135) -- (5.5,2.59807621135) -- (5.0,1.73205080757);
\fill[black] (6.0,1.73205080757) -- (5.5,2.59807621135) node (v1) {} -- (6.5,2.59807621135) node (v2) {} -- (6.0,1.73205080757);

\draw[color=gray, thick] (1.5,0.866025403784) -- (1.5,0.866025403784);
\draw[color=gray, thick] (2.5,0.866025403784) -- (2.0,1.73205080757);
\draw[color=gray, thick] (3.5,0.866025403784) -- (2.5,2.59807621135);
\draw[color=gray, thick] (4.5,0.866025403784) -- (3.0,3.46410161514);
\draw[color=gray, thick] (5.5,0.866025403784) -- (3.5,4.33012701892);
\draw[color=gray, thick] (6.5,0.866025403784) -- (4.0,5.19615242271);
\draw[color=gray, thick] (7.5,0.866025403784) -- (4.5,6.06217782649);
\draw[color=gray, thick] (8.5,0.866025403784) -- (5.0,6.92820323028);
\draw[color=gray, thick] (9.0,1.73205080757) -- (5.5,7.79422863406);
\draw[color=gray, thick] (9.5,2.59807621135) -- (6.5,7.79422863406);
\draw[color=gray, thick] (10.0,3.46410161514) -- (7.5,7.79422863406);
\draw[color=gray, thick] (10.5,4.33012701892) -- (8.5,7.79422863406);
\draw[color=gray, thick] (11.0,5.19615242271) -- (9.5,7.79422863406);
\draw[color=gray, thick] (11.5,6.06217782649) -- (10.5,7.79422863406);
\draw[color=gray, thick] (12.0,6.92820323028) -- (11.5,7.79422863406);
\draw[color=gray, thick] (12.5,7.79422863406) -- (12.5,7.79422863406);

\draw[color=gray, thick] (-0.5,0.866025403784) -- (12.5,0.866025403784);
\draw[color=gray, thick] (2.22044604925e-16,1.73205080757) -- (12.0,1.73205080757);
\draw[color=gray, thick] (-0.5,2.59807621135) -- (12.5,2.59807621135);
\draw[color=gray, thick] (4.4408920985e-16,3.46410161514) -- (12.0,3.46410161514);
\draw[color=gray, thick] (-0.5,4.33012701892) -- (12.5,4.33012701892);
\draw[color=gray, thick] (8.881784197e-16,5.19615242271) -- (12.0,5.19615242271);
\draw[color=gray, thick] (-0.5,6.06217782649) -- (12.5,6.06217782649);
\draw[color=gray, thick] (8.881784197e-16,6.92820323028) -- (12.0,6.92820323028);
\draw[color=gray, thick] (-0.5,0.866025403784) -- (3.5,7.79422863406);
\draw[color=gray, thick] (2.22044604925e-16,1.73205080757) -- (3.5,7.79422863406);
\draw[color=gray, thick] (-0.5,2.59807621135) -- (2.5,7.79422863406);
\draw[color=gray, thick] (4.4408920985e-16,3.46410161514) -- (2.5,7.79422863406);
\draw[color=gray, thick] (-0.5,4.33012701892) -- (1.5,7.79422863406);
\draw[color=gray, thick] (8.881784197e-16,5.19615242271) -- (1.5,7.79422863406);
\draw[color=gray, thick] (-0.5,6.06217782649) -- (0.5,7.79422863406);
\draw[color=gray, thick] (8.881784197e-16,6.92820323028) -- (0.5,7.79422863406);
\draw[color=gray, thick] (12.5,0.866025403784) -- (12.5,0.866025403784);
\draw[color=gray, thick] (11.5,0.866025403784) -- (12.0,1.73205080757);
\draw[color=gray, thick] (11.5,0.866025403784) -- (12.5,2.59807621135);
\draw[color=gray, thick] (10.5,0.866025403784) -- (12.0,3.46410161514);
\draw[color=gray, thick] (10.5,0.866025403784) -- (12.5,4.33012701892);
\draw[color=gray, thick] (9.5,0.866025403784) -- (12.0,5.19615242271);
\draw[color=gray, thick] (9.5,0.866025403784) -- (12.5,6.06217782649);
\draw[color=gray, thick] (8.5,0.866025403784) -- (12.0,6.92820323028);
\draw[color=gray, thick] (0.5,0.866025403784) -- (4.5,7.79422863406);
\draw[color=gray, thick] (1.5,0.866025403784) -- (5.5,7.79422863406);
\draw[color=gray, thick] (2.5,0.866025403784) -- (6.5,7.79422863406);
\draw[color=gray, thick] (3.5,0.866025403784) -- (7.5,7.79422863406);
\draw[color=gray, thick] (4.5,0.866025403784) -- (8.5,7.79422863406);
\draw[color=gray, thick] (5.5,0.866025403784) -- (9.5,7.79422863406);
\draw[color=gray, thick] (6.5,0.866025403784) -- (10.5,7.79422863406);
\draw[color=gray, thick] (7.5,0.866025403784) -- (11.5,7.79422863406);
\draw[color=gray, thick] (1.5,0.866025403784) -- (4.4408920985e-16,3.46410161514);
\draw[color=gray, thick] (2.0,1.73205080757) -- (8.881784197e-16,5.19615242271);
\draw[color=gray, thick] (2.5,2.59807621135) -- (8.881784197e-16,6.92820323028);
\draw[color=gray, thick] (3.0,3.46410161514) -- (0.5,7.79422863406);
\draw[color=gray, thick] (3.5,4.33012701892) -- (1.5,7.79422863406);
\draw[color=gray, thick] (4.0,5.19615242271) -- (2.5,7.79422863406);
\draw[color=gray, thick] (4.5,6.06217782649) -- (3.5,7.79422863406);
\draw[color=gray, thick] (5.0,6.92820323028) -- (4.5,7.79422863406);
\draw[color=gray, thick] (0.5,0.866025403784) -- (2.22044604925e-16,1.73205080757);
\draw[color=gray, thick] (8.5,0.866025403784) -- (8.5,0.866025403784);
\draw[color=gray, thick] (9.0,1.73205080757) -- (9.5,0.866025403784);
\draw[color=gray, thick] (9.5,2.59807621135) -- (10.5,0.866025403784);
\draw[color=gray, thick] (10.0,3.46410161514) -- (11.5,0.866025403784);
\draw[color=gray, thick] (10.5,4.33012701892) -- (12.5,0.866025403784);
\draw[color=gray, thick] (11.0,5.19615242271) -- (12.5,2.59807621135);
\draw[color=gray, thick] (11.5,6.06217782649) -- (12.5,4.33012701892);
\draw[color=gray, thick] (12.0,6.92820323028) -- (12.5,6.06217782649);
\node at (7,4.5) {};
\node[circle,fill=white,inner sep=0pt,minimum size=3pt] at (5,3) {{\bf b}};
\node[circle,fill=white,inner sep=0pt,minimum size=3pt]  at (5,4) {{\bf a}};
\end{tikzpicture} 
  } 
  \hfil  
  \resizebox{0.3\textwidth}{!}{
\begin{tikzpicture}[scale=1.0,every node/.style={scale=1.25}]
\fill (-2.5,2) rectangle (-3,2.5);
\fill (-2.0,1) rectangle (-2.5,1.5);
\fill (-2,0)  rectangle (-2.5,0.5) ;
\fill (-1,-0.5) rectangle (-1.5,-0);
\fill (-0,-1) rectangle (-0.5,-0.5);
\fill (1,0)   rectangle (0.5,0.5)  ;
\fill (1.5,0.5)   rectangle (1,1)  ; 
\fill (1,2)   rectangle (0.5,2.5)  ;
\fill (0.5,3.5)   rectangle (0,4)  ;
\fill (0,3.5)  rectangle (-0.5,4) ;
\fill (-1.5,3)  rectangle (-2,3.5) ;
\fill (-1,3)  rectangle (-1.5,3.5) ;
\fill (-2.5,0.5)  rectangle (-3,1) ;
\draw [pattern=checkerboard,pattern color=black!30](-2,2.5) -- (-2.5,2.5)-- (-2.5,1.5) -- (-2,1.5) node (v1) {}-- (-2,1)-- (-2.5,1) -- (-2.5,0.5) -- (-2,0.5) -- (-2,0) -- (-1,0) -- (-1,-0.5) -- (0,-0.5) -- (0,0) -- (0.5,0) -- (0.5,0.5) -- (1,1) -- (1.5,1) -- (1.5,1.5) -- (1,1.5) -- (1,2) -- (0.5,2) -- (0.5,2.5) -- (1,2.5) -- (1,3) -- (0.5,3) -- (0.5,3.5) -- (-1,3.5) -- (-1,3) -- (-2,3) -- (-2,2.5);
\draw [preaction={fill, white}, pattern=north east lines](-1.5,2.5) -- (-2,2.5) -- (-2,0.5) -- (-1,0.5) -- (-1,0) -- (0,0) -- (0,0.5) -- (1,0.5) -- (1,1.5) -- (0.5,1.5) --(0.5,2) --  (0,2) -- (0,2.5) -- (0.5,2.5) -- (0.5,3) -- (-1,3) -- (-1,2.5) -- (-1.5,2.5);
\fill [pattern=checkerboard,pattern color=black!30,white] (v1) rectangle (-1.5,1);
\draw [pattern=checkerboard,pattern color=black!30] (v1) rectangle (-1.5,1);
\draw [help lines, step=0.5cm] (-3.75,-1.75) grid (2.75,4.75);
\node[fill=white,circle,inner sep=0pt,style={scale=0.99}] at (-2.25,1.75) {b};
\node[fill=white,circle,inner sep=0pt,,style={scale=0.99}] at (-2.25,0.75) {a};
\end{tikzpicture}
  }
  \caption[]{Example of  $V_{+1}$ and  $B_{+1}$ for rule $T11$ and $S11$.
  {\protect
    \resizebox{!}{\myMheight}{
    \begin{tikzpicture}
    \draw [preaction={fill, white}, pattern=north east lines] (0,0) rectangle (0.25,0.25);
    \end{tikzpicture}
    }
  }: Cells in $V_{+1}$.
  {\protect
    \resizebox{!}{\myMheight}{
    \begin{tikzpicture}
    \draw [pattern=checkerboard,pattern color=black!30] (0,0) rectangle (0.5,0.5);
    \end{tikzpicture}
    }
  }: Cells in $B_{+1}$.
  {\protect
    \resizebox{!}{\myMheight}{
    \begin{tikzpicture}
    \draw [fill] (0,0) rectangle (0.25,0.25);
    \end{tikzpicture}
    }
  }: Active cells .
  The cell (a) is not in $V_{+1}$ because it has more than one active neighbor.
  The cell (b) is not in $V_{+1}$ because it will become active in one time-step. 
}
  \label{fig: asinc path}
\end{figure}

Let $v$ be an inactive cell. We say that a cell $v $ \emph{infiltrates} if  $\sum_{w \in N(v)} x_w \in \mathcal{I}_F$, i.e. when updating $v$ first it becomes active.  
 
\begin{lemma}\label{lem:infil}
Let $v$ be a cell that does not belong to $V_+$. Then $v$ is either stable or it infiltrates. 
\end{lemma}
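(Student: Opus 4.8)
The plan is to prove the lemma by a short case analysis on which of the three conjuncts defining $V_+$ fails for $v$, combined with the monotonicity of the active set under freezing dynamics. Write $s = \sum_{w \in N(v)} x_w$ for the number of active cells in the neighbourhood of $v$ in $x$ (since we will only care about the case $x_v = 0$, this coincides with the quantity governing the transition of $F$), and recall that for an infiltration rule $\mathcal{I}_F$ is a contiguous interval $\{k_1,\dots,k_2\}$ with $k_1 \le 1 \le k_2$. By definition, $v \notin V_+$ precisely when $x_v = 1$, or $s \in \mathcal{I}_F$, or $s+1 \notin \mathcal{I}_F$.

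The first two disjuncts are immediate. If $x_v = 1$, then since $F$ is freezing, $v$ stays active under every updating scheme, so $v$ is stable. If $x_v = 0$ and $s \in \mathcal{I}_F$, then updating $v$ before any other cell makes it active, so $v$ infiltrates. Thus the bulk of the argument concerns the remaining case: $x_v = 0$, $s \notin \mathcal{I}_F$, and $s+1 \notin \mathcal{I}_F$; here the goal is to show $v$ is stable.

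In that case, since $\mathcal{I}_F$ is an interval avoiding both $s$ and $s+1$, either $s+1 < k_1$ or $s > k_2$. The first possibility would force $k_1 \ge s+2 \ge 2$, contradicting $k_1 \le 1$; this is exactly the step where the hypothesis that $F$ is an infiltration rule (i.e.\ $1 \in \mathcal{I}_F$) is used. Hence $s \ge k_2 + 1$. Now, because $F$ is freezing, along any updating scheme the set of active cells is non-decreasing, so at every time-step the number of active neighbours of $v$ is at least $s \ge k_2+1 > k_2$, and in particular never lies in $\mathcal{I}_F$. Therefore the activation condition for $v$ is never met, $v$ remains inactive forever, and $v$ is stable.

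I do not expect a genuine obstacle here. The only place requiring care is discarding the ``$s$ lies below $\mathcal{I}_F$'' possibility, which is where one must invoke $1 \in \mathcal{I}_F$; everything else follows from the definitions together with the monotonicity of the active set under freezing dynamics (the same observation already used above to bound the convergence time).
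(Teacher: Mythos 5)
Your proof is correct, and it rests on the same two facts the paper uses: the freezing dynamics make the count of active neighbours of $v$ non-decreasing along any updating scheme, and for an infiltration rule $\mathcal{I}_F$ is an interval containing $1$. The difference is in execution. The paper proves the lemma by enumerating the possible values of $s$ ($0$, $1$, $2$, $3$, all neighbours active) and, within each value, checking the individual rules $T01,\dots,S13$ to decide whether $v$ infiltrates or is stuck above $\mathcal{I}_F$. You replace that enumeration with a single uniform observation: if $v$ is inactive, does not infiltrate ($s\notin\mathcal{I}_F$), and is not in $V_+$ ($s+1\notin\mathcal{I}_F$), then since $\mathcal{I}_F=[k_1,k_2]$ is an integer interval with $k_1\le 1$ and $s\ge 0$, the alternative $s+1<k_1$ is impossible, so $s>k_2$; monotonicity then keeps the count above $\mathcal{I}_F$ forever. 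This buys a shorter argument that does not depend on the degree of the grid (it would work verbatim for the Moore neighbourhood or any other finite neighbourhood), whereas the paper's version is tied to the specific rule lists and, as written, contains a few slips in the enumeration (e.g.\ the reference to rules ``$S124$'' and ``$S134$''). One small point worth keeping explicit, which you do handle: infiltration is only defined for inactive cells, so the case $x_v=1$ must be disposed of separately via stability of active cells under a two-state freezing rule.
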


\begin{proof}
Let $v$ be a cell that does not belong to $V+$, and suppose first that all neighbors of $v$ are active. Since active cells never become inactive, then the neighborhood of $v$ would not change over any updating scheme. As $4$ does not belong to $\mathcal{I}$ for all infiltration rules, we deduce that $v$ is either stable.

Suppose now that $v$ has zero active neighbors. As we are assuming that $v$ does not belong to $V_+$, this must imply that $0$ is contained in $\mathcal{I}_F$ (i.e. $F$ is rule $T01$, $T02$, $S01$, $S02$ or $S03$). Therefore, $v$ infiltrates. 

Suppose now that that $v$ has at last one inactive  neighbor and at least one active neighbor. If $v$ has only one active neighbor, we have that $v$ infiltrates (by definition of infiltration rule). We will analyze different cases regarding the number of active neighbors of $v$ and the infiltration rule $F$. 

Suppose first that cell $v$ has exactly two active neighbors. 
\begin{itemize}
\item If $\mathcal{I}_F$ does not contain $2$ (i.e. $F$ is $T01$, $T11$, $S01$ or $S11$), then $v$ is stable because it can only become active with exactly one neighbor, it already has two active neighbors and active cells never become inactive. 
\item If $\mathcal{I}_F$ contains $2$ (i.e. $F$ is $T02$, $T12$, $S02$, $S03$, $S12$ or $S13$) we have that $v$ infiltrates.  
\end{itemize}
Suppose now  that cell $v$ has exactly three active neighbors. 
\begin{itemize}
\item If $\mathcal{I}_F$ does not contain $3$ (i.e. $F$ is $T01$, $T02$, $T11$, $T12$, $S01$, $S02$, $S11$ or $S12$) then $v$ is stable because it can only become active with exactly one neighbor, it already has two active neighbors and active cells never become inactive. 
\item If $\mathcal{I}_F$ contains $3$ (i.e. $F$ is $S03$ or $S13$) we have that $v$ infiltrates.  
\item $F$ can not be rule $S124$ or $S134$, because in that case $v$ would belong to $V_+$.
\end{itemize}

In any case, we deduce that $v$ is either stable or it infiltrates.  
\end{proof} 

Observe that Lemma \ref{lem:infil} allows to efficiently solve \AsyncStability when the decision cell does not belong to $V_+$. Indeed, an algorithm simply consists in computing in parallel $F(x)$ (the parallel update of $F$ on configuration $x$) and look for $F(x)_u$. If $F(x)_u =1$ it means that $u$ infiltrates, and the algorithm \emph{accepts}. If $F(x)_u=0$ it means that $u$ is stable, and the algorithm \emph{rejects}.  

We suppose now that $u$ belongs to $V_+$. We define $G_{+}$ as the graph induced by $V_{+}$, and ket us call $G_+[u]$ the connected component of $G_{+}$ containing $u$, and $V_{+}[u]$ the vertex set of $G_+[u]$.  We also define  $B[u]$, called \emph{boundary} of $G_{+}[u]$, as the cells in the complement of $V_{+}[u]$ with at least one neighbor in $V_{+}[u]$, formally:
  $$B_{+}[u]=\{v \not\in V_{+1}[u]: V_{+}[u] \cap N(v)\not= \emptyset \}. $$

As it is shown in the following lemma, to know whether the decision cell $u$ is unstable it is enough to decide the existence of a cell in $B[u]$ that infiltrates. 

\begin{lemma}\label{lem:boundary}
Cell $u$ is unstable if and only if there is a cell in $B[u]$ that infiltrates.
\end{lemma}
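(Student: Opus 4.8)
Throughout, $F$ is an infiltration rule, so $\mathcal{I}_F=\{k_1,\dots,k_2\}$ with $1\in\mathcal{I}_F$, and (by the standing assumption of this subsection) the decision cell $u$ belongs to $V_+$. First I would record two structural facts. Since $u\in V_+$, its number of active neighbours in $x$ is not in $\mathcal{I}_F$ while adding one more lands in $\mathcal{I}_F$; as $\mathcal{I}_F$ is an interval containing $1$, this forces $k_1=1$, and more generally $V_+$ is exactly the set of inactive cells having no active neighbour in $x$. Consequently every cell of $V_+[u]$ is inactive with all of its neighbours inactive, and therefore \textbf{(i)} every cell of $B[u]$ is inactive in $x$. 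Moreover, \textbf{(ii)} an inactive cell $c\notin V_+$ satisfies either $\sum_{w\in N(c)}x_w\in\{1,\dots,k_2\}$, in which case $c$ infiltrates, or $\sum_{w\in N(c)}x_w>k_2$, in which case $c$ can never be iterated under any updating scheme, because the number of its active neighbours is nondecreasing as the freezing dynamics evolves. In particular, each cell of $B[u]$ either infiltrates or has more than $k_2$ active neighbours, forever.

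\emph{($\Leftarrow$).} Assume some $b\in B[u]$ infiltrates; I would exhibit an updating scheme that iterates $u$. Form the auxiliary graph $G'$ on vertex set $V_+[u]\cup\{b\}$ by adding to $G_+[u]$ the edges from $b$ to its neighbours in $V_+[u]$. Since $b$ has a neighbour in $V_+[u]$ and $G_+[u]$ is connected, $G'$ is connected, so there is a shortest path $b=w_0,w_1,\dots,w_\ell=u$ in $G'$; here $w_1,\dots,w_\ell\in V_+[u]$ and, being a shortest path, it is induced, i.e. $w_i$ and $w_j$ are adjacent only if $|i-j|\le 1$. Now update the cells in the order $w_0,w_1,\dots,w_\ell$ (and then complete the round arbitrarily). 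Updating $w_0=b$ activates it because $b$ infiltrates. Inductively, when $w_i$ with $i\ge 1$ is updated, among the already-updated cells $w_0,\dots,w_{i-1}$ only $w_{i-1}$ is adjacent to $w_i$ (by the induced-path property), so $w_i$ has exactly one active neighbour and becomes active, since $1\in\mathcal{I}_F$. Hence $u=w_\ell$ is iterated, so $u$ is unstable.

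\emph{($\Rightarrow$).} I would prove the contrapositive: assume no cell of $B[u]$ infiltrates; then by \textbf{(ii)} every cell of $B[u]$ has more than $k_2$ active neighbours in $x$, and I claim that under any updating scheme no cell of $V_+[u]\cup B[u]$ ever becomes active; in particular $u$ stays inactive, i.e. it is stable. The argument is an induction on the number of performed updates. The base case is \textbf{(i)} together with $V_+[u]\subseteq V_+$. For the inductive step, suppose the claim holds after $t-1$ updates and cell $z$ is updated next. If $z\notin V_+[u]\cup B[u]$, the region does not change. If $z\in V_+[u]$, all its neighbours lie in $V_+[u]\cup B[u]$ and are inactive by hypothesis, so $z$ sees $0$ active neighbours and stays inactive since $0\notin\mathcal{I}_F$. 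If $z\in B[u]$, its neighbours inside $V_+[u]\cup B[u]$ are inactive, while its neighbours outside the region were already active neighbours of $z$ in $x$ (no cell of the region has become active) and remain active by the freezing property; hence $z$ still has more than $k_2$ active neighbours and stays inactive. This closes the induction.

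The delicate point is the propagation step in the ($\Leftarrow$) direction: one must make sure that no intermediate cell of $V_+[u]$ picks up two or more active neighbours before it is updated, since for the most restrictive infiltration rules ($T11$ and $S11$, where $\mathcal{I}_F=\{1\}$) such a cell would be frozen forever. Routing the activation along a \emph{shortest} (hence chordless) path from $b$ to $u$ is exactly what guarantees that each cell on the path has precisely one already-activated neighbour at the moment it is updated. The remaining steps are the routine verifications of facts \textbf{(i)} and \textbf{(ii)} and of the base case.
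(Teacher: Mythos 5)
Your proof is correct and follows essentially the same route as the paper's: the forward direction shows that if no boundary cell infiltrates then the whole component $V_+[u]\cup B[u]$ is frozen (the paper delegates the "non-infiltrating boundary cells are stable" step to Lemma~\ref{lem:infil}, which you re-derive via the observation that $k_1=1$ and $V_+$ consists of the inactive cells with no active neighbour), and the backward direction propagates the activation from the infiltrating boundary cell to $u$ along an induced (shortest) path, exactly as in the paper. The extra care you take with the chordless-path requirement and the induction on updates only makes explicit what the paper's proof leaves implicit.
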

\begin{proof}
  Suppose first that no cell of $B[u]$ infiltrates. From Lemma \ref{lem:infil} we know that a cell in $B[u]$ that does not infiltrates is necessarily stable.  Since all vertices of $V_+[u]$ need at least one more active neighbor to become active, and all the vertices on the border of $V_+[u]$ are stable, we deduce that all vertices in $V_+[u]$ are stable, in particular $u$.  
  
 Let $v \in B[u]$ a cell that infiltrates. Since $G_{+}[u]$ is connected, there exists a $(v,u)$-path $P$ in the grid, such that all internal vertices belong to $V_+[u]$. Over all possible paths, we take an induced path $(v,u)$, for example a shortest one. Name $v_1, \hdots, v_k$ be the cells of such a path, where $v_1 = v$ and $v_k = u$, and consider the updating scheme $\sigma$ that updates the cells in the order given by their names (i.e. first update $v_1$, then $v_2$, and so on until it updates $v_k$ (the updating scheme can update the rest of the cells in an arbitrary order). We claim that $\sigma$ iterates all the cells of the path in one updating. Indeed, observe that $v$ infiltrates, so the claim is true for the first cells of the path. Inductively, if we consider that the claim is true for the first $i$ vertices of the path, for $i \in [k]$, then the claim is also true for the $i+1$ fist vertices of the path. Indeed, since the path $P$ is induced, at the moment on which we update vertex $v_{i+1}$, the state of its neighbors is invariant with respect to the initial configuration, with the exception of $v_i$, which now is active (by the induction hypothesis). Since $v_{i+1}$ belongs to $V_+$, we deduce that $\sigma$ iterates $v_{i+1}$. We conclude then that $\sigma$ iterates $u$, and then $u$ is unstable. 
\end{proof}
  
The latter lemma allow us to solve \AsyncStability when $u$ belongs to $V_+$.  Indeed, given this result, we only need identify the cells in $B[u]$ that infiltrate.   More precisely, our algorithm has to perform two tasks: (1) compute the set $B[u]$ and (2) decide if any cell in $B[u]$ infiltrate. Task (2) can be done quite efficiently, computing in parallel $F(x)$. Task (1) is slightly more complicated.  It consists in (1.1) compute the set $V_+$; (1.2) compute the set of edges of $G+$, (1.3) compute the connected components of $G_+$; (1.4) Compute $V_+[u]$; and (1.5) Compute $B[u]$. 
  
Fortunately, there exist a fast-parallel algorithm that computes the connected components of a graph. 

\begin{proposition}[\cite{JaJa:1992:IPA:133889}]\label{prop:conjaja}
There is an algorithm that, given the adjacency matrix of an undirected  $n$-vertex graph $G$, computes the connected components of $G$ in time $\mathcal{O}(\log^2n)$ using $\mathcal{O}(n{^2}) $ processors. The output of the algorithm is a vector $C$ of length $n$, such that, for each pair of indices $i,j \in [n]$ , when $C_i = C_j$ it means that the $i$-th vertex and the $j$-th vertex of $G$ are in the same connected component. 
\end{proposition}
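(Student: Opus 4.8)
The plan is to prove this by the classical hook-and-shortcut (pointer-jumping) strategy for parallel connected components, essentially the Hirschberg--Chandra--Sarwate / Shiloach--Vishkin algorithm as presented in \cite{JaJa:1992:IPA:133889}. I would maintain an array $D$ of length $n$ of parent pointers, interpreted as a forest on the vertex set, with the invariant that every tree of the forest is contained in a single connected component of $G$. Initialize $D_i = i$ for all $i$ (each vertex its own singleton tree), and then repeatedly alternate two operations until the forest stabilizes: a \emph{hooking} step, in which for every pair $(i,j)$ with $A_{ij} = 1$ whose current roots differ, the root of one tree is made to point to the root of the other, using a deterministic tie-break (e.g.\ the smaller-indexed root) together with the Shiloach--Vishkin rule that a tree which failed to hook onto a smaller root instead hooks onto a larger one, so that stagnation is avoided; and a \emph{shortcutting} step, which performs $\lceil \log n \rceil$ parallel rounds of the assignment $D_i \gets D_{D_i}$, collapsing each tree into a star pointing at its root. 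At termination one sets $C_i := D_i$, the unique root of $i$'s tree, and two vertices receive the same label if and only if they lie in the same component.

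Correctness of the invariant is straightforward: a hook only links two trees joined by an edge of $G$, hence already in the same component, and shortcutting never changes the partition of vertices into trees; conversely, as long as some component is split across two or more trees there is an edge between two of them, so some hook fires, and the process halts only when every component is exactly one star. The quantitative heart of the argument --- and the step I expect to be the main obstacle --- is showing that a constant fraction of the trees disappears in each hooking-plus-shortcutting phase, so that $\mathcal{O}(\log n)$ phases suffice. This is precisely where the hooking rule matters: one argues that after a complete phase every non-root tree has been absorbed into a strictly larger or strictly smaller tree, which together with star-compression (guaranteeing the roots are recognized consistently) forces the number of distinct trees to drop by at least a constant factor. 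I would invoke the amortized analysis of \cite{JaJa:1992:IPA:133889} for this rather than reprove it from scratch.

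For the resource accounting: each shortcutting round, and the per-phase bookkeeping over the $n$ pointers, costs $\mathcal{O}(1)$ time with $\mathcal{O}(n)$ processors, so the shortcutting of one phase costs $\mathcal{O}(\log n)$ time; a hooking step must inspect all $n^2$ entries of the adjacency matrix and resolve, for each root, which incoming hook wins, which costs $\mathcal{O}(\log n)$ time with $\mathcal{O}(n^2)$ processors (the conflicting writes onto a common root are settled by a minimum/broadcast that fits inside the phase budget, so a CREW PRAM suffices). Multiplying $\mathcal{O}(\log n)$ phases by $\mathcal{O}(\log n)$ time per phase gives total time $\mathcal{O}(\log^2 n)$ using $\mathcal{O}(n^2)$ processors, and the final relabeling $C_i \gets D_i$ is one further parallel step. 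This yields exactly the statement of the proposition.
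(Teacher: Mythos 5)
The paper does not prove this proposition at all: it is imported verbatim as a black-box citation to \cite{JaJa:1992:IPA:133889}, and your sketch is precisely the hook-and-shortcut algorithm from that reference, with the one genuinely hard step (the constant-factor decrease in the number of trees per phase) explicitly deferred back to the same source, so you are in substance taking the same route as the paper. The only small wrinkle is that you blend the Shiloach--Vishkin hooking rule (which is tailored to a CRCW machine and an $\mathcal{O}(\log n)$ bound) with the CREW $\mathcal{O}(\log^2 n)$, $\mathcal{O}(n^2)$-processor variant actually being claimed; the resource accounting you give matches the latter, which is the version the paper needs.
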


In our algorithm solving \AsyncStability we are going to need another subroutine, which is called a \emph{prefix-sum algorithm}. Given an integer vector $Vec$, the \emph{prefix-sum} of $Vec$ is an integer $sum$ corresponding to the sum of the elements of $Vec$ (actually the prefix-sum is vector with the sum of all prefixes of $Vec$, but we are going to need only the sum of all elements)

\begin{proposition}[\cite{JaJa:1992:IPA:133889}]\label{prop:jajaprefixsum}
There is an algorithm that, given a vector $Vec$ of length $n$, computes the sum of all elements of $Vec$ in time $\cO(\log n)$ using $\cO(n)$ processors. 
\end{proposition}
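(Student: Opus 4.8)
The plan is to compute the sum by a balanced-binary-tree reduction (a single up-sweep), which is the classical construction in \cite{JaJa:1992:IPA:133889}. First I would assume without loss of generality that $n$ is a power of two; if it is not, conceptually pad the vector with zeros up to the next power of two, which at most doubles its length and hence changes the processor count and running time only by constant factors. I would then proceed in $\log_2 n$ synchronous rounds. The invariant maintained is that at the start of round $k$ (for $k = 1, \dots, \log_2 n$) the vector is partitioned into blocks of $2^{k-1}$ consecutive entries, each block's leftmost cell holding the sum of that block. In round $k$, processor $j$ (for $j = 0, \dots, n/2^{k} - 1$) reads the two cells at positions $2^{k} j$ and $2^{k} j + 2^{k-1}$, adds them, and writes the result back into position $2^{k} j$.

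Correctness follows by a straightforward induction on $k$: the statement ``after round $k$, position $2^{k} j$ holds $\sum_{\ell = 0}^{2^{k}-1} Vec[2^{k} j + \ell]$'' holds for $k = 0$ trivially and is preserved by the update just described. After round $\log_2 n$ the single surviving cell, at position $0$, holds the sum of all entries, which is the required output. For the resource bounds, there are $\log_2 n = \cO(\log n)$ rounds, each consisting of two reads, one addition, and one write, so the total time is $\cO(\log n)$; the number of processors used is $n/2$ in the first round and strictly fewer afterwards, so $\cO(n)$ processors suffice. The model is no obstacle either: the algorithm is in fact an EREW (hence CREW) PRAM algorithm, since within each round the cells read by distinct processors are pairwise distinct and each processor writes to a cell touched by no other processor in that round, so no read or write conflicts arise.

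The construction is entirely standard, so I do not anticipate a genuine difficulty; the only points requiring a little care are the index bookkeeping that assigns processor $j$ the correct pair of positions $\bigl(2^{k} j,\, 2^{k} j + 2^{k-1}\bigr)$ in round $k$, and the reduction from arbitrary $n$ to a power of two (equivalently, letting out-of-range reads return $0$). I would also remark in passing that the same two-phase scan technique, an up-sweep followed by a symmetric down-sweep, produces the entire vector of prefix sums within the same $\cO(\log n)$-time, $\cO(n)$-processor bounds; for the application in this paper, however, only the grand total is needed, so the up-sweep alone suffices.
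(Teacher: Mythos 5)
Your proof is correct and is exactly the standard balanced-binary-tree reduction from the cited reference \cite{JaJa:1992:IPA:133889}; the paper itself gives no proof of this proposition, simply citing it, so there is nothing to add or compare. The invariant, the resource accounting, and the observation that the algorithm is EREW (hence runs on the paper's CREW model) are all in order.
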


We are now ready to prove the main result of this subsection.
  
\begin{theorem}\label{thm: infiltration tri}
For every infiltration rule \AsyncStability  is in \NC. 
\end{theorem}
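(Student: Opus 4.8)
The plan is to assemble a single $\NC$ algorithm for \AsyncStability on an arbitrary infiltration rule $F$ by combining Lemmas \ref{lem:infil} and \ref{lem:boundary} with the parallel subroutines of Propositions \ref{prop:conjaja} and \ref{prop:jajaprefixsum}. Recall that $G(n)$ has $\Theta(n^2)$ cells, so the input has size $\Theta(n^2)$, and being in $\NC$ here means polylogarithmic time in $n$ with a number of processors polynomial in $n$.

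First I would dispose of the easy alternative. Whether the decision cell $u$ lies in $V_+$ depends only on $x$ restricted to the (bounded) neighborhood $N(u)$, so a single processor decides it in constant time. If $u \notin V_+$, then by Lemma \ref{lem:infil} the cell $u$ is either stable or it infiltrates, and these two possibilities are distinguished by the single bit $F(x)_u$, again a constant-time local computation; the algorithm accepts if and only if $F(x)_u = 1$.

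The substantial case is $u \in V_+$, and here I would carry out the five subtasks (1.1)--(1.5) listed just before the statement. For (1.1), assign one processor to each cell $v \in G(n)$; in constant time it evaluates the three conditions defining $V_+$ from $x$ on $N(v)$, producing the characteristic vector of $V_+$. For (1.2), assign one processor to each ordered pair of cells and write a $1$ in the adjacency matrix of $G_+$ exactly when both cells belong to $V_+$ and are grid-neighbors; this uses $\cO(n^4)$ processors and constant time. For (1.3), run the algorithm of Proposition \ref{prop:conjaja} on that matrix, obtaining the component vector $C$ in time $\cO(\log^2 n)$. For (1.4), mark $V_+[u] = \{v : C_v = C_u\}$ with one processor per cell in constant time. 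For (1.5), mark $B[u] = \{v \notin V_+[u] : N(v) \cap V_+[u] \neq \emptyset\}$, again one processor per cell, each inspecting only the marks of the cells in its own bounded grid neighborhood. Finally, one processor per cell of $B[u]$ computes the bit ``$v$ infiltrates'', which equals $1$ iff $\sum_{w \in N(v)} x_w \in \mathcal{I}_F$, a local test, and a prefix-sum (Proposition \ref{prop:jajaprefixsum}) over these bits tells us in time $\cO(\log n)$ whether at least one cell of $B[u]$ infiltrates. By Lemma \ref{lem:boundary} the algorithm accepts if and only if this sum is positive.

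Since every step runs in time $\cO(\log^2 n)$ on a CREW PRAM with $\cO(n^4)$ processors, \AsyncStability restricted to any infiltration rule is in $\NC$. I do not expect a genuine obstacle: the only non-trivial parallel primitive is the connected-components computation, which Proposition \ref{prop:conjaja} provides off the shelf; correctness of the whole scheme is already guaranteed by Lemmas \ref{lem:infil} (case $u \notin V_+$) and \ref{lem:boundary} (case $u \in V_+$); and the one point deserving a line of care is simply that each ``local'' step depends on a bounded number of cells, which holds because the von Neumann neighborhoods here have only $3$ or $4$ cells and the conditions defining $V_+$, infiltration, and $B[u]$ reach at most to these immediate neighbors together with the marks previously computed for them.
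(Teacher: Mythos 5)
Your proposal is correct and follows essentially the same route as the paper's proof: the same case split on whether $u\in V_+$ (resolved by Lemma \ref{lem:infil} locally when $u\notin V_+$, and by Lemma \ref{lem:boundary} via the boundary $B[u]$ otherwise), the same use of Proposition \ref{prop:conjaja} for connected components and Proposition \ref{prop:jajaprefixsum} for the final aggregation, and the same resource bounds ($\cO(\log^2 n)$ time on $\cO(n^4)=\cO(|G(n)|^2)$ processors). No gaps.
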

\begin{proof}
Let $(x, u)$ be an input of {\AsyncStability}, i.e. $x$ is a finite configuration of $G(n)$, and $u$ is a vertex of $G(u)$. 
Our algorithm for {\AsyncStability} first computes $F(x)$. Then, it computes the set $V_+$ as it is defined above. The algorithm then checks whether $u$ belongs to $V_+$. If it is not contained in $V_+$, the algorithm checks if $u$ infiltrates by looking to $F(x)_u$. If $F(x)_u=1$ it means that $u$ infiltrates and the algorithm \emph{accepts}. Otherwise, from Lemma \ref{lem:infil} we know that $u$ is stable, so the algorithm \emph{rejects}. In the case when $u$ belongs to $V_+$, the algorithm computes the set $B[u]$ and looks for a vertex in that set that infiltrates. From Lemma \ref{lem:boundary} we know that, if some vertex infiltrates $u$ is unstable, and otherwise, if no vertex of $B[u]$ infiltrates, then $u$ is stable. 
Therefore algorithm outputs \emph{accept} some vertex in $B[u]$ infiltrates and \emph{rejects} otherwise. %
The details of our algorithm are given in Algorithm \ref{algo:NCinfil}.
  
\begin{algorithm}[H]

\caption{\NC algorithm for infiltration rules}\label{alg: AsyncStability 1}
  \label{algo:NCinfil}
  \begin{algorithmic}[1]  
  \REQUIRE $x\in \{0,1\}^{G(n)}$  and $u \in G(n)$ such that $x_u =0$.   
  \STATE Compute $F(x)$.
  \STATE Compute the $V_{+}$.
    \IF{$u$ belongs to $V_+$ } 
    \IF{$F(x)_u = 1$}
    \RETURN \emph{accept}.
    \ELSE
    \RETURN \emph{reject}.
    \ENDIF
    \ELSE
  \STATE Compute the adjacency matrix of graph $G_{+}$.
  \STATE Compute the connected components of $G_{+}$.
    \STATE Compute $V_{+}[u]$.
  \STATE Compute the $B[u]=\{v \not\in V_{+}[u]: V_{+}[u] \cap N(v)\not= \emptyset \} $.
  \FORALLP{ $v \in B[u]$}
    \IF{$F(x)_v=1$ } 
      \RETURN \emph{Accept}
    \ENDIF  
  \ENDFORALLP
  \RETURN  \emph{Reject} 
  \ENDIF
  \end{algorithmic}
\end{algorithm}

We now analyze the complexity of the algorithm. Let us call $N = |G(n)|$.

\begin{description}
  \item[ { Step 1}] can be computed in  $\cO(\log N)$ time using $\cO(N)$ processors. First, define a vector $VecS$ of length $N$. One processor is assigned to each cell. The processor assigned to cell $i$ computes $\sum_{w\in N(v)}x_w$ in $\cO(\log n)$ time and saves the output value in the $i$-th coordinate of $VecS$. Then, define a vector $VecF$ of length $N$. The processor assigned to cell $i$ computes $F(x)_i$ according on the table of rule $F$, it and saves the value in the $i$-th coordinate of $VecF$.
  
  \item[{Step 2}] can be computed in time $\cO(\log N)$ using $\cO(N)$ processors. First, define an empty vector $VecV$ of length $N$. One processor is assigned to each cell. The processor assigned to cell $i$ looks to the $i$-th coordinate of vector $VecS$ computed in {\bf Step 1} and decides if $\sum_{w\in N(v)}x_w \notin \mathcal{I}_F$ and $\sum_{w\in N(v)}x_w +1 \in \mathcal{I}_F$. If the answer is affirmative, it writes a $1$ in the $i$-th coordinate of $VecV$. If the answer is negative, it writes a $0$ in the same coordinate.  
   
    \item[{Steps 3-9}] can be computed in $\cO(\log N)$ time with one processor. The processor looks to the coordinate corresponding to $u$ in $VecV$ and $VecF$.
   
     \item[{Step 10}] can be computed in $\cO(\log N)$ time using $\cO(N^2)$ processors. First, define an empty matrix $N \times N$ called  $MatG$. Then, one processor is assigned to each pair of cells $i$ and $j$. If $VecV_i=1$ and $VecV_j=1$ and $i$ and $j$ are adjacent, then the processor writes a $1$ in the coordinate $(i,j)$ of $MatG$, and otherwise it writes a $0$ in the same coordinate. 
     
          \item[{Step 11}] can be computed in $\cO(\log^2 N)$ time using $\cO(N^2)$ processors using the algorithm given by Proposition \ref{prop:conjaja}. 
          
          \item[{Step 12}] can be computed in $\cO(\log N)$ time using $\cO(N)$ processors. First, define an empty vector $VecVu$ of length $N$. Then, one processor is assigned to each cell. The processor assigned to cell $i$ checks if $C_i = C_u$ and if $VecV_i=1$. If the answer is affirmative, it writes a $1$ in the $i$-th coordinate of $VecVu$ . Otherwise, it writes a $0$ in the same coordinate.
          
          \item[{Step 13}] can be computed in $\cO(\log n)$ time using $\cO(N)$ processors. First, define am empty vector $VecB$ of length $N$: Then, one processor is assigned to each cell. The processor assigned to cell $i$ checks if $VecVu_i = 0$ and if $i$ has a neighbor $j$ such that  $VecVu_j=1$. If the answer is affirmative the processor writes a $1$ in the $i$-th coordinate of $VecB$ and otherwise it writes a $0$ in the same coordinate. 
          
\item[{Steps 14-20}] can be computed in $\cO(\log N)$ using $\cO(N)$ processors. First, define an empty vector $VecInf$ of length $N$. Then, one processor is assigned to each cell. The processor assigned to cell $i$ checks if $VecB_i=1$ and $VecF_i=1$. If the answer is affirmative, the processor writes a $1$ in the $i$-th coordinate of $VecInf$. Otherwise, it writes a $0$ in the same coordinate. After each processor finishes, run the prefix-sum algorithm of Proposition \ref{prop:jajaprefixsum} on input $VecInf$, and call $sum$ the output. If $sum=1$ the algorithm accepts, and if $sum=0$ the algorithm rejects. 
\end{description}

We deduce that our algorithm runs in time $\cO(\log^2 N)$ using $\cO(N^2)$ processors, and therefore our \AsyncStability is in \NC.
\end{proof}

\subsection{Monotone-like rules} \label{subsec: monotone rules}

We now study monotone-like rules.  In \cite{StabilityMajority} it is shown that all FCA that are  monotone have a particular property:  From a given initial configuration, they reach the same fixed point under any updating scheme.

\begin{proposition}[\cite{StabilityMajority}]\label{prp: freezing and monotone}
  If $F$ is a freezing and monotone CA, then starting from any configuration,  any updating scheme of $F$ reaches the same fixed point than the synchronous update. .
\end{proposition}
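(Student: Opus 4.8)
The plan is to prove the two inequalities $z^\ast \le y^\ast$ and $y^\ast \le z^\ast$, where $y^\ast$ denotes the configuration obtained by iterating the synchronous rule $F$ from the initial configuration $x$ until it stabilizes, $z^\ast$ denotes the fixed point reached from the same $x$ by an arbitrary updating scheme $\sigma$ (sequential or block-sequential), and $\le$ is the coordinatewise extension of the total order that witnesses both the freezing and the monotone property of $F$. Here one first reduces to the case where the freezing order and the monotone order coincide; this is automatic for two-state rules (both orders are forced to be $0<1$) and is the natural reading of the hypothesis in the general multi-state setting.

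I would begin by recording two elementary facts. Since $F$ is freezing, the synchronous orbit $x \le F(x) \le F^2(x) \le \cdots$ is coordinatewise nondecreasing in a finite state space, hence it stabilizes at some $y^\ast$ with $x \le y^\ast$ and $F(y^\ast) = y^\ast$. Likewise, every step of $\sigma$ can only raise the state of a cell, so $x \le z^\ast$; and because each window of $\sigma$ updates every cell, once the $\sigma$-orbit is stuck at $z^\ast$ every cell is at some later step updated from configuration $z^\ast$ without change, which shows $F(z^\ast) = z^\ast$ as well, i.e. $z^\ast$ is also a fixed point of the synchronous rule.

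The core of the argument is the invariant used for $z^\ast \le y^\ast$: if $w \le y^\ast$ and $w'$ is obtained from $w$ by (synchronously) updating an arbitrary subset $S$ of cells, then $w' \le y^\ast$. Indeed, for $v \notin S$ we have $w'_v = w_v \le y^\ast_v$, and for $v \in S$ monotonicity gives $w'_v = F(w)_v \le F(y^\ast)_v = y^\ast_v$. Starting from $x \le y^\ast$ and applying this invariant along the steps of $\sigma$, every configuration produced by $\sigma$ stays $\le y^\ast$, so $z^\ast \le y^\ast$. For the reverse inequality, I would use $x \le z^\ast$ together with monotonicity and $F(z^\ast) = z^\ast$: inductively $F^t(x) \le F^t(z^\ast) = z^\ast$ for all $t$, hence the synchronous limit satisfies $y^\ast \le z^\ast$. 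Combining the two inequalities yields $y^\ast = z^\ast$, which is the claim.

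The only delicate point is exactly the alignment of the two orders appearing in the hypotheses, since the proof chains together "$x \le z^\ast$ because $\sigma$ is freezing" (freezing order) and "$F(x) \le F(z^\ast)$ because $F$ is monotone" (monotone order); these must be the same order for the chain to typecheck. As noted, this is vacuous for two-state rules, and for more states it suffices to observe (or assume, as in the source) that a CA which is simultaneously freezing and monotone admits a single total order witnessing both, after which the argument above goes through verbatim. Everything else is a straightforward induction, so I do not anticipate any further obstacle.
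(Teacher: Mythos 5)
The paper does not prove this proposition; it is imported verbatim from \cite{StabilityMajority} (and the introduction explicitly remarks that the source proves it only for two-state monotone CA, with the multi-state extension left as an easy observation). So there is no in-paper proof to compare against, and your argument has to stand on its own. It does: the sandwich $z^\ast \le y^\ast \le z^\ast$ via (i) the invariant that partial updates of a configuration below the synchronous fixed point stay below it, and (ii) the observation that $z^\ast$ is also a fixed point of the synchronous map so that $F^t(x)\le z^\ast$ for all $t$, is exactly the standard confluence argument for freezing monotone dynamics, and each step checks out, including the justification that $F(z^\ast)=z^\ast$ from the requirement that every window of $\sigma$ updates every cell. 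Your flag about needing the freezing order and the monotone order to coincide is also the right caveat, and your resolution matches the paper's stance (automatic for two states, assumed or easily arranged otherwise). No gaps.
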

 We will call \stability to the version of \AsyncStability where the updating scheme is synchronous, i.e. the problem is to decide is a cell is stable updating every cell at same time. 
Thus in freezing monotone CA is equivalent solve \stability or \AsyncStability. This proves that $\AsyncStability$ is solvable in polynomial time for every $FCA$. 

\begin{proposition}[\cite{StabilityMajority}]
For every monotone FCA, \AsyncStability is in \Pt.
\end{proposition}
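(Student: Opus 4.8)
The plan is to reduce \AsyncStability to \stability and then observe that the latter is trivially decidable in polynomial time for any FCA. Suppose $F$ is freezing and monotone, and let $(n,x,u)$ be an input of \AsyncStability. By Proposition \ref{prp: freezing and monotone}, every updating scheme of $F$ starting from $x$ reaches the same fixed point $y$, which is precisely the fixed point of the synchronous update of $F$ on $x$. Moreover, since $F$ is freezing, every trajectory is coordinate-wise nondecreasing (with respect to the order on $Q$ witnessing freezingness): updating a cell $z$ at some time-step replaces $F^{\sigma(t)}(x)_z$ by $f(F^{\sigma(t)}(x)_{N(z)}) \geq F^{\sigma(t)}(x)_z$, and the other cells are unchanged. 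Hence a cell $v$ is iterated under a scheme $\sigma$ if and only if its value in the fixed point reached by $\sigma$ differs from $x_v$. Combining these two observations, for the decision cell $u$ we obtain: there exists an updating scheme $\sigma$ and $T>0$ with $F^{\sigma(T)}(x)_u \neq x_u$ if and only if $y_u \neq x_u$. In other words, $u$ is unstable for $F$ if and only if $u$ changes state in the synchronous fixed point of $x$, i.e. $(n,x,u)$ is a yes-instance of \stability.

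First I would describe the algorithm solving \stability (and hence, by the reduction above, \AsyncStability): on input $(n,x,u)$, iterate the synchronous global map $F$ starting from $x$ until a fixed point $y$ is reached, then \emph{accept} if $y_u \neq x_u$ and \emph{reject} otherwise. This terminates after a polynomial number of synchronous steps: as $F$ is freezing, every synchronous step that is not already a fixed point strictly increases the state of at least one cell, and each of the $|G(n)|$ cells can be increased at most $|Q|-1$ times, so a fixed point is reached after at most $|G(n)|(|Q|-1)$ synchronous steps. Each step amounts to evaluating the local rule $f$ at each of the $|G(n)|$ cells, which takes time polynomial in the input size; therefore the whole procedure runs in polynomial time on a deterministic Turing machine, and \AsyncStability restricted to $F$ is in \Pt.

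I do not expect a genuine obstacle in this argument; correctness follows immediately from the equivalence established in the first paragraph together with the termination bound. The one point that deserves a short but careful justification is precisely that equivalence between ``iterated by some updating scheme'' and ``value differs in the common fixed point'', which rests jointly on Proposition \ref{prp: freezing and monotone} and on the monotonicity of trajectories of freezing automata. It is worth emphasizing that the monotonicity of $F$ itself is essential here: for a non-monotone freezing rule, distinct updating schemes may reach distinct fixed points (this is exactly the phenomenon exploited later for rule $S22$), so the reduction to \stability would fail and the problem may well be harder than \Pt.
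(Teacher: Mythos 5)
Your proposal is correct and follows essentially the same route as the paper: both rely on Proposition \ref{prp: freezing and monotone} to identify unstability under arbitrary updating schemes with change of state in the synchronous fixed point, and then solve \stability by simulating the synchronous dynamics for a polynomial number of steps. Your write-up is merely more explicit than the paper's about the two supporting facts (that in a freezing CA a cell is iterated iff its value in the reached fixed point differs from its initial value, and the $|G(n)|(|Q|-1)$ bound on the number of synchronous steps), but these are exactly the facts the paper invokes implicitly.
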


Observe that the two-dimensional monotone $LFCA$ rules are rule $T23$ in the triangular grid and  rules $S24$ and $S34$ in the square grid. Interestingly, for these rules there exist better algorithm solving \AsyncStability. Let us call $maj$ the \emph{freezing majority rule}, which consists in the freezing  rule where every cell takes the state of the majority of its neighbors. In case of a tie (same number of inactive and active neighbors), the cell becomes active. Similarly, we call $Maj$ the \emph{freezing strict majority rule}, which is defined similarly, except that in tie case the inactive cells remain inactive. Observe that rule $S24$ corresponds to $maj$, while $S34$ corresponds to $Maj$ in the square grid. Moreover, since the triangular grid is a topology of odd degree,  $T23$ corresponds to $maj$ and $Maj$.
 
In \cite{StabilityFTCA} it is shown that for $S24$ (i.e. $maj$ in the squared grid) problem \stability is in \NC.

\begin{proposition}[\cite{StabilityFTCA}]\label{prop:diego1}
For rule $S24$ \stability is in \NC.
\end{proposition}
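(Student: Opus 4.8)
The plan is to reduce \stability for rule $S24$ to the \textsc{Prediction} problem for the freezing majority rule, and then invoke the result of~\cite{goles:hal-00914603} recalled in the related-work discussion, which states that \textsc{Prediction} for freezing majority is in \NC on every graph of maximum degree at most~$4$. The first step is the observation that rule $S24$ \emph{is} the freezing majority rule on the square grid: a cell has four neighbours, and $\mathcal{I}_{S24}=\{2,3,4\}$ means that an inactive cell becomes active exactly when a (weak) majority — at least two of its four neighbours — are active. Since the square grid is $4$-regular (including on the torus $G(n)$), it falls inside the low-degree regime covered by~\cite{goles:hal-00914603}.

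Next I would spell out the reduction. If $x_u=1$ then $u$ is trivially stable, so assume $x_u=0$. Because $S24$ is a two-state FCA on $|G(n)|$ cells, under the synchronous dynamics each cell changes state at most once, hence the synchronous orbit of $x$ reaches a fixed point after at most $|G(n)|$ steps. Consequently the decision cell $u$ is unstable for the synchronous update if and only if $F^{\,|G(n)|}(x)_u=1$. So an \NC algorithm for \stability restricted to $S24$ only has to form, in constant parallel time, the instance $(n,x,u,t^\ast)$ of \textsc{Prediction} for freezing majority with $t^\ast:=|G(n)|$, run on it the \NC algorithm of~\cite{goles:hal-00914603}, and accept iff that algorithm reports that $u$ is active at time $t^\ast$. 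Equivalently, one phrases this as a logarithmic-space many-one reduction and uses that \NC is closed under such reductions.

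At this granularity the only things to verify are routine: that the $S24$ transition table indeed coincides with weak majority on four neighbours (a one-line check), and that $|G(n)|$ synchronous steps suffice to reach the fixed point (immediate from the rule being freezing and two-state). The substantive content — that the eventually-active set of $2$-neighbour bootstrap percolation can be computed with poly-logarithmic parallel depth, rather than by a simulation of a possibly linear number of steps — is packaged inside the cited \textsc{Prediction} algorithm, so this is where a self-contained proof would have to do real work. A direct re-derivation would go through the classical structure of $2$-neighbour bootstrap percolation, namely that the eventually-active set is a disjoint union of \emph{internally spanned} combinatorial rectangles and that $u$ is unstable iff it lies in one of them; the delicate point there is to turn the recursive (Aizenman--Lebowitz-type) hierarchy of these rectangles into a balanced, poly-logarithmic-depth computation, since the naive iteration ``merge nearby rectangles and repeat'' can need linearly many rounds on thin staircase configurations. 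For the proposition as stated, the reduction route avoids re-proving~\cite{goles:hal-00914603} and is the cleanest option.
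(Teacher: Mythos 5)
There is a genuine gap, and it sits exactly where you lean on the citation. The paper gives no proof of this proposition: it imports it wholesale from \cite{StabilityFTCA}. Your plan instead derives it from the result of \cite{goles:hal-00914603} on \textsc{Prediction} for the freezing majority rule on graphs of degree at most $4$. But within this paper that imported result (Proposition~\ref{prop:otra}) is stated for $Maj$, the freezing \emph{strict} majority rule, which on the square grid is rule $S34$ (an inactive cell needs at least $3$ of its $4$ neighbours active; ties leave it inactive). Rule $S24$ is $maj$, the \emph{weak} majority rule, i.e.\ classical $2$-neighbour bootstrap percolation. The two rules differ precisely on the tie case (exactly $2$ active neighbours out of $4$), and that tie case is what gives $S24$ its characteristic rectangle-filling growth from sparse seeds --- the very ``thin staircase'' behaviour you flag as the delicate point. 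So the reduction you set up targets a result that does not cover the rule in question; the paper sources the $S24$ case from a different reference (\cite{StabilityFTCA}) precisely because it is not a corollary of the strict-majority theorem. The part of your argument that is fine is the preprocessing: since $S24$ is a two-state freezing rule, the synchronous dynamics fixes in at most $|G(n)|$ steps, so \stability reduces in logarithmic space to \textsc{Prediction} at time $t^\ast=|G(n)|$, and \NC is closed under such reductions. What is missing is the substantive content --- a poly-logarithmic-depth algorithm for the eventually-active set of $2$-neighbour bootstrap percolation --- which your write-up correctly identifies as requiring real work (the internally-spanned-rectangle hierarchy) but then defers to a citation that, as used in this paper, proves a different statement.
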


On the other hand, in \cite{goles:hal-00914603} it is shown that for $Maj$ in any topology of degree at most $4$, problem \stability is in \NC. 

\begin{proposition}[\cite{goles:hal-00914603}]\label{prop:otra}
For rule $Maj$ restricted to a topology of degree at most $4$, problem \stability is in \NC.
\end{proposition}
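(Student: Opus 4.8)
The plan is to derive Proposition~\ref{prop:otra} from the \textsc{Prediction} result of \cite{goles:hal-00914603} by using the freezing property to turn \stability into a single, polynomially bounded prediction query. Let $N$ denote the number of cells of the topology. Since $Maj$ is freezing and has two states, every synchronous step that is not already at a fixed point iterates at least one cell (from inactive to active), and each cell can be iterated at most once; hence the synchronous orbit of any configuration reaches its fixed point in at most $N$ steps. In particular, for the synchronous rule $F=Maj$, a cell $u$ is unstable for a configuration $x$ if and only if $F^{N}(x)_u\neq x_u$.

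This yields a log-space many-one reduction from \stability for $Maj$ to \textsc{Prediction} for $Maj$: given an instance $(G,x,u)$, where $G$ is the underlying graph (of maximum degree at most $4$), output the \textsc{Prediction} instance $(G,x,u,t^{*})$ with $t^{*}=N$, and accept precisely when the predicted state of $u$ after $t^{*}$ synchronous steps differs from $x_u$. Correctness is the equivalence of the previous paragraph; the transformation is computable in logarithmic space (essentially counting the vertices of $G$) and leaves the maximum-degree bound untouched. By \cite{goles:hal-00914603}, \textsc{Prediction} for the freezing majority rule restricted to graphs of maximum degree at most $4$ is in \NC, and \NC is closed under log-space reductions; therefore \stability for $Maj$ on such topologies is in \NC.

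For completeness we indicate where the real content lies. If one wanted a proof not relying on \cite{goles:hal-00914603}, one would need to re-establish that the synchronous $Maj$ fixed point on a degree-$\leq 4$ graph can be computed by a fast-parallel algorithm, replacing the generic polynomial-time step-by-step simulation (valid for every FCA) by, e.g., connected-components or reachability subroutines in the style of Proposition~\ref{prop:conjaja}. The key structural fact to exploit is that strict majority propagates weakly on bounded-degree graphs --- activating an inactive cell demands a strict majority of its at most four neighbors to be active already --- so that the set of eventually-active cells is governed by a short, essentially acyclic chain of local implications rather than by long causal chains. Isolating and using this fact is the main obstacle of the direct route, and it is also where the degree bound is indispensable: for topologies of maximum degree at least $5$ the analogous \textsc{Prediction} problem is \PtC, ruling out any such \NC argument unless $\Pt=\NC$. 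Given the reduction above, however, Proposition~\ref{prop:otra} is an immediate corollary of the cited result.
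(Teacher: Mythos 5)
Your derivation is correct. The paper itself offers no proof of this proposition — it is stated purely as a citation of \cite{goles:hal-00914603} — so there is no in-paper argument to compare against; but your reduction is exactly the natural bridge between the cited \textsc{Prediction} result and the \stability statement. The two key facts you use are both sound: a two-state freezing CA updated synchronously reaches its fixed point within $N$ steps (at least one cell flips per non-terminal step and no cell flips twice), so stability of $u$ is decided by the single query $F^{N}(x)_u \stackrel{?}{=} x_u$; and \NC is closed under logarithmic-space many-one reductions, so composing with the \NC algorithm for \textsc{Prediction} on degree-$\leq 4$ graphs from \cite{goles:hal-00914603} settles the claim. Your closing remark correctly locates where the substantive work lives (in the cited \textsc{Prediction} algorithm, whose degree bound is essential given the \PtC{}ness at degree $5$), which is consistent with how the paper frames the result in its related-work discussion.
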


Observe that the triangular grid is a topology of degree $3$ and the square grid is a  topology of degree $4$. We deduce that for rules $T23$ and $S34$ the problem \stability is in \NC. This observations pipelined with Proposition \ref{prp: freezing and monotone} prove the following lemma.

  \begin{lemma}\label{lem:monotoneasync} 
 For rule $T23$, $S24$ or $S34$, \AsyncStability is in \NC.
  \end{lemma}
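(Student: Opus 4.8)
The plan is to reduce \AsyncStability to its synchronous counterpart \stability and then invoke the known \NC algorithms for the latter. First I would check that rules $T23$, $S24$ and $S34$ are all monotone freezing CA: in each case an inactive cell is activated precisely when the number of active neighbors reaches a fixed threshold (roughly half the degree, broken in one of the two natural ways), so adding active neighbors can only help activation; combined with the freezing property (active cells never become inactive), this yields monotonicity with respect to the coordinatewise order $0 \le 1$. Having established monotonicity, Proposition \ref{prp: freezing and monotone} applies: from any initial configuration every updating scheme — in particular every asynchronous one — reaches the same fixed point as the synchronous update. Hence the decision cell is unstable under some asynchronous scheme if and only if it changes state in the synchronous fixed point, so \AsyncStability and \stability coincide for these three rules.

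Next I would identify each rule with a majority-type rule so that the existing results apply. On the triangular grid every cell has odd degree $3$, so ties cannot occur and the strict and non-strict majority rules coincide; thus $T23$ equals both $maj$ and $Maj$. On the square grid (degree $4$), $S24$ is the non-strict majority rule $maj$ (ties break to active) and $S34$ is the strict majority rule $Maj$ (ties break to inactive). For $S24$ I would invoke Proposition \ref{prop:diego1}, which gives \stability in \NC for $S24$. For $T23$ and $S34$ I would invoke Proposition \ref{prop:otra}: both the triangular grid (degree $3$) and the square grid (degree $4$) are topologies of degree at most $4$, so \stability for $Maj$ on these grids is in \NC, and since $T23 = Maj$ on the triangular grid and $S34 = Maj$ on the square grid this settles both.

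Stitching this together, for each of the three rules we have \AsyncStability $=$ \stability (by monotonicity together with Proposition \ref{prp: freezing and monotone}), and \stability is in \NC (by Proposition \ref{prop:diego1} or Proposition \ref{prop:otra}); therefore \AsyncStability is in \NC. The monotonicity check and the reduction are routine once the local rules are written out; the only point that needs care — and the step I would double-check — is the bookkeeping that pairs $T23$, $S24$, $S34$ with the correct majority variant and verifies the degree-$\le 4$ hypothesis of Proposition \ref{prop:otra}.
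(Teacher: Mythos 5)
Your proposal matches the paper's argument exactly: monotonicity of $T23$, $S24$, $S34$ reduces \AsyncStability to \stability via Proposition \ref{prp: freezing and monotone}, and then Proposition \ref{prop:diego1} handles $S24$ while Proposition \ref{prop:otra} handles $T23$ and $S34$ via the identification with $Maj$ on grids of degree at most $4$. The bookkeeping pairing each rule with the correct majority variant is also the same as in the paper.
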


We are now study rules $T22$, $S23$, $S33$. Observe that these rules are \emph{almost} monotone, except in the case when the sum of the neighborhood of a cell equals $|N(\cdot)|$ (which is $3$ in the triangular grid or $4$ in the square grid). Let $F$ be a LFCA  and let $F^*$ be the rule such that $\mathcal{I}_{F^*} = \mathcal{I}_F \setminus \{|N(\cdot)|\}$ (i.e. $\mathcal{I}_{F^*} = \mathcal{I}_F \setminus \{4\}$ if $F$ is defined over the square grid and $\mathcal{I}_{F^*} = \mathcal{I}_F \setminus \{3\}$ if $F$ is defined over the triangular grid). The following lemma relates the unstability of a cell on rules $F$ and $F^*$.

  \begin{lemma}\label{lem:complexitysinmax}
 Let $x$ be a configuration, $u$ an inactive cell, and suppose that $$\exists k \in \mathcal{I}_{F^*} \textrm{ such that } \sum_{w\in N(u)}x_w \leq k.$$  Then $u$ is stable for $x$ in rule and  $F$ if and only if $u$ is stable for $x$ and rule $F^*$.
  \end{lemma}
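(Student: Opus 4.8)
The plan is to prove the two implications of the equivalence separately, after unpacking the hypothesis. Every LFCA has $\mathcal{I}_F$ equal to an interval $[k_1,k_2]$, and $d:=|N(\cdot)|$ is the largest possible number of active neighbours of a cell, so the statement is only meaningful when $d\in\mathcal{I}_F$; this forces $k_2=d$, hence $\mathcal{I}_F=[k_1,d]$ and $\mathcal{I}_{F^*}=[k_1,d-1]$, with $k_1\le d-1$ since the hypothesis presupposes $\mathcal{I}_{F^*}\neq\emptyset$. Thus the hypothesis ``there is $k\in\mathcal{I}_{F^*}$ with $\sum_{w\in N(u)}x_w\le k$'' says exactly that $\sum_{w\in N(u)}x_w\le d-1$, i.e. $u$ has at least one inactive neighbour in $x$. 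Write $p=\sum_{w\in N(u)}x_w$.

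The implication ``$u$ unstable under $F^*$ implies $u$ unstable under $F$'' needs no hypothesis: fixing any sequential scheme $\sigma$, I would show by induction on $t$ that $(F^{*})^{\sigma(t)}(x)\le F^{\sigma(t)}(x)$ coordinate-wise. The only non-trivial case is when the updated cell $z=\sigma(t)$ is iterated by $F^*$; then its number of active neighbours computed in the $F^*$-history lies in $[k_1,d-1]$, so the same number computed in the $F$-history is at least $k_1$ and at most $d$, hence belongs to $\mathcal{I}_F$ and $z$ is iterated by $F$ too. Therefore the very scheme that iterates $u$ under $F^*$ also iterates it under $F$.

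For the converse, assume $u$ has an inactive neighbour and is unstable under $F$. If $p\ge k_1$ then $p\in[k_1,d-1]=\mathcal{I}_{F^*}$, so updating $u$ first already iterates it under $F^*$; so assume $p<k_1$. Fix a sequential scheme $\sigma$ that iterates $u$ under $F$, and for a cell $v$ let $\tau(v)$ be the first time $v$ is active in the $\sigma$-history. Along $\sigma$ the number of active neighbours of $u$ is non-decreasing, starts at $p<k_1$, and must reach the value $k_1$ before $u$ is iterated; at that moment let $w_1,\dots,w_{k_1-p}$ be the $x$-inactive neighbours of $u$ that have become active, labelled so that $\tau(w_1)<\dots<\tau(w_{k_1-p})$. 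I would then grow a certificate tree: each $w_i$ is \emph{needed}, and whenever a needed cell $c$ that is inactive in $x$ appears, one picks $k_1$ of the neighbours of $c$ that are active at time $\tau(c)-1$ in the $\sigma$-history (there are at least that many) and declares them needed, the \emph{children} of $c$. Parent-to-child edges strictly decrease $\tau$, so the construction terminates, and every needed cell has $\tau\le\tau(w_{k_1-p})$.

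The $F^*$-scheme I would use updates all needed cells in increasing order of $\tau$, and then updates $u$. The heart of the proof — and the step I expect to be the real obstacle — is verifying that every such update genuinely iterates its cell under $F^*$, i.e. that the cell then has between $k_1$ and $d-1$ active neighbours. The lower bound is clear: the $k_1$ chosen children of a cell have strictly smaller $\tau$, hence are already active. For the upper bound one uses that any cell declaring $c$ needed (a parent of $c$; for each $w_i$ this parent is $u$ itself) has strictly larger $\tau$, so it is still inactive at the moment $c$ is processed, and therefore $c$ keeps an inactive neighbour — this is exactly where it matters that we never demand more than $k_1<d$ active neighbours of any cell. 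Finally, when $u$ is updated its active neighbours are precisely its $p$ originally active ones together with $w_1,\dots,w_{k_1-p}$ (every other $x$-inactive neighbour of $u$ first becomes active after $\tau(w_{k_1-p})$ and is therefore never processed), so $u$ has exactly $k_1\in\mathcal{I}_{F^*}$ active neighbours and is iterated. This produces an $F^*$-scheme iterating $u$, closing the equivalence; the coupling of the first implication and the final counting are routine, while the delicate part is designing the $F^*$-scheme so that no cell's neighbourhood ever becomes fully active before that cell itself is updated.
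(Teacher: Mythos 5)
Your proposal is correct, but it reaches the hard direction by a genuinely different route than the paper. For the easy direction (unstable under $F^*$ implies unstable under $F$) you run a coordinatewise coupling $(F^*)^{\sigma(t)}(x)\le F^{\sigma(t)}(x)$ and use that $\mathcal{I}_F=[k_1,d]$ is an interval; the paper states this implication in one line, and your induction is the honest justification of it. For the converse, the paper does \emph{not} build a new scheme: it takes an $F$-scheme iterating $u$ in the \emph{minimum} number of steps, argues that before that step no cell is ever iterated with a fully active neighbourhood (deleting such an update would shorten the scheme, since a cell whose neighbours are all already active influences nothing), so the $F$- and $F^*$-dynamics coincide along the prefix; then, since the active-neighbour count of $u$ changes by at most one per step and must climb from $\sum_w x_w\le k$ up to $|N(\cdot)|$, an intermediate-value argument produces a time $t^*$ at which $u$ has exactly $k$ active neighbours, and updating $u$ at $t^*+1$ iterates it under $F^*$. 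You instead extract a witness tree from an arbitrary $F$-scheme and assemble a fresh $F^*$-scheme in increasing $\tau$-order, using the still-inactive parent to cap every processed cell at $d-1$ active neighbours and the $k_1$ children to reach the threshold. Both arguments are sound; the paper's is shorter because it reuses the prefix of the minimal scheme verbatim, while yours is more constructive, does not need the minimality/compression step (which the paper states rather tersely), and makes explicit exactly which activations are needed to fire $u$ with count precisely $k_1$.
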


\begin{proof}
We suppose that $|N(\cdot)|$ belongs to $\mathcal{I}_F$, because otherwise the lemma is trivially true. 
Suppose first that $u$ is stable for $F$. Then necessarily $u$ is stable for $F^*$, because an updating scheme that iterates $u$ under rule $F^*$ will also iterate $u$ under rule $F$, since $\mathcal{I}_{F^*} \subseteq \mathcal{I}_F$.

Suppose now that there is a vertex $u$ that is stable for rule $F^*$ but not for $F$. Call $\sigma$ the updating scheme that iterates $u$ under rule $F$ and let $t$ be such that $F^{\sigma(t)}(x)_u=1$. Over all possible updating schemes, we choose the one such that $t$ is minimum (i.e. $\sigma$ iterates $u$ in the minimum possible number of time-steps). This implies in all time-steps smaller than $t$, no vertex with exactly $|N(\cdot)|$ neighbors are updated, because otherwise $t$ would not be the minimum. This implies that, until time-step $t$, the dynamics over rule $F$ and $F^*$ coincide. Now, we are supposing that $u$ is stable for $F^*$, and it implies that $ \tilde{F}^{\sigma(t)}(x)_u=0$. Therefore in configuration  $F^{\sigma(t-1)}(x)$ cell $u$ has $|N(\cdot)|$ active neighbors.  
As our hypothesis is that $ \sum_{w\in N(u)}x_w \leq k$ for some $k\in \mathcal{I}_{F*}$, and in time-step $t$ $u$ has $N(\cdot)>k$ active neighbors, necessarily there exists a time-step $t^* < t$ on which $u$ has exactly $k$ active neighbors. Let $\sigma*$ be the updating scheme such that, until time-step $t^*$ is equal to $\sigma$, but in time-step $t^*+1$ it updates $u$. We obtain that on rule $F^*$ updated according to $\sigma*$ cell $u$ becomes active in time-step $t^*+1$. This contradicts the fact that $u$ was stable for $F^*$. We deduce that, under the conditions of the lemma, if $u$ is stable for $F^*$ it is necessarily stable for $F$. 
\end{proof}

We are now ready to show the main result of this subsection.

  \begin{theorem}\label{thm: stabilitymonotone} 
 For every monotone-like rule \AsyncStability is in \NC.
  \end{theorem}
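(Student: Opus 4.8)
The plan is to handle the monotone-like family in two parts. For the three genuinely monotone rules among them --- $T23$, $S24$ and $S34$ --- there is nothing new to do: Lemma~\ref{lem:monotoneasync} already places \AsyncStability\ in \NC for each of them. So the work concentrates on the three ``almost monotone'' rules $T22$, $S23$ and $S33$, and the idea is to reduce each of these, by a single constant-size local inspection of the decision cell, to its monotone counterpart.

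Concretely, for $F\in\{T22,S23,S33\}$ let $\hat F$ be the rule with $\mathcal{I}_{\hat F}=\mathcal{I}_F\cup\{|N(\cdot)|\}$; then $\hat F$ is, respectively, $T23$, $S24$, $S34$, each of which is monotone, and $\mathcal{I}_F=\mathcal{I}_{\hat F}\setminus\{|N(\cdot)|\}$, so $F$ is exactly the rule $F^{*}$ of Lemma~\ref{lem:complexitysinmax} with $\hat F$ in the role of $F$. Recall also that, by definition of the monotone-like group, $|N(\cdot)|-1\in\mathcal{I}_F$. The algorithm on input $(x,u)$ then proceeds as follows. If $x_u=1$ it rejects (a frozen active cell is stable). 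Otherwise it computes $s=\sum_{w\in N(u)}x_w$, a bounded sum over the $\cO(1)$ neighbours of $u$. If $s=|N(\cdot)|$ every neighbour of $u$ is active, and since active cells stay active and $|N(\cdot)|\notin\mathcal{I}_F$, the cell $u$ can never be iterated under $F$; the algorithm rejects. If $s<|N(\cdot)|$ then $s\le|N(\cdot)|-1$, and since $|N(\cdot)|-1\in\mathcal{I}_{F^{*}}$ the hypothesis of Lemma~\ref{lem:complexitysinmax} holds with $k=|N(\cdot)|-1$; hence $u$ is stable for $x$ under $F$ if and only if $u$ is stable for $x$ under $\hat F$, and the algorithm calls the \NC\ procedure for \AsyncStability\ on $\hat F$ supplied by Lemma~\ref{lem:monotoneasync} and returns its answer.

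Correctness is immediate from Lemma~\ref{lem:complexitysinmax} together with the direct treatment of the case $s=|N(\cdot)|$. For the resources, locating the neighbours of $u$ and evaluating $s$ (together with the test $s=|N(\cdot)|$) take $\cO(\log n)$ time with $\cO(n)$ processors, and the subsequent call to the \NC\ algorithm for $\hat F$ contributes only polylogarithmic time and polynomially many processors; composing the two stays within \NC. Combined with Lemma~\ref{lem:monotoneasync} for $T23$, $S24$, $S34$, this shows \AsyncStability\ is in \NC\ for every monotone-like rule.

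The only delicate point --- more bookkeeping than genuine difficulty, since Lemma~\ref{lem:complexitysinmax} already carries the real content --- is the neighbourhood configuration $s=|N(\cdot)|$ that the reduction deliberately excludes ($s=3$ for $T22$, $s=4$ for $S23$ and $S33$): there $F$ and $\hat F$ really do disagree, because $u$ stays stable under $F$ but is iterated at the first update under $\hat F$, so stability for $F$ must be argued by hand, as above. It is also worth double-checking the two structural facts the reduction relies on --- that $\hat F$ is monotone and coincides with one of $T23$, $S24$, $S34$, and that $|N(\cdot)|-1$ indeed lies in $\mathcal{I}_F$ for each of $T22$, $S23$, $S33$ --- since these are precisely what make Lemma~\ref{lem:complexitysinmax} and Lemma~\ref{lem:monotoneasync} applicable.
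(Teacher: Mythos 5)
Your proof is correct and follows essentially the same route as the paper: the genuinely monotone rules $T23$, $S24$, $S34$ are dispatched by Lemma~\ref{lem:monotoneasync}, and $T22$, $S23$, $S33$ are reduced to them via Lemma~\ref{lem:complexitysinmax}, with the full-neighbourhood case $\sum_{w\in N(u)}x_w=|N(\cdot)|$ handled separately by observing that $u$ is then stable. The only (immaterial) difference is the order of the two tests — you inspect the neighbourhood of $u$ before invoking the monotone \NC\ subroutine, while the paper does it afterwards.
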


\begin{proof}

Let us suppose first that $F$ is rule $T23$, $S24$ or $S34$. Then, the statement of the theorem holds from Lemma \ref{lem:monotoneasync}. 

Now suppose that $F$ is rule $T22$ (respectively $S23$ or $S33$). Let $(x,u)$ be an instance of \AsyncStability, i.e. $x$ is a configuration of active and inactive cells and $u$ is the decision cell. Lemma \ref{lem:complexitysinmax} indicates that on the condition that $\sum_{w\in N(u)}x_w \leq 2$ (respectively $\sum_{w\in N(u)}x_w \leq 2$ or $\sum_{w\in N(u)}x_w \leq 3$), then the complexity of the problem \AsyncStability for rule $T22$ (respectively $S23$ or $S33$) coincides with the complexity of \AsyncStability for rule $T23$ (respectively $S24$ or $S34$). 
Moreover, if previous condition is not satisfied, i.e.   $\sum_{w\in N(u)}x_w > 2$ (respectively $\sum_{w\in N(u)}x_w > 3$ for $S23$ or $S33$) then $u$ is stable for rule $F$.

Therefore, the algorithm solving \AsyncStability for rule $T22$ (respectively $S23$ or $S33$) consists in the following steps: First, it computes the answer of \AsyncStability for rule $T23$ (respectively $S24$, $S34$) with the \NC algorithm given by Lemma \ref{lem:monotoneasync}. If the answer is that $u$ is stable, then the algorithm rejects. Otherwise, the algorithm checks whether $\sum_{w\in N(u)}x_w > 2$ (respectively $\sum_{w\in N(u)}x_w > 3$ for rules $S24$ or $S34$). If the answer is yes, the algorithm rejects, otherwise it accepts. Clearly previous algorithm shows that for all these rules \AsyncStability is in \NC, because checking the condition $\sum_{w\in N(u)}x_w > 2$ (respectively $\sum_{w\in N(u)}x_w > 3$ for rules $S24$ or $S34$) can be done in $\cO(\log n)$ time on a sequential machine.   
\end{proof}

\section{Lower-Bounds}\label{sec:complexityofruleS22}

In last section we had studied all LFCA defined over a triangular grid, showing that for all of them problem \AsyncStability is in \NC. Moreover, we obtained a similar result regarding the square grid, except for one rule, namely rule $S22$. We remember that in this rule inactive cells become active only if they have exactly two active neighbors.  In this section, we show that for this rule,  \AsyncStability is \NPC. In order to achieve this task, we reduce  \textsc{Boolean Satisfiability (SAT)} to this problem. 

A \emph{Boolean variable} is a variable $x \in \{0,1\}$ (it can be either \emph{true} or \emph{false}). An $n$-variable \emph{Boolean formula} $\phi: \{0,1\}^n \to \{0,1\}$ is a function  defined over a set of Boolean variables $x_1, \dots, x_n$, and which value is computed as a combination of conjunctions ($\wedge$), disjunctions ($\vee$) and negation ($\neg$) of the variables.  A Boolean formula is called \emph{satisfiable} if there exists a truth-assignment of the variables of $\phi$, i.e. if there exists $u \in \{0,1\}^n$ such that $\phi(u)=1$. Problem \SAT consists in deciding whether a given Boolean formula $\phi$ on $n$ variables is satisfiable.  The famous Cook-Levin Theorem states that \SAT is \NPC   \cite{arora2009computational}. 

A \emph{Boolean circuit} $C$ is a labeled directed acyclic graph where each node has in-degree (fan-in) at most $2$. Every other node is called \emph{gate} and it is labelled with one Boolean operator: $\wedge$ (conjunction), $\vee$ (disjunction) or $\neg$ (negation). Nodes labeled $\neg$ have in-degree $1$. The gates with in-degree are called \emph{inputs} and vertices with out-degree $0$ are called \emph{outputs}. All other node has in-degree and out-degree $2$. Every node has a unique Boolean value assigned by the evaluation of its correspondent boolean operator in the values given by its incoming neighbors. Thus, the circuit is evaluated by layers, defined as the distance of a gate to an input node. Problem \CSAT is the problem of, given a Boolean circuit and an output gate $g$, decide if there exists a truth-assignment of the inputs of $G$ for which $g$ is satisfiable (evaluated \emph{true}). Clearly every Boolean formula can be represented as a Boolean circuit, where there is only one output gate. Therefore \CSAT is also \NPC. 

Our reduction of \SAT to \AsyncStability for rule $S22$ consists in, roughly, a representation of an arbitrary Boolean Circuit $C$ into a pattern of a two dimensional grid.  This representation considers that each gate is associated with specific cell of the grid, in a way that, this specific cell is unstable if and only if the corresponding gate is satisfiable. In the following, we give some intermediate results that allow to build this representation.

\subsection{Grid-embedded Boolean Circuits}

Our simulation framework considers a specific representation of a Boolean circuit, that we call \emph{south-east-output, north-west input grid-embedded Boolean circuits}, or simply \emph{grid-embedded Boolean circuits}. This representation is inspired in another one given by Goldschlager in \cite{Goldschlager1977}.  A grid-embedded Boolean circuit is a Boolean circuit embedded in a two-dimensional square grid. Each cell of the grid contains an operator, that we call in this context a \emph{block}, representing the parts of the circuit (i.e. gates and edges).  Each cell of the grid is occupied by some block, which communicates with its four neighbors (north, east, south, west), in such a way that each one implements a Boolean function $g$ with 2 inputs and and 2 outputs. Each block receives inputs $s_N$ and $s_W$ from the north and west blocks, and  sends signals $g_E(s_N, s_W)$ and $g_S(s_N,s_W)$ to the east and south blocks. 
 
In order to define a grid-embedded circuit, we first need to define a \emph{directed acyclic square grid}. A directed acyclic square grid is a directed graph on $n^2$ vertices embedded in the two-dimensional grid $G(n)$, where each cell contains a vertex.  Let us fix coordinates $(0,0)$ on an arbitrary cell. The incoming neighbors of vertex placed in cell $(i,j)$ are the vertices placed in cells $(i-1,j)$ and $(i,j-1)$, and the outgoing neighbors of $(i,j)$ are the vertices placed in cells $(i+1,j)$ and $(i,j+1)$ (all values are computed modulo $n$, remember that $G(n)$ is defined as a torus).  Hence, each vertex has in-degree  $2$ and out-degree $2$.  We consider that each vertex is labeled with one Boolean operator, called \emph{block}. Let us define the following blocks:
\begin{enumerate}
\item \emph{Conjunction block}: it is represented by the symbol $\wedge$ and it outputs the conjunction of the two signals through the east and south sides. Formally, for this gate $g^{}_E(s_N, s_W) = g^{}_S(s_N, s_W) = s_N \wedge s_W$.

\begin{center}
  \begin{tikzpicture}[scale=0.8,every node/.style={draw,scale=0.8}]  
    \node[very thick, circle] (v) at (0,0){$\wedge$};
    \draw[thick, ->] (0,1.5) --node[draw=none,right, midway](){$a$} (0,0.75); 
    \draw[thick, ->] (-1.5,0) --node[draw=none,above, midway](){$b$} (-0.75,0) ;
    \draw[thick, ->] (0.75,0) --node[draw=none, above, midway](){$a\wedge b$} (1.5,0);
    \draw[thick, ->] (0,-0.75) --node[draw=none,right, midway](){$a\wedge b$} (0,-1.5);
  \end{tikzpicture}
\end{center}

\item \emph{Disjunciton block}: it is represented by the symbol $\vee$ and it outputs the disjunction of the two signals through the east and south sides. Formally, for this gate $g_E(s_N, s_W) = g_S(s_N, s_W) = s_N \vee s_W$.

\begin{center}
  \begin{tikzpicture}[scale=0.8,every node/.style={draw,scale=0.9}]    
    \node[very thick, circle] (v) at (0,0){$\vee$};  
     \draw[thick, ->] (0,1.5) --node[draw=none,right, midway](){$a$} (0,0.75); 
    \draw[thick, ->] (-1.5,0) --node[draw=none,above, midway](){$b$} (-0.75,0);
    \draw[thick, ->] (0.75,0) --node[draw=none, above, midway](){$a\vee b$} (1.5,0);
    \draw[thick, ->] (0,-0.75) --node[draw=none,right, midway](){$a\vee b$} (0,-1.5);
  \end{tikzpicture}
\end{center}

\item\emph{Crossing block}: it is represented by the letter $C$. This gadget crosses the north input through the south output, and the west input through the east output. Formally, for this gate $g^{}_E(s_N, s_W)  = s_W$ and  $g^{}_S(s_N, s_W) = s_N$.
\begin{center}
\begin{tikzpicture}[scale=0.8,every node/.style={draw,scale=0.9}]  
  \node[very thick, circle] (v) at (0,0){$C$};  
  \draw[thick, ->] (-1.5,0) --node[draw=none,above, midway](){$b$} (-0.75,0) ;
  \draw[thick, ->] (0.75,0) --node[draw=none,above, midway](){$b$} (1.5,0);
  \draw[thick, ->] (0,1.5) --node[draw=none,right, midway](){$a$} (0,0.75);
  \draw[thick, ->] (0,-0.75) --node[draw=none,right, midway](){$a$} (0,-1.5);
\end{tikzpicture}
\end{center}

\item {\it Fixed-value block}: it is represented by a $0$. This block   ignores the values of its input (the associated functions $g$ are constant functions), and outputs a fixed value \emph{false} through the east and south blocks.

\begin{center}
\begin{tikzpicture}[scale=0.8,every node/.style={circle, draw,scale=0.9}]
  \node[very thick] (v) at (0,0){$0$};  
  \draw[thick, |-] (-1.5,0) --node[draw=none,above, midway](){} (-0.75,0) ;
  \draw[thick, ->] (0.75,0) --node[draw=none,above, midway](){$0$} (1.5,0);
  \draw[thick, |-] (0,1.5) --node[draw=none,right, midway](){} (0,0.75);
  \draw[thick, ->] (0,-0.75) --node[draw=none,right, midway](){$0$} (0,-1.5);
\end{tikzpicture}
\end{center}

\medskip 

\item \emph{Signal-multiplier block}: this gadget is represented by a symbol $M$, which only reads one input,  either from the north or east side, and outputs that value through the east and south sides.  Formally, for this gate $g_E(s_N, s_W)  = g_S(s_N, s_W) = s_N$ (if the block is a north-signal multiplier) or $g_E(s_N, s_W)  = g_S(s_N, s_W) = s_W$ (if the block is a west-signal multiplier).

\begin{center}
\begin{tikzpicture}[scale=0.8,every node/.style={circle, draw,scale=0.9}]
  
  \node[very thick] (v) at (-2,0){$M$};
  
  \draw[thick, ->] (-3.5,0) --node[draw=none,above, midway](){$a$} (-2.75,0);
  \draw[thick, ->] (-1.25,0) --node[draw=none,above, midway](){$a$} (-0.5,0);
 \phantom{ \draw[thick, ->] (0,1.5) -- (0,0.75);}
  \draw[thick, ->] (-2,-0.75) --node[draw=none,right, midway](){$a$}  (-2,-1.5);
    \draw[thick, |-] (-2,1.5) -- (-2,0.75);
    
   \node[very thick] (v) at (4,0){$M$};
  
   \draw[thick, ->] (4,1.5) --node[draw=none,right, midway](){$a$} (4,0.75);
  \draw[thick, ->] (4.75,0) --node[draw=none,above, midway](){$a$} (5.5,0);
 \draw[thick, |-] (2.5,0) -- (3.25,0);
  \draw[thick, ->] (4,-0.75) --node[draw=none,right, midway](){$a$}  (4,-1.5);
\end{tikzpicture}
\end{center}

\item \emph{Selector block}: this gadget is represented by the symbol $S$. It is a block that receives no input (the functions $g$ are constant functions), and outputs a truth value $x$ through the east side and $\neg x$ through the south side. Formally, for this gate $g_E(s_N, s_W)  = 1$ and $g_S(s_N, s_W)  = 0$ (if the block is a east-selector) or $g_E(s_N, s_W)  = 0$ and $g_S(s_N, s_W) = 1$ (if the block is a south-selector).

\begin{center}
\begin{tikzpicture}
  [scale=0.8,every node/.style={circle, draw,scale=0.9}]
  
   \node[very thick] (v) at (-2,0){$S$};
  \draw[thick, |-] (-3.5,0) --node[draw=none,above, midway](){} (-2.75,0);
  \draw[thick, ->] (-1.25,0) --node[draw=none,above, midway](){$1$} (-0.5,0);
 \phantom{ \draw[thick, ->] (0,1.5) -- (0,0.75);}
  \draw[thick, ->] (-2,-0.75) --node[draw=none,right, midway](){$0$}  (-2,-1.5);
    \draw[thick, |-] (-2,1.5) -- (-2,0.75);
    
   \node[very thick] (v) at (4,0){$S$};
  
   \draw[thick, |-] (4,1.5) --node[draw=none,right, midway](){} (4,0.75);
  \draw[thick, ->] (4.75,0) --node[draw=none,above, midway](){$0$} (5.5,0);
 \draw[thick, |-] (2.5,0) -- (3.25,0);
  \draw[thick, ->] (4,-0.75) --node[draw=none,right, midway](){$1$}  (4,-1.5);
\end{tikzpicture}
\end{center}

\end{enumerate}

A \emph{grid-embedded Boolean circuit} $C$ is a labeled directed acyclic square grid $G(n)$ where each  vertex is labeled as some block, where:
\begin{itemize}
\item[(1)] each vertex placed in with coordinates $(i,j)$ such that $ij = 0$ (i.e. either $i$ or $j$ are $0$) is labeled as a fixed-value block. 
\item[(2)] and every other vertex is labeled as a one of the blocks defined latter: conjunction, disjunction, crossing, fixed-value, signal multiplier or selector block.  
\end{itemize}
Observe that condition (1) (and the fact that $G(n)$ is a torus) implies that the perimeter of the grid $G(n)$ consist in fixed-value blocks.  Let $v_1, \dots, v_s$ be the selector blocks of $C$. A truth-assignment of $C$ is defined as a vector $u \in \{0,1\}^s$ such that, if $u_i = 1$ then $v_i$ is defined as an east-selector block, and if $u_i=0$ then $v_i$ is defined as a south-selector block. A truth-assignment defines a set of \emph{values} of each vertex of $C$, called $C(u)$, which is computed evaluating the function of each block, in the following order: starting from $k=0$, compute the outputs of all cells $(i,j)$ such that $i+j = k$. Then, sequentially repeat the procedure for $k=1, \dots, 2(n-1)$ (in other words, we evaluate vertex in cell $(0,0)$, then $(1,0)$ and $(0,1)$, then $(2,0), (1,1), (0,2)$, and so on until reaching cell in coordinates $(n-1,n-1)$. For a vertex $v$, we call $C(u)_v$ the value of $v$ on truth-assignment $u$. We say that $v$ is satisfiable if there exists a truth-assignment $u$ of $C$ such that $C(u)_v=1$. Note that, $C$ represents a boolean function $\{0,1\}^s \to \{0,1\}^{n^2}$.

Given grid-embedded Boolean circuit $C$ and a vertex $v$ of $C$, the problem \textsc{Grid-Embedded Circuit Satisfiability (\GCSAT)} consists in deciding whether there exist a truth-assignment $u$ of $C$ such that $C(u)_v = 1$. Note that \GCSAT can be verified in polynomial time (the certificate is simply a truth-assignment of $C$ satisfying $v$), thus the problem is in \NP. The following result shows that this problem is as hard as \SAT. As a consequence of this together with the latter observation we have that \GCSAT is \NPC.

\begin{theorem}
\textsc{Grid-Embedded Circuit Satisfiability} is \NPC.
\label{teo:GCSATNP}
\end{theorem}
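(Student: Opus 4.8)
The plan is to prove the theorem by combining two ingredients: membership in \NP, which was already observed (a satisfying truth-assignment is a polynomial-size certificate, checkable by evaluating the circuit wave by wave), and \NP-hardness, which I would obtain by a polynomial-time reduction from \CSAT (shown \NPC earlier in this section; equivalently one could reduce from \SAT). So the whole content is to transform an arbitrary Boolean circuit $C$ with a designated output gate $g$ into a grid-embedded Boolean circuit $C'$ together with a cell $v$, in polynomial time, such that $g$ is satisfiable in $C$ if and only if $v$ is satisfiable in $C'$.

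First I would normalize $C$. Using De Morgan's laws I push all negations down to the inputs, so that internally $C$ uses only fan-in-$2$ $\wedge$- and $\vee$-gates while each leaf becomes a literal $x_i$ or $\neg x_i$; this costs only a constant factor in size. Next I bound the fan-out: a gate (or literal) feeding $d$ successors is replaced by that gate feeding a balanced binary tree of signal-multiplier blocks, so every node has out-degree at most $2$. Finally I level the circuit — each gate gets as level the longest path from an input — and I subdivide any wire that jumps more than one level by inserting pass-through multiplier blocks on the intervening levels, so that afterwards every wire connects consecutive levels only.

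The core step is then to embed this leveled circuit into a directed acyclic square grid $G(N)$ with $N=\mathrm{poly}(|C|)$, respecting the anti-diagonal evaluation order $i+j=0,1,2,\dots$ that appears in the definition of $C'(u)$: since every edge of the grid increments $i+j$ by exactly one, the grid is acyclic and the wave-by-wave evaluation is well defined. The perimeter cells (those with $ij=0$) are fixed-value blocks, so, using that $G(N)$ is a torus, any signal that would leave the active region is harmlessly absorbed. Each variable $x_i$ is realized by a selector block, whose east and south outputs carry $x_i$ and $\neg x_i$; the choice of east- versus south-selector is precisely the truth-assignment $u$, so selector settings of $C'$ correspond bijectively to variable assignments of $C$. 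From the selectors the literals are distributed through the multiplier trees; the $\wedge$- and $\vee$-gates of each level are placed along an anti-diagonal band; and between two consecutive levels a rectangular band of crossing blocks routes the outputs of one level to the north/west inputs of the next — any permutation of $k$ wires is realized by an $O(k)\times O(k)$ odd–even transposition (brick-wall) network, and a crossing block is exactly such an adjacent transposition. All remaining cells are filled with fixed-value blocks. One must check that every live signal entering a block is either consumed by the gate it is meant for, forwarded by a crossing, or ignored (fixed-value blocks ignore both inputs, a multiplier ignores the side it does not read), and the layout is arranged so this always holds. Taking $v$ to be the cell carrying the output wire of $g$, an induction on levels following the anti-diagonal order gives $C'(u)_v=1$ iff $g$ evaluates to true under $u$; hence $(C',v)$ is a yes-instance of \GCSAT iff $(C,g)$ is a yes-instance of \CSAT. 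Since the transformation is plainly polynomial-time (in fact logarithmic-space) computable, \GCSAT is \NP-hard, and being in \NP it is \NPC.

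The main obstacle is exactly this layout/routing step: proving that a polynomial-size grid suffices, that an arbitrary wiring between consecutive levels can be produced purely from crossing and multiplier blocks, and — the most error-prone point — that the ``garbage'' generated by multiplier blocks and by the second output of every block is always routed into fixed-value cells, so that no conjunction or disjunction block is ever fed a spurious signal. Fitting the variable selectors, fan-out trees, level subdivisions, and routing bands together into one consistent grid is the real work; once the embedding is described precisely, the correctness statement is a routine induction along the evaluation order.
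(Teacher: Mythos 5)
Your plan assembles exactly the same ingredients as the paper's proof: membership in \NP{} via the truth-assignment certificate, and \NP-hardness by normalizing a Boolean circuit (negations pushed to the first layer, fan-in and fan-out bounded by $2$ using signal multipliers), realizing each variable by a selector whose east/south outputs carry the two literals, routing with crossing and multiplier blocks, padding with fixed-value blocks, and concluding by induction along the evaluation order. The genuine difference is the layout. Where you level the circuit and propose permutation-routing bands between consecutive anti-diagonal layers, the paper sidesteps routing almost entirely: it numbers the gates topologically and places gate $g$ at cell $(g,g)$ on the main diagonal of $G(n+1)$. A signal from $g_1$ to $g$ (with $g_1<g$) then travels along row $g_1$ or column $g_1$ until it reaches row or column $g$, turning at a single signal-multiplier cell, with every intermediate cell a crossing block; the topological order guarantees every required connection is monotone (south-east), so no permutation network is ever needed. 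This diagonal placement is what makes the ``real work'' you correctly identify essentially disappear.

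Two points in your routing sketch would need repair if you carried it out. First, a crossing block is not an adjacent transposition: it sends its north input south and its west input east, so a band consisting only of crossing blocks can never change which column (or row) a signal occupies. To permute wires you need signal-multiplier blocks to turn a vertical wire horizontal and back, giving each wire one horizontal segment at a distinct row with crossings at the intersections. Second, the grid is directed south-east only, so a permutation between two bands is realizable only when every target lies weakly south-east of its source; the band must be made wide enough (width at least the number of wires) for this monotonicity to hold. Neither issue is fatal, and both evaporate under the paper's diagonal layout, but as stated your reduction is not yet fully specified at precisely the step you flag as most error-prone.
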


\begin{proof}
   We reduce \SAT to \GCSAT. Let $\varphi$ be an instance of \SAT, i.e. $\varphi$ is a CNF formula on $n$ variables. We represent $\varphi$ as a Boolean circuit $C$, where all negations belong to the first layer, i.e. their incoming neighbors are input gates, and every gate has in-degree at most $2$.   In the literature $C$ is called a \emph{normalized circuit}, and observe that $C$ can be computed in polynomial time in the size of $\varphi$   \cite{arora2009computational}.

Without loss of generality, we assume that $C$  satisfies that: 
\begin{itemize}
\item[(1)] every gate that is not an output has out-degree $2$,
\item[(2)] each input gate is connected to exactly one negation gate, 
\item[(3)] each gate that is in the first layer (i.e. has an input gate as incoming neighbor) has in-degree $1$.

\end{itemize}
Indeed,  we can make assumptions (1), (2) and (3) by observing that we can simulate a \emph{signal multiplier} using a $\vee$ gate with only one input. Now suppose that there exists a gate $g$ with out-degree $d$ greater than $2$. We can replace that gate by a directed tree of $\cO(\log(d))$ signal multipliers, all of out-degree $2$. In the same way assume that each input gate has out-degree at most $2$ and it is connected to at most one negation. If an output gate has degree fewer than $2$, we add a new dummy output gate, labeled $\vee$, if the other neighbor is a negation, or $\neg$ otherwise. Finally, if a gate of the first layer has in-degree $2$, replace it by a signal multiplier. 

Let $n$ be the number of gates of $C$.  We define a grid-embedded circuit $\tilde{C}$ over a directed acyclic square grid $G(n+1)$ that simulates $C$.  This simulation satisfies that: \begin{enumerate}
  \item $\tilde{C}$ can be computed in polynomial time given a representation of $C$ (which in turn represents $\varphi$).
  \item there exists a vertex $v$ of $\tilde{C}$ such that $\varphi$ is satisfiable if and only if $v$ is satisfiable. 
\end{enumerate}

In order to build $\tilde{C}$, we have to define which block we are going to assign to each vertex. The algorithm starts assigning unique identifiers in $\{1, \dots, n\}$ to the gates of $C$ in concordance to a lexicographic order, i.e. if gate $g_1$ is an incoming neighbor of gate $g_2$, then the identifier of $g_1$ is strictly smaller than the numbering of $g_2$. In the following we do not distinguish a gate from its identifier.  Given this latter enumeration for each gate, we assign to each of them a unique position in the two-dimensional integer lattice: for each gate $g \in C$ we assign to it a cell in the grid with coordinates $(g,g)$. In addition, we assign label to each cell in order to define the gates of $\tilde{C}$:
\begin{itemize}
  \item For each gate $g \in C$ that is not an input nor a negation gate label it by a conjunction or disjunction gadget (block), depending on the type of gate $g$ in the circuit $C$.
  \item  If $g$ is a negation gate, label it as a disjunction block.
  \item  Finally, if $g$ is an input gate, label it as selector block. 
\end{itemize}
 In all cases, if $v$ is the vertex of $\tilde{C}$ in $(g,g)$, we say that $v$ \emph{represents} $g$, and we denote it by $v(g)$. 

In the vertices of the top row, that is, vertices in cells $(i,0)$ for $i=0,\dots,n$, assign a fixed-value block with value $0$. In the leftmost column, that is, vertices in cells with coordinates $(0,i)$ for $i=1,\dots,n$ assign a fixed-value block with value $0$, except in rows $j \in \{1, \dots n\}$  where gate numbered $j$ is a $\wedge$ gate with in-degree $1$. In those cases, label the vertex in cell $(0,j)$ with a fixed-value block in with value $1$. We remark that it is also possible to replace the latter gate $\wedge$ with a gate $\vee$ without loss of generality as it has only one input (see Figure \ref{fig:ejemploreduccionGCSAT}).

Let $g$ be a block in the first layer of $C$. Remember that we assume that $g$ has in-degree $1$. Call $g_1$ the incoming neighbor of $g$ (which is a input gate since). If $g$ is a negation gate, then we define the block of $v(g)$ in $\tilde{C}$ in way such it receives the signal from $g_1$ through the west block. In order to achieve that, we assign to cell $(g, g_1)$ a north-signal multiplier. On the other hand, if $g$ is not a negation gate, we define the block of $v(g)$ in a way such that it receives a signal from $g_1$ through the north block. Therefore, in this case we assign to cell $(g_1,g)$ an east-signal multiplier. 

 Now suppose that $g$ is not in the first layer, nor an input gate. Suppose that the in-degree of $g$ equals $1$. Then, we repeat the latter assignation so that the incoming input signal of the block of $v(g)$ comes through the north cell. Therefore, if $g_1$ is the incoming neighbor of $g$, we label the vertex in cell $(g_1, g)$ as a west-signal multiplier. Suppose now that $g$ has in-degree $2$, and let $g_1$ and $g_2$ be its two inputs, with $g_1<g_2$. Then, in $\tilde{C}$, we consider that the signal sent from $g_1$ will arrive to the block assigned to $v(g)$ through north input, while the signal sent from $g_2$ will be received through the west input. Therefore, we label the vertex in cell $(g_1, g)$ by a west-signal multiplier block, and the vertex in cell $(g, g_2)$ by a north-signal multiplier block. 

Finally, every block that remains unlabeled is labeled  as a crossing block.

 In Figure \ref{fig:ejemploreduccionGCSAT} is an example of this construction. 

 Let $g$ be a gate of $C$ . We now show that $g$ is satisfiable in $C$ if and only if $v_g$ is satisfiable in $\tilde{G}$. Let $u$ be a  truth-assignment of the inputs of $C$ and consider the truth-assignment $\tilde{u}$ of the selectors of $\tilde{C}$ such that, if $g$ is an input gate of $C$, then $\tilde{u}_{v(g)}=u_g$.  We claim that, for every gate $g$ of $C$, $C(u)_g = \tilde{C}(\tilde{u})_{v(g)}$.
Indeed, let $g$ be a gate and let $v_g$ be the corresponding vertex of $\tilde{C}$, which is placed in cell $(g,g)$. 

Suppose that $g$ has in-degree $2$, and call $g_1$ and $g_2$ the two inputs of $g$ with $g_1<g_2$.  Observe that the construction of $\tilde{C}$ defines that all cells in coordinates $\{(g_1,k) : k \in \{g_1+1, \dots, g-1\}$ are either crossing or west-signal multiplier blocks. Similarly, all vertices in cells $\{(k, g) : k \in \{g_1+1, \dots, g-1\}$ are either crossing or north-signal multiplier blocks. Finally, the cell in coordinates $(g_1, g)$ is a west-signal multiplier block. Then, by definition of the blocks, the east output of $v(g_1)$ reaches the north input of $v(g)$.  Similarly, the south output of $v(g_2)$ reaches the west input of $v(g)$.  Therefore the block $v(g)$ outputs the same value than $g$ through both of its outputs. 

Suppose now that $g$ is a gate in the first layer of $C$. Call $g_1$ the incoming neighbor of $g$. If $g$ is a negation gate, it receives the signal from $g_1$ through its west input. With an analogous argument than the one that we discussed before, the south output of the block in $v(g_1)$ reaches the west input of the block of $v(g)$. On the other hand, all cells in $\{(k, g) : k \in \{1, \dots, g-1\}$ are crossing gadgets. Since the vertex in $(0,g)$ is labeled with a  fixed-value block, there is a \emph{false} signal that reaches the north input of $v(g)$. Therefore, the output $v(g)$ will be exactly the value that $v(g_1)$ sends through its south output. Since $v(g_1)$ is a selector, this value will be \emph{false} if $\tilde{u}_{v(g_1)} = 1$ and \emph{true} otherwise. Therefore, $v(g)$ outputs the negation of $g_1$.  An analogous argument follows for all gates with in-degree $1$, including gates in the first layer that are no negations. We deduce that for every gate $g$ of $C$, $C(u)_g = \tilde{C}(\tilde{u})_{v(g)}$. Therefore, a gate $g$ of $C$ is satisfiable if and only if $v(g)$ is satisfiable in $\tilde{C}$. We deduce that \GCSAT is \NP-hard. As we have already observed that \GCSAT is in \NP, we conclude that this problem is \NPC.\\
\end{proof}

\begin{figure}[h!]
\begin{center}

    \begin{minipage}{.45\linewidth}
      \tikzset{every picture/.style={line width=0.75pt}} 
      
      \begin{tikzpicture}[x=0.5pt,y=0.5pt,yscale=-1,xscale=1]
      
      \draw   (369.4,167.74) .. controls (369.4,160.16) and (376.17,154.03) .. (384.52,154.03) .. controls (392.87,154.03) and (399.64,160.16) .. (399.64,167.74) .. controls (399.64,175.31) and (392.87,181.44) .. (384.52,181.44) .. controls (376.17,181.44) and (369.4,175.31) .. (369.4,167.74) -- cycle ;
      \draw    (248.77,52.9) -- (220.42,91.26) ;
      \draw [shift={(219.23,92.87)}, rotate = 306.48] [fill={rgb, 255:red, 0; green, 0; blue, 0 }  ][line width=0.75]  [draw opacity=0] (8.93,-4.29) -- (0,0) -- (8.93,4.29) -- cycle    ;
      
      \draw    (248.77,52.9) -- (275.33,91.85) ;
      \draw [shift={(276.46,93.5)}, rotate = 235.71] [fill={rgb, 255:red, 0; green, 0; blue, 0 }  ][line width=0.75]  [draw opacity=0] (8.93,-4.29) -- (0,0) -- (8.93,4.29) -- cycle    ;
      
      \draw   (233.3,166.47) .. controls (233.3,158.9) and (240.07,152.76) .. (248.42,152.76) .. controls (256.78,152.76) and (263.55,158.9) .. (263.55,166.47) .. controls (263.55,174.04) and (256.78,180.18) .. (248.42,180.18) .. controls (240.07,180.18) and (233.3,174.04) .. (233.3,166.47) -- cycle ;
      
      \draw    (384.87,52.9) -- (356.51,91.26) ;
      \draw [shift={(355.32,92.87)}, rotate = 306.48] [fill={rgb, 255:red, 0; green, 0; blue, 0 }  ][line width=0.75]  [draw opacity=0] (8.93,-4.29) -- (0,0) -- (8.93,4.29) -- cycle    ;
      
      \draw    (384.87,52.9) -- (411.42,91.85) ;
      \draw [shift={(412.55,93.5)}, rotate = 235.71] [fill={rgb, 255:red, 0; green, 0; blue, 0 }  ][line width=0.75]  [draw opacity=0] (8.93,-4.29) -- (0,0) -- (8.93,4.29) -- cycle    ;
      
      \draw    (217.19,107.16) -- (240.61,150.58) ;
      \draw [shift={(241.56,152.34)}, rotate = 241.66] [fill={rgb, 255:red, 0; green, 0; blue, 0 }  ][line width=0.75]  [draw opacity=0] (8.93,-4.29) -- (0,0) -- (8.93,4.29) -- cycle    ;
      
      \draw    (281.92,107.63) -- (255.19,150.02) ;
      \draw [shift={(254.12,151.71)}, rotate = 302.24] [fill={rgb, 255:red, 0; green, 0; blue, 0 }  ][line width=0.75]  [draw opacity=0] (8.93,-4.29) -- (0,0) -- (8.93,4.29) -- cycle    ;
      
      \draw  [fill={rgb, 255:red, 255; green, 255; blue, 255 }  ,fill opacity=1 ] (202.59,107.16) .. controls (202.59,99.85) and (209.13,93.92) .. (217.19,93.92) .. controls (225.25,93.92) and (231.79,99.85) .. (231.79,107.16) .. controls (231.79,114.47) and (225.25,120.39) .. (217.19,120.39) .. controls (209.13,120.39) and (202.59,114.47) .. (202.59,107.16) -- cycle ;
      
      \draw    (353.98,108.42) -- (377.4,151.85) ;
      \draw [shift={(378.35,153.61)}, rotate = 241.66] [fill={rgb, 255:red, 0; green, 0; blue, 0 }  ][line width=0.75]  [draw opacity=0] (8.93,-4.29) -- (0,0) -- (8.93,4.29) -- cycle    ;
      
      \draw    (418.72,108.9) -- (391.98,151.28) ;
      \draw [shift={(390.92,152.97)}, rotate = 302.24] [fill={rgb, 255:red, 0; green, 0; blue, 0 }  ][line width=0.75]  [draw opacity=0] (8.93,-4.29) -- (0,0) -- (8.93,4.29) -- cycle    ;
      
      \draw  [fill={rgb, 255:red, 255; green, 255; blue, 255 }  ,fill opacity=1 ] (402.9,107.63) .. controls (402.9,100.06) and (409.67,93.92) .. (418.02,93.92) .. controls (426.37,93.92) and (433.14,100.06) .. (433.14,107.63) .. controls (433.14,115.2) and (426.37,121.34) .. (418.02,121.34) .. controls (409.67,121.34) and (402.9,115.2) .. (402.9,107.63) -- cycle ;
      
      \draw    (281.92,107.63) -- (308.26,153.14) ;
      \draw [shift={(309.26,154.87)}, rotate = 239.94] [fill={rgb, 255:red, 0; green, 0; blue, 0 }  ][line width=0.75]  [draw opacity=0] (8.93,-4.29) -- (0,0) -- (8.93,4.29) -- cycle    ;
      
      \draw  [fill={rgb, 255:red, 255; green, 255; blue, 255 }  ,fill opacity=1 ] (266.8,107.63) .. controls (266.8,100.06) and (273.57,93.92) .. (281.92,93.92) .. controls (290.28,93.92) and (297.05,100.06) .. (297.05,107.63) .. controls (297.05,115.2) and (290.28,121.34) .. (281.92,121.34) .. controls (273.57,121.34) and (266.8,115.2) .. (266.8,107.63) -- cycle ;
      
      \draw    (349.1,110.32) -- (326.18,147.91) -- (322.5,152.8) ;
      \draw [shift={(321.3,154.4)}, rotate = 307] [fill={rgb, 255:red, 0; green, 0; blue, 0 }  ][line width=0.75]  [draw opacity=0] (8.93,-4.29) -- (0,0) -- (8.93,4.29) -- cycle    ;
      
      \draw  [fill={rgb, 255:red, 255; green, 255; blue, 255 }  ,fill opacity=1 ] (338.69,107.16) .. controls (338.69,99.85) and (345.22,93.92) .. (353.29,93.92) .. controls (361.35,93.92) and (367.88,99.85) .. (367.88,107.16) .. controls (367.88,114.47) and (361.35,120.39) .. (353.29,120.39) .. controls (345.22,120.39) and (338.69,114.47) .. (338.69,107.16) -- cycle ;
      
      \draw   (300.3,169.63) .. controls (300.3,162.06) and (307.07,155.93) .. (315.42,155.93) .. controls (323.78,155.93) and (330.55,162.06) .. (330.55,169.63) .. controls (330.55,177.2) and (323.78,183.34) .. (315.42,183.34) .. controls (307.07,183.34) and (300.3,177.2) .. (300.3,169.63) -- cycle ;
      \draw  [fill={rgb, 255:red, 0; green, 0; blue, 0 }  ,fill opacity=1 ] (243.22,52.9) .. controls (243.22,50.12) and (245.71,47.87) .. (248.77,47.87) .. controls (251.84,47.87) and (254.33,50.12) .. (254.33,52.9) .. controls (254.33,55.69) and (251.84,57.94) .. (248.77,57.94) .. controls (245.71,57.94) and (243.22,55.69) .. (243.22,52.9) -- cycle ;
      \draw  [fill={rgb, 255:red, 0; green, 0; blue, 0 }  ,fill opacity=1 ] (379.31,52.9) .. controls (379.31,50.12) and (381.8,47.87) .. (384.87,47.87) .. controls (387.93,47.87) and (390.42,50.12) .. (390.42,52.9) .. controls (390.42,55.69) and (387.93,57.94) .. (384.87,57.94) .. controls (381.8,57.94) and (379.31,55.69) .. (379.31,52.9) -- cycle ;
      \draw  [fill={rgb, 255:red, 255; green, 255; blue, 255 }  ,fill opacity=1 ] (240.64,22.89) -- (259.93,22.89) -- (259.93,41.19) -- (240.64,41.19) -- cycle ;
      \draw  [fill={rgb, 255:red, 255; green, 255; blue, 255 }  ,fill opacity=1 ] (373.03,22.89) -- (392.32,22.89) -- (392.32,41.19) -- (373.03,41.19) -- cycle ;
      \draw  [fill={rgb, 255:red, 255; green, 255; blue, 255 }  ,fill opacity=1 ] (194.67,71.03) -- (213.96,71.03) -- (213.96,89.33) -- (194.67,89.33) -- cycle ;
      \draw  [fill={rgb, 255:red, 255; green, 255; blue, 255 }  ,fill opacity=1 ] (283.45,69.52) -- (302.74,69.52) -- (302.74,87.82) -- (283.45,87.82) -- cycle ;
      \draw  [fill={rgb, 255:red, 255; green, 255; blue, 255 }  ,fill opacity=1 ] (332.6,68.77) -- (351.89,68.77) -- (351.89,87.07) -- (332.6,87.07) -- cycle ;
      \draw  [fill={rgb, 255:red, 255; green, 255; blue, 255 }  ,fill opacity=1 ] (418.21,70.28) -- (437.5,70.28) -- (437.5,88.58) -- (418.21,88.58) -- cycle ;
      \draw  [fill={rgb, 255:red, 255; green, 255; blue, 255 }  ,fill opacity=1 ] (237.47,186.85) -- (256.76,186.85) -- (256.76,205.15) -- (237.47,205.15) -- cycle ;
      \draw  [fill={rgb, 255:red, 255; green, 255; blue, 255 }  ,fill opacity=1 ] (305.65,188.36) -- (324.94,188.36) -- (324.94,206.66) -- (305.65,206.66) -- cycle ;
      \draw  [fill={rgb, 255:red, 255; green, 255; blue, 255 }  ,fill opacity=1 ] (376.2,186.85) -- (395.49,186.85) -- (395.49,205.15) -- (376.2,205.15) -- cycle ;
      
      \draw (217.72-2,104.47+2) node  [align=left] {$\displaystyle \neg $};
      \draw (283.32-2,105.1+2) node  [align=left] {$\displaystyle \lor $};
      \draw (249.82-2,163.94+2) node  [align=left] {$\displaystyle \land $};
      \draw (353.81-2,104.47+2) node  [align=left] {$\displaystyle \neg $};
      \draw (419.41-2,105.1+2) node  [align=left] {$\displaystyle \lor $};
      \draw (384.6,165.82+2) node  [align=left] {$\displaystyle \land $};
      \draw (314.88,168.04+2) node  [align=left] {$\displaystyle \lor $};
      \draw (249.84,31.17) node  [align=left] {$\displaystyle 1$};
      \draw (382.67,31.04) node  [align=left] {$\displaystyle 2$};
      \draw (204.31,80.18) node  [align=left] {$\displaystyle 3$};
      \draw (293.09,78.67) node  [align=left] {$\displaystyle 4$};
      \draw (342.24,77.92) node  [align=left] {$\displaystyle 5$};
      \draw (427.86,78.43) node  [align=left] {$\displaystyle 6$};
      \draw (247.12,196) node  [align=left] {$\displaystyle 7$};
      \draw (315.29,197.51) node  [align=left] {$\displaystyle 8$};
      \draw (385.84,196) node  [align=left] {$\displaystyle 9$};
      \end{tikzpicture}
    \end{minipage}
    \begin{minipage}{.49\linewidth}
      \centering
      \resizebox{\textwidth}{!}{%
        \begin{tabular}{lllllllllll}
          & $0$                      & $1$                           & $2$                           & $3$                         & $4$                           & $5$                           & $6$                           & $7$                           & $8$                         & $9$                           \\ \cline{2-11} 
          \multicolumn{1}{l|}{$0$} & \multicolumn{1}{l|}{$0$} & \multicolumn{1}{l|}{$0$}      & \multicolumn{1}{l|}{$0$}      & \multicolumn{1}{l|}{$0$}    & \multicolumn{1}{l|}{$0$}      & \multicolumn{1}{l|}{$0$}      & \multicolumn{1}{l|}{$0$}      & \multicolumn{1}{l|}{$0$}      & \multicolumn{1}{l|}{$0$}    & \multicolumn{1}{l|}{$0$}      \\ \cline{2-11} 
          \multicolumn{1}{l|}{$1$} & \multicolumn{1}{l|}{$0$} & \multicolumn{1}{l|}{$S$}      & \multicolumn{1}{l|}{$C$}      & \multicolumn{1}{l|}{$C$}    & \multicolumn{1}{l|}{$\top$}   & \multicolumn{1}{l|}{$C$}      & \multicolumn{1}{l|}{$\top$}   & \multicolumn{1}{l|}{$C$}      & \multicolumn{1}{l|}{$C$}    & \multicolumn{1}{l|}{$C$}      \\ \cline{2-11} 
          \multicolumn{1}{l|}{$2$} & \multicolumn{1}{l|}{$0$} & \multicolumn{1}{l|}{$C$}      & \multicolumn{1}{l|}{$S$}      & \multicolumn{1}{l|}{$C$}    & \multicolumn{1}{l|}{$C$}      & \multicolumn{1}{l|}{$C$}      & \multicolumn{1}{l|}{$C$}      & \multicolumn{1}{l|}{$C$}      & \multicolumn{1}{l|}{$C$}    & \multicolumn{1}{l|}{$C$}      \\ \cline{2-11} 
          \multicolumn{1}{l|}{$3$} & \multicolumn{1}{l|}{$0$} & \multicolumn{1}{l|}{$\vdash$} & \multicolumn{1}{l|}{$C$}      & \multicolumn{1}{l|}{$\vee$} & \multicolumn{1}{l|}{$C$}      & \multicolumn{1}{l|}{$C$}      & \multicolumn{1}{l|}{$C$}      & \multicolumn{1}{l|}{$\top$}   & \multicolumn{1}{l|}{$C$}    & \multicolumn{1}{l|}{$C$}      \\ \cline{2-11} 
          \multicolumn{1}{l|}{$4$} & \multicolumn{1}{l|}{$0$} & \multicolumn{1}{l|}{$C$}      & \multicolumn{1}{l|}{$C$}      & \multicolumn{1}{l|}{$C$}    & \multicolumn{1}{l|}{$\vee$}   & \multicolumn{1}{l|}{$C$}      & \multicolumn{1}{l|}{$C$}      & \multicolumn{1}{l|}{$C$}      & \multicolumn{1}{l|}{$\top$} & \multicolumn{1}{l|}{$C$}      \\ \cline{2-11} 
          \multicolumn{1}{l|}{$5$} & \multicolumn{1}{l|}{$0$} & \multicolumn{1}{l|}{$C$}      & \multicolumn{1}{l|}{$\vdash$} & \multicolumn{1}{l|}{$C$}    & \multicolumn{1}{l|}{$C$}      & \multicolumn{1}{l|}{$\vee$}   & \multicolumn{1}{l|}{$C$}      & \multicolumn{1}{l|}{$C$}      & \multicolumn{1}{l|}{$C$}    & \multicolumn{1}{l|}{$\top$}   \\ \cline{2-11} 
          \multicolumn{1}{l|}{$6$} & \multicolumn{1}{l|}{$0$} & \multicolumn{1}{l|}{$C$}      & \multicolumn{1}{l|}{$C$}      & \multicolumn{1}{l|}{$C$}    & \multicolumn{1}{l|}{$C$}      & \multicolumn{1}{l|}{$C$}      & \multicolumn{1}{l|}{$\vee$}   & \multicolumn{1}{l|}{$C$}      & \multicolumn{1}{l|}{$C$}    & \multicolumn{1}{l|}{$C$}      \\ \cline{2-11} 
          \multicolumn{1}{l|}{$7$} & \multicolumn{1}{l|}{$0$} & \multicolumn{1}{l|}{$C$}      & \multicolumn{1}{l|}{$C$}      & \multicolumn{1}{l|}{$C$}    & \multicolumn{1}{l|}{$\vdash$} & \multicolumn{1}{l|}{$C$}      & \multicolumn{1}{l|}{$C$}      & \multicolumn{1}{l|}{$\wedge$} & \multicolumn{1}{l|}{$C$}    & \multicolumn{1}{l|}{$C$}      \\ \cline{2-11} 
          \multicolumn{1}{l|}{$8$} & \multicolumn{1}{l|}{$0$} & \multicolumn{1}{l|}{$C$}      & \multicolumn{1}{l|}{$C$}      & \multicolumn{1}{l|}{$C$}    & \multicolumn{1}{l|}{$C$}      & \multicolumn{1}{l|}{$\vdash$} & \multicolumn{1}{l|}{$C$}      & \multicolumn{1}{l|}{$C$}      & \multicolumn{1}{l|}{$\vee$} & \multicolumn{1}{l|}{$C$}      \\ \cline{2-11} 
          \multicolumn{1}{l|}{$9$} & \multicolumn{1}{l|}{$0$} & \multicolumn{1}{l|}{$C$}      & \multicolumn{1}{l|}{$C$}      & \multicolumn{1}{l|}{$C$}    & \multicolumn{1}{l|}{$C$}      & \multicolumn{1}{l|}{$C$}      & \multicolumn{1}{l|}{$\vdash$} & \multicolumn{1}{l|}{$C$}      & \multicolumn{1}{l|}{$C$}    & \multicolumn{1}{l|}{$\wedge$} \\ \cline{2-11} 
        \end{tabular}%
      }
    \end{minipage}
\caption{Example of the embedding of a circuit in the grid used for the proof of Theorem \ref{teo:GCSATNP}. In the left-hand side it is shown the original circuit $C$ and in the right-hand side it is shown the correspondent circuit $\tilde{C}$ embedded in the grid. In the left-hand side: the symbol $\bullet$ represents the inputs of the circuit. In the right-hand side: $\vdash$ and $\top$ are the signal multipliers, the symbol $S$ represents the selector gadget and $C$ represents the crossing gadget. }
\label{fig:ejemploreduccionGCSAT}
\end{center}
\end{figure}
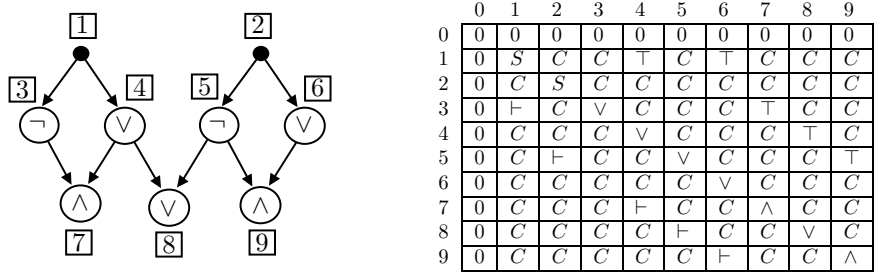

\subsection{Restricted Grid-embedded Boolean circuits}
 
In the definition of grid-embedded Boolean circuits, we consider that each cell of the grid contains a block, where the set of blocks is given by: disjunction, conjunction, crossing, fixed-value, signal-multiplier and selector. In this subsection we show that \GCSAT is \NPC even when we restrict the set blocks to only conjunction, disjunction, fixed-value and selector blocks. In other words, we show that it is possible to simulate crossing and signal-multiplier blocks using a combination of the other blocks. 
 
 In order to achieve our task, we are going to simulate each block by a \emph{meta-block}. A meta-block is a square grid of blocks of dimensions $\Delta \times \Delta$, where each block is a conjunction, disjunction, fixed-value or selector block. Each meta-block will contain in its north and east borders two designated blocks called \emph{input blocks}, and in its south and east borders two designated blocks called \emph{output blocks}. 
 
 Our goal is to simulate each one of the blocks defined in the previous section with one (or, in fact, a set of) meta-blocks, defining  disjunction, conjunction, crossing, fixed-value, signal-multiplier and selector meta-blocks. Then, we will simulate any grid-embedded Boolean circuit by a restricted one, taking a larger grid of blocks, and replacing each block by a meta-block.
 
 Roughly, all the gadgets are straightforwardly deduced from the previous block with exception of the crossing gadget. Thus, we first show the construction of a structure that will allow us to simulate the latter gadget. In fact, we will start by the construcction of the gadget exhibited in Figure \ref{fig:crossingsim} that we call \emph{crossing gadget} which \textit{a posteriori} will be a subpart of our meta-block gadget.
 
 \begin{figure}[h]
    \centering
   
\begin{minipage}{.45\linewidth}
  \resizebox{\textwidth}{!}{%
    \begin{tabular}{lllllllll}
      & $0$                         & $1$                         & $2$                           & $3$                           & $4$                           & $5$                         & $6$                           & $7$                         \\ \cline{2-9} 
      \multicolumn{1}{l|}{$0$} & \multicolumn{1}{l|}{$0$}    & \multicolumn{1}{l|}{$0$}    & \multicolumn{1}{l|}{$0$}      & \multicolumn{1}{l|}{$\vee$}   & \multicolumn{1}{l|}{$0$}      & \multicolumn{1}{l|}{$0$}    & \multicolumn{1}{l|}{$0$}      & \multicolumn{1}{l|}{$0$}    \\ \cline{2-9} 
      \multicolumn{1}{l|}{$1$} & \multicolumn{1}{l|}{$0$}    & \multicolumn{1}{l|}{$0$}    & \multicolumn{1}{l|}{$0$}      & \multicolumn{1}{l|}{$\vee$}   & \multicolumn{1}{l|}{$S$}      & \multicolumn{1}{l|}{$\vee$} & \multicolumn{1}{l|}{$\vee$}   & \multicolumn{1}{l|}{$0$}    \\ \cline{2-9} 
      \multicolumn{1}{l|}{$2$} & \multicolumn{1}{l|}{$0$}    & \multicolumn{1}{l|}{$0$}    & \multicolumn{1}{l|}{$0$}      & \multicolumn{1}{l|}{$\vee$}   & \multicolumn{1}{l|}{$\wedge$} & \multicolumn{1}{l|}{$0$}    & \multicolumn{1}{l|}{$\vee$}   & \multicolumn{1}{l|}{$0$}    \\ \cline{2-9} 
      \multicolumn{1}{l|}{$3$} & \multicolumn{1}{l|}{$\vee$} & \multicolumn{1}{l|}{$\vee$} & \multicolumn{1}{l|}{$\vee$}   & \multicolumn{1}{l|}{$\wedge$} & \multicolumn{1}{l|}{$\vee$}   & \multicolumn{1}{l|}{$0$}    & \multicolumn{1}{l|}{$\vee$}   & \multicolumn{1}{l|}{$0$}    \\ \cline{2-9} 
      \multicolumn{1}{l|}{$4$} & \multicolumn{1}{l|}{$0$}    & \multicolumn{1}{l|}{$S$}    & \multicolumn{1}{l|}{$\wedge$} & \multicolumn{1}{l|}{$\vee$}   & \multicolumn{1}{l|}{$\vee$}   & \multicolumn{1}{l|}{$\vee$} & \multicolumn{1}{l|}{$\wedge$} & \multicolumn{1}{l|}{$\vee$} \\ \cline{2-9} 
      \multicolumn{1}{l|}{$5$} & \multicolumn{1}{l|}{$0$}    & \multicolumn{1}{l|}{$\vee$} & \multicolumn{1}{l|}{$0$}      & \multicolumn{1}{l|}{$0$}      & \multicolumn{1}{l|}{$\vee$}   & \multicolumn{1}{l|}{$0$}    & \multicolumn{1}{l|}{$0$}      & \multicolumn{1}{l|}{$0$}    \\ \cline{2-9} 
      \multicolumn{1}{l|}{$6$} & \multicolumn{1}{l|}{$0$}    & \multicolumn{1}{l|}{$\vee$} & \multicolumn{1}{l|}{$\vee$}   & \multicolumn{1}{l|}{$\vee$}   & \multicolumn{1}{l|}{$\wedge$} & \multicolumn{1}{l|}{$0$}    & \multicolumn{1}{l|}{$0$}      & \multicolumn{1}{l|}{$0$}    \\ \cline{2-9} 
      \multicolumn{1}{l|}{$7$} & \multicolumn{1}{l|}{$0$}    & \multicolumn{1}{l|}{$0$}    & \multicolumn{1}{l|}{$0$}      & \multicolumn{1}{l|}{$0$}      & \multicolumn{1}{l|}{$\vee$}   & \multicolumn{1}{l|}{$0$}    & \multicolumn{1}{l|}{$0$}      & \multicolumn{1}{l|}{$0$}    \\ \cline{2-9} 
    \end{tabular}%
  }

\end{minipage}\hspace{1cm}
\begin{minipage}{.45\linewidth}   
  
  \begin{tikzpicture}
  \node[circle,draw,minimum size=0.5cm,inner sep=0pt] (v1) at (-0.5,0) {$a$};
  \node[circle,draw,minimum size=0.5cm,inner sep=0pt] (v3) at (1.5,0) {$b$};
  \node[circle,draw,minimum size=0.5cm,inner sep=0pt] (v4) at (-1.5,-1) {$\wedge$};
  \node[circle,draw,minimum size=0.5cm,inner sep=0pt] (v5) at (2.5,-1) {$\wedge$};
  \node[circle,draw,minimum size=0.5cm,inner sep=0pt] (v2) at (0.5,-1) {$\wedge$};
  \node[circle,draw,minimum size=0.5cm] (v7) at (-1.5,-2.5) {$s$};
  \node[circle,draw,minimum size=0.5cm,inner sep=0pt] (v6) at (-0.5,-2) {$\vee$};
  \node[circle,draw,minimum size=0.5cm] (v9) at (2.5,-2.5) {$s$};
  \node[circle,draw,minimum size=0.5cm,inner sep=0pt] (v8) at (-1.5,-4) {$\wedge$};
  \node[circle,draw,minimum size=0.5cm,inner sep=0pt] (v10) at (2.5,-4) {$\wedge$};
  \node[circle,draw,minimum size=0.5cm,inner sep=0pt] (v11) at (-0.5,-5) {$b'$};
  \node[circle,draw,minimum size=0.5cm,inner sep=0pt] (v12) at (1.5,-5) {$a'$};
  \node[circle,draw,minimum size=0.5cm,inner sep=0pt] (v13) at (1.5,-2) {$\vee$};
  \node[circle,draw,minimum size=0.5cm,inner sep=0pt] (v14) at (0.5,-3) {$\vee$};
  
  \node[scale=0.85,xshift=-0.0cm,yshift=0.5cm] at (v1)  {$(3,0)$};
  \node[scale=0.85,xshift=-0.0cm,yshift=0.5cm,] at (v3)  {$(0,3)$};
  \node[scale=0.85,xshift=-0.0cm,yshift=0.5cm,] at (v4)  {$(2,4)$};
  \node[scale=0.85,xshift=-0.0cm,yshift=0.5cm,] at (v5)  {$(4,2)$};
  \node[scale=0.85,xshift=-0.0cm,yshift=0.5cm,] at (v2)  {$(3,3)$};
  \node[scale=0.85,xshift=-0.5cm,yshift=0.5cm,] at (v7)  {$(4,1)$};
  \node[scale=0.85,xshift=-0.0cm,yshift=0.5cm,] at (v6)  {$(4,3)$};
  \node[scale=0.85,xshift=0.5cm,yshift=0.5cm,] at (v9)  {$(1,4)$};
  \node[scale=0.85,xshift=-0.5cm,yshift=0.5cm,] at (v8)  {$(6,4)$};
  \node[scale=0.85,xshift=0.5cm,yshift=0.5cm,] at (v10)  {$(4,6)$};
  \node[scale=0.85,xshift=-0.0cm,yshift=0.5cm,] at (v11)  {$(7,4)$};
  \node[scale=0.85,xshift=-0.0cm,yshift=0.5cm,] at (v12)  {$(4,7)$};
  \node[scale=0.85,xshift=-0.0cm,yshift=0.5cm,] at (v13)  {$(3,4)$};
  \node[scale=0.85,xshift=-0.0cm,yshift=0.5cm,] at (v14)  {$(4,4)$};

  \draw [->](v1) -- (v2);
  \draw [->](v3) -- (v2);
  \draw [->](v1) -- (v4);
  \draw [->](v3) -- (v5);
  
  \draw [->](v4) -- (v6);
  \draw [->](v4);
  \draw [->](v7) -- (v4);
  \draw [->](v7) -- (v8);
  \draw [->](v9) -- (v5);
  \draw [->](v9) -- (v10);
  \draw [->](v14) -- (v8);
  \draw [->](v14) -- (v10);
  \draw [->](v2) -- (v6);
  \draw [->](v8) -- (v11);
  \draw [->](v10) -- (v12);
  
  \draw[->] (v2) -- (v13);
  
  \draw [->](v5) -- (v13);
  
  \draw [->](v6) -- (v14);
  \draw [->](v13) -- (v14);
  \end{tikzpicture}

\end{minipage}
      \caption{Left: A crossing gadget. Right: A directed graph representing a crossing gadget.}
       \label{fig:crossingsim} 
\end{figure}
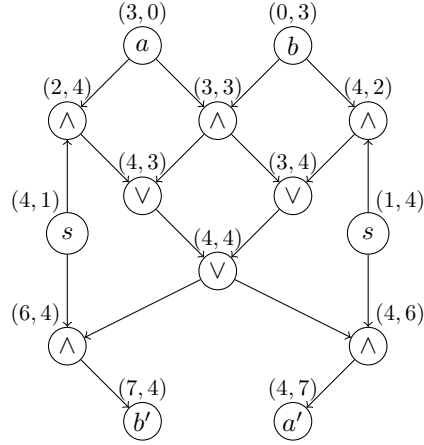 

\begin{lemma}\label{lem:selectormetablock}
Let $a, b\in \{0,1\}$ two input values of the input blocks of the  crossing gadget, and let $a', b'$ the values of the output blocks. \begin{itemize}
\item[a)] For every  truth-assignment of $C$,  $b'\leq b$ and $a'\leq a$. 
\item[b)] There exists a truth-assignment of $C$ such that $a' = a$ and $b'=b$.
\end{itemize}
\end{lemma}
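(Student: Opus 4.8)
The plan is to trace the signals through the crossing gadget of Figure \ref{fig:crossingsim} explicitly, obtain closed formulas for $a'$ and $b'$, and then read off both items. Recall that in a grid-embedded circuit every cell $(r,c)$ reads its north and west neighbours and writes to its south and east neighbours, that cells are evaluated in increasing order of $r+c$, and that the fixed-value blocks on the perimeter of the gadget emit $0$ irrespective of what (if anything) enters from outside; hence the only external inputs are $a$, entering the $\vee$-block at $(3,0)$ from the west, and $b$, entering the $\vee$-block at $(0,3)$ from the north. Denote by $u_1,u_2\in\{0,1\}$ the orientations of the two selector blocks, placed at $(4,1)$ and $(1,4)$ respectively, so that the selector at $(4,1)$ emits $u_1$ to the east and $\neg u_1$ to the south, and the selector at $(1,4)$ emits $u_2$ to the east and $\neg u_2$ to the south.

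First I would propagate $a$ rightwards along row $3$ and $b$ downwards along column $3$: the $\vee$-blocks keep these values unchanged (their other input is a fixed $0$) until the $\wedge$-block at $(3,3)$, which outputs $a\wedge b$. In parallel, the selector signals reach the $\wedge$-blocks at $(4,2)$ and $(2,4)$, producing $a\wedge u_1$ and $b\wedge \neg u_2$; the $\vee$-blocks at $(4,3)$, $(3,4)$ and $(4,4)$ then successively combine these into
\[
  m \;=\; (a\wedge u_1)\,\vee\,(a\wedge b)\,\vee\,(b\wedge \neg u_2),
\]
which is routed (through chains of $\vee$-blocks acting as wires) to both $\wedge$-blocks at $(6,4)$ and $(4,6)$. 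Their second inputs are $\neg u_1$ (the south output of the first selector, carried to $(6,4)$) and $u_2$ (the east output of the second selector, carried to $(4,6)$), so the output blocks at $(7,4)$ and $(4,7)$ hold
\[
  b' \;=\; m\wedge \neg u_1 ,\qquad a' \;=\; m\wedge u_2 .
\]
The main effort of the proof is precisely this bookkeeping: one must check, cell by cell, which neighbour feeds which, verify that the fixed-value perimeter indeed absorbs all spurious boundary signals, and keep the two selector orientations straight; the rest is short Boolean algebra.

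For item (a), I would distribute the outer conjunction over $m$. In $a'=m\wedge u_2$ the last term of $m$ contributes $b\wedge\neg u_2\wedge u_2=0$, leaving $a'=(a\wedge u_1\wedge u_2)\vee(a\wedge b\wedge u_2)\le a$; symmetrically, in $b'=m\wedge\neg u_1$ the first term contributes $a\wedge u_1\wedge\neg u_1=0$, leaving $b'=(a\wedge b\wedge\neg u_1)\vee(b\wedge\neg u_2\wedge\neg u_1)\le b$. This holds for every choice of $u_1,u_2$, i.e.\ for every truth-assignment of the gadget. For item (b), with $a,b$ the given input values I would set the orientations to $u_1=\neg b$ and $u_2=a$. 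Then
\[
  m=(a\wedge\neg b)\vee(a\wedge b)\vee(b\wedge\neg a)=a\vee b,
\]
so $a'=(a\vee b)\wedge a=a$ and $b'=(a\vee b)\wedge\neg(\neg b)=(a\vee b)\wedge b=b$, as required (when $a=b=0$ every assignment trivially gives $a'=b'=0$). I expect the signal-tracing step to be the only delicate point; the two algebraic verifications are immediate once the formulas for $a'$, $b'$, and $m$ are in hand.
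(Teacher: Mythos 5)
Your proposal is correct and follows essentially the same route as the paper: both trace the signals through the gadget of Figure~\ref{fig:crossingsim}, identify that the two conjunction blocks fed by each selector kill one branch apiece, and then verify (a) and (b) by Boolean computation, with your choice $u=(\neg b,a)$ matching the paper's. Your closed formulas $a'=m\wedge u_2$, $b'=m\wedge\neg u_1$ with $m=(a\wedge u_1)\vee(a\wedge b)\vee(b\wedge\neg u_2)$ reproduce exactly the entries of the paper's Table~\ref{tbl:label}, so the only difference is that you package the paper's four-case analysis into a single formula.
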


\begin{proof}
  Let $a=b=1$. A truth assignment $u$ of $C$ has two coordinates, one for each selector. Let us call $s_1$ the selector of coordinates $(4,1)$ (vertex numbered $6$ in the graph of Figure~\ref{fig:crossingsim}), and $s_2$ the selector of coordinates  coordinates $(1,4)$ (vertex numbered $8$ in the graph of Figure~\ref{fig:crossingsim}).  Without loss of generality,  the first coordinate of $u$ corresponds to the choice of $s_1$ and the second coordinate the choice of $s_2$. More precisely, for $i\in \{1,2\}$, if $u_i = 1$ then $s_i$ is an east-selector, and if $u_i=0$ then $s_i$ is a south-selector. 
  
  Consider now the vertices in coordinates $(4,2)$ and $(6, 4)$, which are conjunction blocks.  Call them $v_1$ and $v_2$. Observe that the construction satisfies that at least one of these two blocks has to output \emph{false} signals. Indeed, since the selector $s_1$ can only output a true-signal through either the east or the south sides, at least one of these blocks has to receive a false signal, and therefore output false.  More precisely, if $s_1$ is a east-selector ($u_1=1$) then $v_2$ always outputs a false-signal, and if if $s_1$  is a south-selector ($u_1=0$) then $v_1$ always outputs false.  Similarly, we call $v_3$ and $v_4$ the blocks with coordinates $(2, 4)$ and $(4, 6)$. If $s_2$ is a south-selector ($u_2=0$) then $v_4$ always outputs a false-signal, and if $s_2$ is a east-selector ($u_2=1$) $v_3$ always outputs a false-signal. 
  
  Then, there are four options for $u$.  When $u=(0,0)$ we have that $v_1$ and $v_4$ always output a false-signal, so $a'=0$ and $b'= (a\wedge b) \vee b$.  When $u=(0,1)$ we have that $v_1$ and $v_3$ always output a false-signal, then $a' = b'= a\wedge b$.  When $u=(1,0)$ we have that $v_2$ and $v_4$ output false-signals, then $a' = b' = 0$.  Finally, when $u=(1,1)$ we have that $v_2$ and $v_3$ output false-signals, then $a'= (a\wedge b) \wedge a = a$ and $b = 0$. In all cases $a' \leq a$ and $b' \leq b$. Moreover, if we choose $u = (\neg b,a)$ we obtain that $a'=a$ and $b'=b$. 
   
  \begin{table}[h]
    \centering
    $    
     \begin{array}{|c|c|c|c||c|c|}
     \hline
       a & b & s_1 & s_2 & a'& b'\\
      \hline
       0 & 0 & \downarrow & \downarrow & 0 & 0\\
      \hline 
       0 & 0 & \downarrow & \rightarrow  & 0 & 0\\
      \hline 
      \rowcolor{gray!50}
       0 & 0 & \rightarrow   & \downarrow & 0 & 0\\
      \hline 
       0 & 0 & \rightarrow  & \rightarrow   & 0 & 0\\       
      \hline 
       \hline 
      \rowcolor{gray!50}
       0 & 1 & \downarrow & \downarrow & 0 & 1\\
      \hline 
       0 & 1 & \downarrow & \rightarrow   & 0 & 0\\
      \hline 
      
       0 & 1 &\rightarrow  & \downarrow & 0 & 0\\
      \hline 
       0 & 1 & \rightarrow  &\rightarrow   & 0 & 0\\
      \hline       
     \end{array}$
     \hfil
     $\begin{array}{|c|c|c|c||c|c|}
     \hline
       a & b & s_1 & s_2 & a'& b'\\
      \hline
       1 & 0 & \downarrow & \downarrow & 0 & 0\\
      \hline 
    
       1 & 0 & \downarrow & \rightarrow   & 0 & 0\\
      \hline
       1 & 0 & \rightarrow   & \downarrow & 0 & 0\\
      \hline 
           \rowcolor{gray!50}    
       1 & 0 &\rightarrow   &\rightarrow   & 1& 0\\
      \hline 
       \hline 
     
       1 & 1 & \downarrow & \downarrow & 0 & 1\\
      \hline 
       \rowcolor{gray!50}
       1 & 1 & \downarrow & \rightarrow   & 1 & 1\\
      \hline 
       1 & 1 & \rightarrow  & \downarrow & 0 & 0\\
      \hline 
       1 & 1 &\rightarrow & \rightarrow   & 1 & 0\\
      \hline 
     \end{array}
     $
     \caption{
      Table of combinations between the inputs $a$ and $b$ the selectors $s_1$ and $s_2$, and the possible outputs $a'$ and $b'$. 
      The symbols $\rightarrow$ and $\downarrow $ mean that the selector is an east ($u_i = 1$) or south ($u_i=0$) selector, respectively. 
      The rows marked in gray correspond to the combinations $(\neg b, a)$, in which the crossing gadget acts as a crossing block without deleting signals.}
     \label{tbl:label}
   \end{table} 
\end{proof}

This  crossing gadget implies, roughly, that combining disjunction, conjunction and selector blocks it is possible to simulate a crossing block. Unfortunately, this construction involves two difficulties that we have to handle in order to correctly define our simulation. The first difficulty is the fact that it \emph{ do not always outputs the right values}. More precisely, there exist choices of selectors in which the outputs of our crossing gadget are different from the values that the original crossing block would have outputted. Fortunately this difficulty will not have a real effect in our main result (that is to say the one that says that restricted grid Boolean satisfiability problem is \NPC), because condition (a) of Lemma \ref{lem:selectormetablock} assures that the outputs of each crossing gadget can not produce \emph{false positives}. In other words, the signal outputted by this gadget can not make satisfiable an unsatisfiable vertex. On the other hand, condition (b) of  Lemma \ref{lem:selectormetablock} assures that, if a vertex $v$ is satisfiable, then there exists a choice of selectors in the gadget that produces the correct simulation of the crossing block. 
 
 The second difficulty, is that the input blocks of the simplified crossing gadget are not \emph{aligned} with the output blocks. More precisely, on the one hand the north and west input blocks are, respectively, in coordinates $(0,3)$ and $(3,0)$. On the other hand the east and south blocks are, respectively, in coordinates $(4,7)$ and $(7, 4)$. This implies that if we place two crossing gadgets one adjacent to the other, we will not be able to put the inputs of the second one with the outputs of the first one. This difficulty can also be handled, but the solution is slightly more involved.  
 
 Before giving more details on this issue, we define the gadgets for every other block, all having dimensions $8\times 8$, input blocks in coordinates $(3,0)$ and $(0,3)$, and output blocks in coordinates $(4, 7)$ and $(4, 5)$, as shown in Figure \ref{fig:blockgadgets}. We remark that each gadget only contain conjunction, disjunction, selector and fixed-value blocks. 
 
\begin{figure}  
\centering
\tikzset{every picture/.style={line width=0.75pt}} 

 \caption{Gadgets simulating the disjunction block (top-left), the conjunction block (top-right), the selector block (middle-left), the fixed-value block (middle-right), the west-signal multiplier block (bottom-left) and the north-signal multiplier block (bottom-right).}
 \label{fig:blockgadgets}
\end{figure}

We also define one more gadget called \emph{wire-gadget}. Let $i_1, i_2, j_1, j_2$ such that $0 \leq i_1 \leq i_2 \leq 7$ and $0 \leq j_1 \leq j_2 \leq 7$.  A wire gadget with input in $(i_1, j_1)$ and output in $(i_2, j_2)$ is a pattern of $8\times 8$ blocks, all being disjunction of fixed-value blocks.  This gadget has one input block in coordinates $(i_1, j_1)$ that is either the north or the west side of the pattern (therefore $i_1$ or $j_1$ equals $0$), and one output block in coordinates $(i_2, j_2)$, either in the east or south side of the pattern (hence $i_2$ or $j_2$ equals $7$). The wire gadget consists in a shortest directed path of disjunction blocks that starts from the input block, and finishes in the output block. All the other blocks in the pattern are fixed-value blocks with value $0$.  See Figure \ref{fig:wiregadgets} for an example of the construction. 
 
\begin{figure}[h] 
\centering
\tikzset{every picture/.style={line width=0.75pt}} 
\begin{tikzpicture}[x=0.75pt,y=0.75pt,yscale=-1,xscale=1]
\draw   (80,73.18) -- (103.21,73.18) -- (103.21,96.35) -- (80,96.35) -- cycle ;
\draw   (103.21,73.18) -- (126.42,73.18) -- (126.42,96.35) -- (103.21,96.35) -- cycle ;
\draw   (126.42,73.18) -- (149.63,73.18) -- (149.63,96.35) -- (126.42,96.35) -- cycle ;
\draw   (149.63,73.18) -- (172.84,73.18) -- (172.84,96.35) -- (149.63,96.35) -- cycle ;
\draw   (172.84,73.18) -- (196.05,73.18) -- (196.05,96.35) -- (172.84,96.35) -- cycle ;
\draw   (80,96.35) -- (103.21,96.35) -- (103.21,119.53) -- (80,119.53) -- cycle ;
\draw   (103.21,96.35) -- (126.42,96.35) -- (126.42,119.53) -- (103.21,119.53) -- cycle ;
\draw   (126.42,96.35) -- (149.63,96.35) -- (149.63,119.53) -- (126.42,119.53) -- cycle ;
\draw   (149.63,96.35) -- (172.84,96.35) -- (172.84,119.53) -- (149.63,119.53) -- cycle ;
\draw   (172.84,96.35) -- (196.05,96.35) -- (196.05,119.53) -- (172.84,119.53) -- cycle ;
\draw   (196.05,73.18) -- (219.26,73.18) -- (219.26,96.35) -- (196.05,96.35) -- cycle ;
\draw   (196.05,96.35) -- (219.26,96.35) -- (219.26,119.53) -- (196.05,119.53) -- cycle ;
\draw   (80,119.53) -- (103.21,119.53) -- (103.21,142.71) -- (80,142.71) -- cycle ;
\draw   (103.21,119.53) -- (126.42,119.53) -- (126.42,142.71) -- (103.21,142.71) -- cycle ;
\draw   (126.42,119.53) -- (149.63,119.53) -- (149.63,142.71) -- (126.42,142.71) -- cycle ;
\draw   (149.63,119.53) -- (172.84,119.53) -- (172.84,142.71) -- (149.63,142.71) -- cycle ;
\draw   (172.84,119.53) -- (196.05,119.53) -- (196.05,142.71) -- (172.84,142.71) -- cycle ;
\draw   (196.05,119.53) -- (219.26,119.53) -- (219.26,142.71) -- (196.05,142.71) -- cycle ;
\draw   (80,142.71) -- (103.21,142.71) -- (103.21,165.89) -- (80,165.89) -- cycle ;
\draw   (103.21,142.71) -- (126.42,142.71) -- (126.42,165.89) -- (103.21,165.89) -- cycle ;
\draw   (196.05,165.89) -- (219.26,165.89) -- (219.26,189.06) -- (196.05,189.06) -- cycle ;
\draw   (126.42,142.71) -- (149.63,142.71) -- (149.63,165.89) -- (126.42,165.89) -- cycle ;
\draw   (149.63,142.71) -- (172.84,142.71) -- (172.84,165.89) -- (149.63,165.89) -- cycle ;
\draw   (172.84,142.71) -- (196.05,142.71) -- (196.05,165.89) -- (172.84,165.89) -- cycle ;
\draw   (80,165.89) -- (103.21,165.89) -- (103.21,189.06) -- (80,189.06) -- cycle ;
\draw   (196.05,142.71) -- (219.26,142.71) -- (219.26,165.89) -- (196.05,165.89) -- cycle ;
\draw   (103.21,189.06) -- (126.42,189.06) -- (126.42,212.24) -- (103.21,212.24) -- cycle ;
\draw   (149.63,165.89) -- (172.84,165.89) -- (172.84,189.06) -- (149.63,189.06) -- cycle ;
\draw   (103.21,165.89) -- (126.42,165.89) -- (126.42,189.06) -- (103.21,189.06) -- cycle ;
\draw   (126.42,165.89) -- (149.63,165.89) -- (149.63,189.06) -- (126.42,189.06) -- cycle ;
\draw   (172.84,165.89) -- (196.05,165.89) -- (196.05,189.06) -- (172.84,189.06) -- cycle ;
\draw   (126.42,189.06) -- (149.63,189.06) -- (149.63,212.24) -- (126.42,212.24) -- cycle ;
\draw   (80,189.06) -- (103.21,189.06) -- (103.21,212.24) -- (80,212.24) -- cycle ;
\draw   (149.63,189.06) -- (172.84,189.06) -- (172.84,212.24) -- (149.63,212.24) -- cycle ;
\draw   (196.05,189.06) -- (219.26,189.06) -- (219.26,212.24) -- (196.05,212.24) -- cycle ;
\draw   (172.84,189.06) -- (196.05,189.06) -- (196.05,212.24) -- (172.84,212.24) -- cycle ;
\draw   (103.21,212.24) -- (126.42,212.24) -- (126.42,235.42) -- (103.21,235.42) -- cycle ;
\draw   (80,212.24) -- (103.21,212.24) -- (103.21,235.42) -- (80,235.42) -- cycle ;
\draw   (219.26,73.18) -- (242.46,73.18) -- (242.46,96.35) -- (219.26,96.35) -- cycle ;
\draw   (219.26,96.35) -- (242.46,96.35) -- (242.46,119.53) -- (219.26,119.53) -- cycle ;
\draw   (219.26,119.53) -- (242.46,119.53) -- (242.46,142.71) -- (219.26,142.71) -- cycle ;
\draw   (219.26,142.71) -- (242.46,142.71) -- (242.46,165.89) -- (219.26,165.89) -- cycle ;
\draw   (219.26,189.06) -- (242.46,189.06) -- (242.46,212.24) -- (219.26,212.24) -- cycle ;
\draw   (219.26,165.89) -- (242.46,165.89) -- (242.46,189.06) -- (219.26,189.06) -- cycle ;
\draw   (126.42,212.24) -- (149.63,212.24) -- (149.63,235.42) -- (126.42,235.42) -- cycle ;
\draw   (149.63,212.24) -- (172.84,212.24) -- (172.84,235.42) -- (149.63,235.42) -- cycle ;
\draw   (196.05,212.24) -- (219.26,212.24) -- (219.26,235.42) -- (196.05,235.42) -- cycle ;
\draw   (172.84,212.24) -- (196.05,212.24) -- (196.05,235.42) -- (172.84,235.42) -- cycle ;
\draw   (219.26,212.24) -- (242.46,212.24) -- (242.46,235.42) -- (219.26,235.42) -- cycle ;
\draw   (242.46,212.24) -- (265.67,212.24) -- (265.67,235.42) -- (242.46,235.42) -- cycle ;
\draw   (242.46,189.06) -- (265.67,189.06) -- (265.67,212.24) -- (242.46,212.24) -- cycle ;
\draw   (242.46,165.89) -- (265.67,165.89) -- (265.67,189.06) -- (242.46,189.06) -- cycle ;
\draw   (242.46,142.71) -- (265.67,142.71) -- (265.67,165.89) -- (242.46,165.89) -- cycle ;
\draw   (242.46,119.53) -- (265.67,119.53) -- (265.67,142.71) -- (242.46,142.71) -- cycle ;
\draw   (242.46,96.35) -- (265.67,96.35) -- (265.67,119.53) -- (242.46,119.53) -- cycle ;
\draw   (242.46,73.18) -- (265.67,73.18) -- (265.67,96.35) -- (242.46,96.35) -- cycle ;
\draw   (80,50) -- (103.21,50) -- (103.21,73.18) -- (80,73.18) -- cycle ;
\draw   (103.21,50) -- (126.42,50) -- (126.42,73.18) -- (103.21,73.18) -- cycle ;
\draw   (126.42,50) -- (149.63,50) -- (149.63,73.18) -- (126.42,73.18) -- cycle ;
\draw   (149.63,50) -- (172.84,50) -- (172.84,73.18) -- (149.63,73.18) -- cycle ;
\draw   (172.84,50) -- (196.05,50) -- (196.05,73.18) -- (172.84,73.18) -- cycle ;
\draw   (196.05,50) -- (219.26,50) -- (219.26,73.18) -- (196.05,73.18) -- cycle ;
\draw   (219.26,50) -- (242.46,50) -- (242.46,73.18) -- (219.26,73.18) -- cycle ;
\draw   (242.46,50) -- (265.67,50) -- (265.67,73.18) -- (242.46,73.18) -- cycle ;
\draw   (295.67,73.18) -- (318.88,73.18) -- (318.88,96.35) -- (295.67,96.35) -- cycle ;
\draw   (318.88,73.18) -- (342.09,73.18) -- (342.09,96.35) -- (318.88,96.35) -- cycle ;
\draw   (342.09,73.18) -- (365.3,73.18) -- (365.3,96.35) -- (342.09,96.35) -- cycle ;
\draw   (365.3,73.18) -- (388.51,73.18) -- (388.51,96.35) -- (365.3,96.35) -- cycle ;
\draw   (388.51,73.18) -- (411.72,73.18) -- (411.72,96.35) -- (388.51,96.35) -- cycle ;
\draw   (295.67,96.35) -- (318.88,96.35) -- (318.88,119.53) -- (295.67,119.53) -- cycle ;
\draw   (318.88,96.35) -- (342.09,96.35) -- (342.09,119.53) -- (318.88,119.53) -- cycle ;
\draw   (342.09,96.35) -- (365.3,96.35) -- (365.3,119.53) -- (342.09,119.53) -- cycle ;
\draw   (365.3,96.35) -- (388.51,96.35) -- (388.51,119.53) -- (365.3,119.53) -- cycle ;
\draw   (388.51,96.35) -- (411.72,96.35) -- (411.72,119.53) -- (388.51,119.53) -- cycle ;
\draw   (411.72,73.18) -- (434.93,73.18) -- (434.93,96.35) -- (411.72,96.35) -- cycle ;
\draw   (411.72,96.35) -- (434.93,96.35) -- (434.93,119.53) -- (411.72,119.53) -- cycle ;
\draw   (295.67,119.53) -- (318.88,119.53) -- (318.88,142.71) -- (295.67,142.71) -- cycle ;
\draw   (318.88,119.53) -- (342.09,119.53) -- (342.09,142.71) -- (318.88,142.71) -- cycle ;
\draw   (342.09,119.53) -- (365.3,119.53) -- (365.3,142.71) -- (342.09,142.71) -- cycle ;
\draw   (365.3,119.53) -- (388.51,119.53) -- (388.51,142.71) -- (365.3,142.71) -- cycle ;
\draw   (388.51,119.53) -- (411.72,119.53) -- (411.72,142.71) -- (388.51,142.71) -- cycle ;
\draw   (411.72,119.53) -- (434.93,119.53) -- (434.93,142.71) -- (411.72,142.71) -- cycle ;
\draw   (295.67,142.71) -- (318.88,142.71) -- (318.88,165.89) -- (295.67,165.89) -- cycle ;
\draw   (318.88,142.71) -- (342.09,142.71) -- (342.09,165.89) -- (318.88,165.89) -- cycle ;
\draw   (411.72,165.89) -- (434.93,165.89) -- (434.93,189.06) -- (411.72,189.06) -- cycle ;
\draw   (342.09,142.71) -- (365.3,142.71) -- (365.3,165.89) -- (342.09,165.89) -- cycle ;
\draw   (365.3,142.71) -- (388.51,142.71) -- (388.51,165.89) -- (365.3,165.89) -- cycle ;
\draw   (388.51,142.71) -- (411.72,142.71) -- (411.72,165.89) -- (388.51,165.89) -- cycle ;
\draw   (295.67,165.89) -- (318.88,165.89) -- (318.88,189.06) -- (295.67,189.06) -- cycle ;
\draw   (411.72,142.71) -- (434.93,142.71) -- (434.93,165.89) -- (411.72,165.89) -- cycle ;
\draw   (318.88,189.06) -- (342.09,189.06) -- (342.09,212.24) -- (318.88,212.24) -- cycle ;
\draw   (365.3,165.89) -- (388.51,165.89) -- (388.51,189.06) -- (365.3,189.06) -- cycle ;
\draw   (318.88,165.89) -- (342.09,165.89) -- (342.09,189.06) -- (318.88,189.06) -- cycle ;
\draw   (342.09,165.89) -- (365.3,165.89) -- (365.3,189.06) -- (342.09,189.06) -- cycle ;
\draw   (388.51,165.89) -- (411.72,165.89) -- (411.72,189.06) -- (388.51,189.06) -- cycle ;
\draw   (342.09,189.06) -- (365.3,189.06) -- (365.3,212.24) -- (342.09,212.24) -- cycle ;
\draw   (295.67,189.06) -- (318.88,189.06) -- (318.88,212.24) -- (295.67,212.24) -- cycle ;
\draw   (365.3,189.06) -- (388.51,189.06) -- (388.51,212.24) -- (365.3,212.24) -- cycle ;
\draw   (411.72,189.06) -- (434.93,189.06) -- (434.93,212.24) -- (411.72,212.24) -- cycle ;
\draw   (388.51,189.06) -- (411.72,189.06) -- (411.72,212.24) -- (388.51,212.24) -- cycle ;
\draw   (318.88,212.24) -- (342.09,212.24) -- (342.09,235.42) -- (318.88,235.42) -- cycle ;
\draw   (295.67,212.24) -- (318.88,212.24) -- (318.88,235.42) -- (295.67,235.42) -- cycle ;
\draw   (434.93,73.18) -- (458.14,73.18) -- (458.14,96.35) -- (434.93,96.35) -- cycle ;
\draw   (434.93,96.35) -- (458.14,96.35) -- (458.14,119.53) -- (434.93,119.53) -- cycle ;
\draw   (434.93,119.53) -- (458.14,119.53) -- (458.14,142.71) -- (434.93,142.71) -- cycle ;
\draw   (434.93,142.71) -- (458.14,142.71) -- (458.14,165.89) -- (434.93,165.89) -- cycle ;
\draw   (434.93,189.06) -- (458.14,189.06) -- (458.14,212.24) -- (434.93,212.24) -- cycle ;
\draw   (434.93,165.89) -- (458.14,165.89) -- (458.14,189.06) -- (434.93,189.06) -- cycle ;
\draw   (342.09,212.24) -- (365.3,212.24) -- (365.3,235.42) -- (342.09,235.42) -- cycle ;
\draw   (365.3,212.24) -- (388.51,212.24) -- (388.51,235.42) -- (365.3,235.42) -- cycle ;
\draw   (411.72,212.24) -- (434.93,212.24) -- (434.93,235.42) -- (411.72,235.42) -- cycle ;
\draw   (388.51,212.24) -- (411.72,212.24) -- (411.72,235.42) -- (388.51,235.42) -- cycle ;
\draw   (434.93,212.24) -- (458.14,212.24) -- (458.14,235.42) -- (434.93,235.42) -- cycle ;
\draw   (458.14,212.24) -- (481.35,212.24) -- (481.35,235.42) -- (458.14,235.42) -- cycle ;
\draw   (458.14,189.06) -- (481.35,189.06) -- (481.35,212.24) -- (458.14,212.24) -- cycle ;
\draw   (458.14,165.89) -- (481.35,165.89) -- (481.35,189.06) -- (458.14,189.06) -- cycle ;
\draw   (458.14,142.71) -- (481.35,142.71) -- (481.35,165.89) -- (458.14,165.89) -- cycle ;
\draw   (458.14,119.53) -- (481.35,119.53) -- (481.35,142.71) -- (458.14,142.71) -- cycle ;
\draw   (458.14,96.35) -- (481.35,96.35) -- (481.35,119.53) -- (458.14,119.53) -- cycle ;
\draw   (458.14,73.18) -- (481.35,73.18) -- (481.35,96.35) -- (458.14,96.35) -- cycle ;
\draw   (295.67,50) -- (318.88,50) -- (318.88,73.18) -- (295.67,73.18) -- cycle ;
\draw   (318.88,50) -- (342.09,50) -- (342.09,73.18) -- (318.88,73.18) -- cycle ;
\draw   (342.09,50) -- (365.3,50) -- (365.3,73.18) -- (342.09,73.18) -- cycle ;
\draw   (365.3,50) -- (388.51,50) -- (388.51,73.18) -- (365.3,73.18) -- cycle ;
\draw   (388.51,50) -- (411.72,50) -- (411.72,73.18) -- (388.51,73.18) -- cycle ;
\draw   (411.72,50) -- (434.93,50) -- (434.93,73.18) -- (411.72,73.18) -- cycle ;
\draw   (434.93,50) -- (458.14,50) -- (458.14,73.18) -- (434.93,73.18) -- cycle ;
\draw   (458.14,50) -- (481.35,50) -- (481.35,73.18) -- (458.14,73.18) -- cycle ;

\draw (184.44,131.12) node   {$\lor $};
\draw (138.02,131.12) node   {$\lor $};
\draw (114.81,131.12) node   {$\lor $};
\draw (91.6,131.12) node   {$\lor $};
\draw (91.6,61.59) node   {$0$};
\draw (114.81,61.59) node   {$0$};
\draw (91.6,84.77) node   {$0$};
\draw (114.81,84.77) node   {$0$};
\draw (138.02,61.59) node   {$0$};
\draw (138.02,84.77) node   {$0$};
\draw (184.44,61.59) node   {$0$};
\draw (207.65,61.59) node   {$0$};
\draw (230.86,61.59) node   {$0$};
\draw (254.07,61.59) node   {$0$};
\draw (254.07,84.77) node   {$0$};
\draw (91.6,107.94) node   {$0$};
\draw (114.81,107.94) node   {$0$};
\draw (138.02,107.94) node   {$0$};
\draw (254.07,107.94) node   {$0$};
\draw (91.6,154.3) node   {$0$};
\draw (91.6,177.48) node   {$0$};
\draw (91.6,200.65) node   {$0$};
\draw (91.6,223.83) node   {$0$};
\draw (114.81,223.83) node   {$0$};
\draw (138.02,177.48) node   {$0$};
\draw (138.02,223.83) node   {$0$};
\draw (161.23,177.48) node   {$0$};
\draw (161.23,223.83) node   {$0$};
\draw (207.65,177.48) node   {$0$};
\draw (230.86,177.48) node   {$0$};
\draw (254.07,177.48) node   {$0$};
\draw (207.65,200.65) node   {$0$};
\draw (230.86,154.3) node   {$0$};
\draw (254.07,200.65) node   {$0$};
\draw (207.65,223.83) node   {$0$};
\draw (230.86,223.83) node   {$0$};
\draw (254.07,223.83) node   {$0$};
\draw (161.23,200.65) node   {$0$};
\draw (138.02,200.65) node   {$0$};
\draw (114.81,200.65) node   {$0$};
\draw (114.81,177.48) node   {$0$};
\draw (114.81,154.3) node   {$0$};
\draw (138.02,154.3) node   {$0$};
\draw (161.23,154.3) node   {$0$};
\draw (184.44,107.94) node   {$0$};
\draw (184.44,84.77) node   {$0$};
\draw (207.65,84.77) node   {$0$};
\draw (230.86,84.77) node   {$0$};
\draw (207.65,107.94) node   {$0$};
\draw (230.86,107.94) node   {$0$};
\draw (161.23,131.12) node   {$\lor $};
\draw (376.91,131.12) node   {$\lor $};
\draw (400.12,154.3) node   {$\lor $};
\draw (423.33,154.3) node   {$\lor $};
\draw (376.91,107.94) node   {$\lor $};
\draw (376.91,84.77) node   {$\lor $};
\draw (469.74,154.3) node   {$\lor $};
\draw (376.91,61.59) node   {$\lor $};
\draw (307.28,61.59) node   {$0$};
\draw (330.49,61.59) node   {$0$};
\draw (307.28,84.77) node   {$0$};
\draw (330.49,84.77) node   {$0$};
\draw (353.7,61.59) node   {$0$};
\draw (353.7,84.77) node   {$0$};
\draw (400.12,61.59) node   {$0$};
\draw (423.33,61.59) node   {$0$};
\draw (446.53,61.59) node   {$0$};
\draw (469.74,61.59) node   {$0$};
\draw (469.74,84.77) node   {$0$};
\draw (307.28,107.94) node   {$0$};
\draw (330.49,107.94) node   {$0$};
\draw (353.7,107.94) node   {$0$};
\draw (469.74,107.94) node   {$0$};
\draw (469.74,131.12) node   {$0$};
\draw (307.28,154.3) node   {$0$};
\draw (307.28,177.48) node   {$0$};
\draw (307.28,200.65) node   {$0$};
\draw (307.28,223.83) node   {$0$};
\draw (330.49,223.83) node   {$0$};
\draw (353.7,177.48) node   {$0$};
\draw (353.7,223.83) node   {$0$};
\draw (376.91,177.48) node   {$0$};
\draw (376.91,223.83) node   {$0$};
\draw (423.33,177.48) node   {$0$};
\draw (446.53,177.48) node   {$0$};
\draw (469.74,177.48) node   {$0$};
\draw (423.33,200.65) node   {$0$};
\draw (446.53,200.65) node   {$0$};
\draw (469.74,200.65) node   {$0$};
\draw (423.33,223.83) node   {$0$};
\draw (446.53,223.83) node   {$0$};
\draw (469.74,223.83) node   {$0$};
\draw (376.91,200.65) node   {$0$};
\draw (353.7,200.65) node   {$0$};
\draw (330.49,200.65) node   {$0$};
\draw (330.49,177.48) node   {$0$};
\draw (330.49,154.3) node   {$0$};
\draw (353.7,154.3) node   {$0$};
\draw (400.12,131.12) node   {$0$};
\draw (400.12,107.94) node   {$0$};
\draw (400.12,84.77) node   {$0$};
\draw (423.33,84.77) node   {$0$};
\draw (446.53,84.77) node   {$0$};
\draw (423.33,107.94) node   {$0$};
\draw (423.33,131.12) node   {$0$};
\draw (446.53,107.94) node   {$0$};
\draw (446.53,131.12) node   {$0$};
\draw (446.53,154.3) node   {$\lor $};
\draw (353.7,131.12) node   {$0$};
\draw (330.49,131.12) node   {$0$};
\draw (307.28,131.12) node   {$0$};
\draw (161.23,61.59) node   {$0$};
\draw (161.23,84.77) node   {$0$};
\draw (161.23,107.94) node   {$0$};
\draw (207.65,131.12) node   {$\lor $};
\draw (230.86,131.12) node   {$\lor $};
\draw (254.07,131.12) node   {$\lor $};
\draw (184.44,154.3) node   {$0$};
\draw (207.65,154.3) node   {$0$};
\draw (227.65,197.48) node   {$0$};
\draw (254.07,154.3) node   {$0$};
\draw (184.44,177.48) node   {$0$};
\draw (184.44,200.65) node   {$0$};
\draw (184.44,223.83) node   {$0$};
\draw (400.12,177.48) node   {$0$};
\draw (400.12,200.65) node   {$0$};
\draw (400.12,223.83) node   {$0$};
\draw (376.91,154.3) node   {$\lor $};
\end{tikzpicture}
  \caption{Left: a wire-gadget with input in $(3,0)$ and output in $(3,7)$. Right: a wire-gadget with input in $(0,3)$ and output in $(4,7)$.}
 \label{fig:wiregadgets}
\end{figure}
 
Let $C$ be a grid-embedded Boolean circuit defined over a directed acyclic square grid $G(n)$.  Our meta-blocks will have size $\Delta \times \Delta$, where $\Delta = 8(n+2)$. The block in coordinates $(i,j)$ of $C$ will be simulated by a meta-block constructed as follows (see also Figure \ref{fig:meta-block}): consider a subdivision of the meta-block (observe that a meta-block has dimensions $\Delta \times \Delta = 8(n+2) \times 8(n+2)$) in a grid of dimensions $(n+2)\times (n+2)$, where each cell is a grid of $8\times 8$ blocks. The meta-block will receive the inputs from cell $(8i+3,0)$ (west input) and $(0,8j+3)$ (north input). On the other hand, it will output the values of the simulated block through the block in coordinates $(8(i+1)+3, \Delta)$ (east output) and $(\Delta, 8(j+1)+3)$ (south output). In other words, the west input of  the meta block  is the the block in coordinates $(3,0)$ of cell $(i,0)$; the north input is block in coordinates $(0,3)$ of cell $(0,j)$;  the east output is the block in coordinates $(3,7)$ of cell $(i+1,n+2)$ and the south output is the block in coordinates $(7,3)$ of cell $(n+2, j+1)$. 
 
We start placing on cell $(i,j)$ the gadget that simulates the operator of the block $(i,j)$ (See Figure \ref{fig:blockgadgets}). From cell $(i,0)$ until cell $(i,j-1)$, we place wire gadgets with inputs in $(3,0)$ and output in $(3,7)$. Similarly, from cell $(0,j)$ until cell $(i-1,j)$ we put wire gadgets with inputs in $(0,3)$ and output in $(7,3)$. Roughly speaking, this path of block wires will transmit the inputs received in cells $(i,0)$ and $(0,j)$ to the block gadget in $(i,j)$. 
 
Then, from cell $(i,j+1)$ until $(i,n+1)$ we place wire gadgets with inputs in $(4,0)$ and output in $(4,7)$, and from cell $(i+1,j)$ until $(n+1,j)$ we place wire gadgets with inputs in $(0,4)$ and output in $(7,4)$. Besides, we put wire gadgets in the following positions: 
\begin{itemize}
  \item  In cell $(i, n+2)$ we put a wire gadget with input in $(4,0)$ and output in $(7,3)$
   \item In cell $(i+1, n+2)$ put a wire gadget with input in $(0,3)$ and output in $(3,7)$
   \item In cell $(n+1, j)$ put a wire gadget with input in $(0,4)$ and output in $(3,7)$
   \item In cell $(n+1,j+1)$ put a wire gadget with input in $(3,0)$ and output in $(7,3)$
\end{itemize} Roughly speaking, this sequence of  wire gadgets will transmit the outputs of the gadget in cell $(i,j)$, to the output blocks of the meta-block. Finally, in every cell not yet defined, place a grid of $8\times 8$ fixed-value blocks.

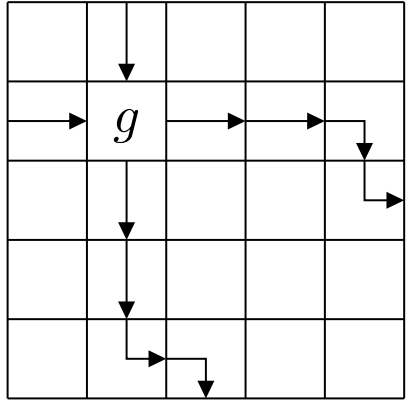
\begin{figure}[h]
 \centering
\tikzset{every picture/.style={line width=0.75pt}} 
\begin{tikzpicture}[x=0.75pt,y=0.75pt,yscale=-1,xscale=1]
\draw    (70,90) -- (70,290) ;
\draw    (70,90) -- (270,90) ;
\draw    (110,90) -- (110,290) ;
\draw    (150,90) -- (150,290) ;
\draw    (190,90) -- (190,290) ;
\draw    (230,90) -- (230,290) ;
\draw    (270,90) -- (270,290) ;
\draw    (70,130) -- (270,130) ;
\draw    (70,170) -- (270,170) ; 
\draw    (70,210) -- (270,210) ;
\draw    (70,250) -- (270,250) ;
\draw    (70,290) -- (270,290) ;
\draw    (150,150) -- (188,150) ;
\draw [shift={(190,150)}, rotate = 180] [fill={rgb, 255:red, 0; green, 0; blue, 0 }  ][line width=0.75]  [draw opacity=0] (8.93,-4.29) -- (0,0) -- (8.93,4.29) -- cycle    ;
\draw    (70,150) -- (108,150) ;
\draw [shift={(110,150)}, rotate = 180] [fill={rgb, 255:red, 0; green, 0; blue, 0 }  ][line width=0.75]  [draw opacity=0] (8.93,-4.29) -- (0,0) -- (8.93,4.29) -- cycle    ;
\draw    (130,170) -- (130,208) ;
\draw [shift={(130,210)}, rotate = 270] [fill={rgb, 255:red, 0; green, 0; blue, 0 }  ][line width=0.75]  [draw opacity=0] (8.93,-4.29) -- (0,0) -- (8.93,4.29) -- cycle    ;
\draw    (130,90) -- (130,128) ;
\draw [shift={(130,130)}, rotate = 270] [fill={rgb, 255:red, 0; green, 0; blue, 0 }  ][line width=0.75]  [draw opacity=0] (8.93,-4.29) -- (0,0) -- (8.93,4.29) -- cycle    ;
\draw    (130,210) -- (130,248) ;
\draw [shift={(130,250)}, rotate = 270] [fill={rgb, 255:red, 0; green, 0; blue, 0 }  ][line width=0.75]  [draw opacity=0] (8.93,-4.29) -- (0,0) -- (8.93,4.29) -- cycle    ;
\draw    (190,150) -- (228,150) ;
\draw [shift={(230,150)}, rotate = 180] [fill={rgb, 255:red, 0; green, 0; blue, 0 }  ][line width=0.75]  [draw opacity=0] (8.93,-4.29) -- (0,0) -- (8.93,4.29) -- cycle    ;
\draw    (230,150) -- (250,150) -- (250,168) ;
\draw [shift={(250,170)}, rotate = 270] [fill={rgb, 255:red, 0; green, 0; blue, 0 }  ][line width=0.75]  [draw opacity=0] (8.93,-4.29) -- (0,0) -- (8.93,4.29) -- cycle    ;
\draw    (250,170) -- (250,190) -- (268,190) ;
\draw [shift={(270,190)}, rotate = 180] [fill={rgb, 255:red, 0; green, 0; blue, 0 }  ][line width=0.75]  [draw opacity=0] (8.93,-4.29) -- (0,0) -- (8.93,4.29) -- cycle    ;
\draw    (130,250) -- (130,270) -- (148,270) ;
\draw [shift={(150,270)}, rotate = 180] [fill={rgb, 255:red, 0; green, 0; blue, 0 }  ][line width=0.75]  [draw opacity=0] (8.93,-4.29) -- (0,0) -- (8.93,4.29) -- cycle    ;
\draw    (150,270) -- (170,270) -- (170,288) ;
\draw [shift={(170,290)}, rotate = 270] [fill={rgb, 255:red, 0; green, 0; blue, 0 }  ][line width=0.75]  [draw opacity=0] (8.93,-4.29) -- (0,0) -- (8.93,4.29) -- cycle    ;
\draw (130,152.5) node[scale=2]   {$g$};
\end{tikzpicture}
 \caption{A meta-block simulating a block with operator $g$, which is placed in the cell $(1,1)$ of grid-embedded Boolean circuit $C$ defined over a directed acyclic square grid $G(3)$. The meta block consists in a grid of $5\times 5$ gadgets. The arrow represent the wire gadgets, white cells represent fixed-value gadgets, and the cell denoted $g$ represents the gadget of the block with Boolean operator $g$.}
 \label{fig:meta-block}
\end{figure} 
  
 Using this construction we obtain a way of simulating any grid-embedded Boolean circuit with one that only has conjunction, disjunction, fixed-value and selector blocks. In Figure \ref{fig:gridsimulationmetablocks} we give an example of a simulation of a grid-embedded circuit $C$ defined over a directed acyclic square grid $G(3)$. Moreover this simulating grid-embedded Boolean circuit can be constructed in polynomial time in the size of the simulated circuit. 
  
\begin{figure}
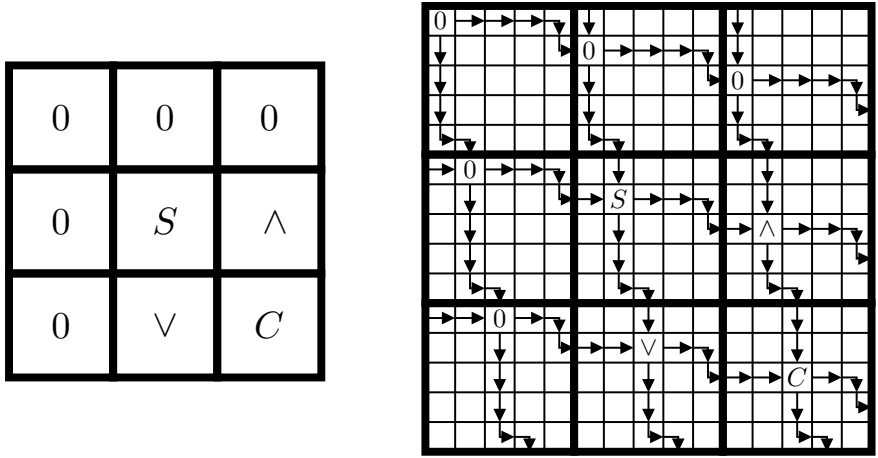

\tikzset{every picture/.style={line width=0.75pt}} 

 
 \caption{A representation of the simulation of an arbitrary grid-embedded Boolean circuit $C$ with a restricted set of blocks. The circuit $C$ is depicted in the left side of the figure, and is defined over a directed acyclic square grid $G(3)$. In the right side of the figure is the representation of the restricted grid-embedded Boolean circuit $\tilde{C}$ that simulates $C$ using only disjunction, conjunction, selector and fixed-value blocks. }
 \label{fig:gridsimulationmetablocks}
 \end{figure}
Let us call \RGCSAT the problem \GCSAT when the input grid-embedded Boolean circuit is restricted to have only disjunction, conjunction, fixed-value and selector blocks. 

 \begin{theorem}\label{theo:RGCSATisNPC}  
\RGCSAT is \NPC.
\end{theorem}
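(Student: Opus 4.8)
The plan is to show that \RGCSAT is in \NP and then to reduce \GCSAT to it, invoking Theorem~\ref{teo:GCSATNP}. Membership is immediate: a truth-assignment of the selector blocks is a polynomial-size certificate, and given one, the restricted grid-embedded circuit can be evaluated diagonal by diagonal (in the order $k = 0, 1, \dots$) in polynomial time, after which one reads off the value of the distinguished vertex. The real work is the hardness direction.

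For the reduction, starting from an instance $(C, v)$ of \GCSAT with $C$ a grid-embedded Boolean circuit over $G(n)$ and $v$ placed in some cell $(i_0, j_0)$, I would produce the restricted grid-embedded circuit $\tilde C$ over $G(\Delta n)$, with $\Delta = 8(n+2)$, obtained by replacing the block of each cell $(i,j)$ of $C$ with the meta-block described above: the $8\times 8$ gadget of Figure~\ref{fig:blockgadgets} (or the crossing gadget of Figure~\ref{fig:crossingsim} for a crossing block) placed in the diagonal cell $(i,j)$ of the $(n+2)\times(n+2)$ sub-array, the wire gadgets of Figure~\ref{fig:wiregadgets} routing its two inputs in and its two outputs to the meta-block's output blocks, and fixed-value gadgets everywhere else. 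Each of these gadgets uses only conjunction, disjunction, fixed-value and selector blocks, so $\tilde C$ is restricted; it lives on an $\cO(n^2)\times\cO(n^2)$ grid, hence has $\cO(n^4)$ blocks and can be written down in polynomial time. I would let $\tilde v$ be the output block, of the meta-block occupying the position of $(i_0,j_0)$, that carries the relevant output of $v$.

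The central step is correctness, which I would base on a \emph{domination} lemma proved by induction on $i+j$ along the evaluation order of $C$. Given any truth-assignment $\tilde u$ of $\tilde C$, let $u$ be its restriction to the selector coordinates that originate from selector blocks of $C$, discarding those internal to crossing gadgets. Then, in $\tilde C(\tilde u)$, each output block of the meta-block at cell $(i,j)$ carries a value at most $C(u)_{(i,j)}$: the disjunction, conjunction, fixed-value, selector and signal-multiplier gadgets compute exactly the intended Boolean function (a finite check on the patterns of Figure~\ref{fig:blockgadgets}), wire gadgets transmit their input unchanged, and for a crossing block part~(a) of Lemma~\ref{lem:selectormetablock} makes every output of the crossing gadget at most the corresponding input, so a $0$ is never upgraded to a $1$. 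Hence $\tilde C(\tilde u)_{\tilde v} \le C(u)_v$, and satisfiability of $\tilde v$ implies satisfiability of $v$. Conversely, from a $u$ with $C(u)_v = 1$ I would build $\tilde u$ by copying $u$ on the selectors inherited from $C$ and setting each crossing gadget's selectors as in part~(b) of Lemma~\ref{lem:selectormetablock} (the choice $(\neg b, a)$ for the signals $a, b$ entering that gadget), so that every crossing gadget behaves as a genuine crossing block and $\tilde C$ reproduces all the values of $C$; then $\tilde C(\tilde u)_{\tilde v} = 1$. The two implications give the equivalence of instances, so \RGCSAT is \NP-hard, and with membership in \NP it is \NPC.

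I expect the correctness of the crossing simulation to be the main obstacle: because the crossing gadget of Lemma~\ref{lem:selectormetablock} is only one-sided, a step-by-step equality between $C$ and $\tilde C$ fails, and one must carry soundness through the monotone domination (part~(a)) and completeness through an explicit optimal choice of the crossing selectors (part~(b)), all while carefully bookkeeping which coordinates of a truth-assignment of $\tilde C$ come from $C$ and which are auxiliary. A secondary, more routine but still fiddly point is the geometric alignment of the input and output blocks of the gadgets within a meta-block --- exactly what the extra two rows and columns of the $(n+2)\times(n+2)$ subdivision and the various wire gadgets are there to repair --- and the description of these routings has to be made fully explicit, so that one can verify the induction above cell by cell.
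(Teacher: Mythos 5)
Your proposal is correct and follows essentially the same route as the paper: replace each block by the meta-block built from the $8\times 8$ gadgets and wire gadgets, then argue soundness via the no-false-positive property (part (a) of Lemma~\ref{lem:selectormetablock}, propagated through the monotone blocks) and completeness via the explicit $(\neg b, a)$ choice of the crossing selectors (part (b)). The paper's own proof is just a terser version of this, deferring the two directions to the discussion surrounding that lemma.
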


\begin{proof}
We reduce \GCSAT to \RGCSAT. Let $(C,v)$ be an instance of \GCSAT, where $C$ is a grid-embedded Boolean circuit and $v $ is a block with coordinates $(p,q)$. The reduction constructs a grid-embedded Boolean circuit $\tilde{C}$ using only disjunction, conjunction, fixed-value and selector blocks according to the constructions giving above. Let $\tilde{v}$ be the block in coordinates $(7,4)$ (i.e. the south output block) of the cell $(p,q)$ of the meta block in coordinates $(p,q)$ representing block $v$. From the constructions and arguments given above, it is direct that $\tilde{v}$ is satisfiable in $\tilde{C}$ if and only if $v$ is satisfiable in $C$. We deduce that $\RGCSAT$ is \NPC.
\end{proof}

\subsection{The complexity of rule $S22$}

In this subsection we show that for rule $S22$ problem \AsyncStability is \NPC. We remark that as we have already stated in previous section \AsyncStability is in \NP. Thus, it suffices to show that it is \NP-hard.  The proof consists of a series of patterns in the two-dimensional grid that, iterated asynchronously, according to rule $S22$, simulate disjunction, conjunction, selector and fixed-value blocks. Then, the \NP-Completeness follow from Theorem \ref{theo:RGCSATisNPC}.

As we mentioned in the introduction, in \cite{StabilityFTCA} it is shown that rule $S22$ is capable of simulating monotone Boolean circuits under the synchronous updating scheme.  More precisely, it is shown that certain patterns in the two-dimensional grid can simulate disjunction, conjunction, crossing, signal-multiplier and fixed-value blocks, and therefore the synchronous version of the problem is \PtC. Unfortunately, these gates are not robust under the asynchronous update of rule $S22$, in the sense that in certain updating schemes these patterns incorrectly produce simulated true-signals in cases that the simulated block should output a false signal (we call this situation a \emph{false positive}). 

Nevertheless, we show that it is possible to simulate disjunction, conjunction selector and fixed-value blocks by four patterns of dimensions $10\times 10$. We call these patterns disjunction, conjunction, selector and fixed-value pattern, respectively. We exhibit these patterns in Figure \ref{fig:gatesS22}. 

\begin{figure}
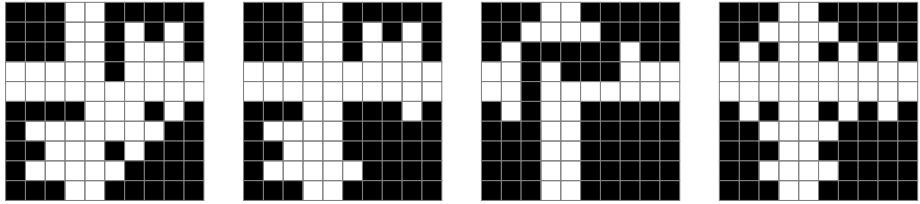

\centering


\caption{Patterns of rule $S22$ that simulate disjunction (left), conjunction (middle-left), selector (middle-right) and fixed-value (right) blocks. Black cells represent cells in state $1$, while white cells represent cells in state $0$.}
\label{fig:gatesS22}
\end{figure}

On each pattern, we identify a pair of cells in the north edge, namely $n_1 = (0,3)$ and $n_2=(0,4)$, and a pair of cells in the west edge, namely $w_1 = (3,0)$ and $w_2 = (4,0)$. We call these cells \emph{input cells}.  Our simulation considers that when a block receives true-signal through the north input (respectively through the west input), then in the corresponding simulating pattern $n_1$ or $n_2$ (respectively $w_1$ or $w_2$) will become active. Similarly, we identify two cells in the bottom edge, namely $s_1 = (9,3)$ and $s_2 = (9,4)$ and a pair of cells in the east edge, namely $e_1 = (3,9)$ and $w_2 = (4,9)$. When a block outputs a true-signal through the south output (respectively the east output), then in the corresponding simulating pattern cell $s_1$ or $s_2$ (respectively $e_1$ or $e_2$) will become active. 

In the following lemma, we show that the simulating patterns of disjunction, conjunction and fixed-value blocks have the advantage that they do not produce \emph{false positives} under any updating scheme. 

\begin{lemma}\label{lem:robustsym}
Let $x$ be the conjunction, disjunction or fixed-value pattern. Let $y$ be a configuration of the square grid $G(10)$ that is equal to the conjunction, disjunction or fixed-value pattern in every cell except $n_1$, $n_2$, $w_1$, $w_2$, $s_1$, $s_2$, $e_1$ and $e_2$. If $s_1$, $s_2$, $e_1$ and $e_2$ are inactive in $y$, then
\begin{itemize}
\item[a)] For every updating scheme $\sigma$ of $G(10)$ and every $t\geq 0$, 
$$S22^{\sigma(t)}(y)_{s_1} \leq g(y_{n_1}  \vee  y_{n_2} , y_{w_1}  \vee y_{w_2} )$$
$$S22^{\sigma(t)}(y)_{s_2} \leq g(y_{n_1}  \vee y_{n_2} , y_{w_1}  \vee y_{w_2} )$$
$$S22^{\sigma(t)}(y)_{e_1} \leq g(y_{n_1}  \vee y_{n_2} , y_{w_1}  \vee y_{w_2} )$$
$$S22^{\sigma(t)}(y)_{e_1} \leq g(y_{n_1}  \vee y_{n_2} , y_{w_1}  \vee y_{w_2} )$$
\item[b)] There exists an updating scheme $\sigma$ of $G(10)$ and $t>0$ such that:
$$S22^{\sigma(t)}(y)_{s_1} = g(y_{n_1}  \vee  y_{n_2} , y_{w_1}  \vee y_{w_2} )$$
$$S22^{\sigma(t)}(y)_{s_2} = g(y_{n_1}  \vee y_{n_2} , y_{w_1}  \vee y_{w_2} )$$
$$S22^{\sigma(t)}(y)_{e_1} = g(y_{n_1}  \vee y_{n_2} , y_{w_1}  \vee y_{w_2} )$$
$$S22^{\sigma(t)}(y)_{e_1} = g(y_{n_1}  \vee y_{n_2} , y_{w_1}  \vee y_{w_2} )$$

\end{itemize}
where $g(p,q) = p\vee q$ if $x$ is the disjunction pattern, $g(p,q) = p \wedge q$ if $x$ is the conjunction pattern, and $g(p,q) = 0$ if $x$ is the fixed-value pattern.

Moreover, if $n_1$, $n_2$, $w_1$, $w_2$ are inactive, then every cell in the pattern different than the input or output cells are stable.
\end{lemma}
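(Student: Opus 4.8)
The plan is to prove Lemma~\ref{lem:robustsym} by a direct, case-by-case analysis of the four explicit $10\times 10$ patterns of Figure~\ref{fig:gatesS22}, exploiting the monotonicity inherent to the way rule $S22$ propagates activity from the input cells. The key structural observation I would establish first is a \emph{monotonicity principle for $S22$ under pinned inputs}: if $y$ and $y'$ are two configurations of $G(10)$ agreeing with the pattern $x$ off the eight designated cells, and $y \le y'$ coordinate-wise (in particular $y'$ has possibly more active input cells), then for every updating scheme $\sigma$ and $t\ge 0$ there is an updating scheme $\sigma'$ with $S22^{\sigma(t)}(y) \le S22^{\sigma'(t')}(y')$ at the output cells. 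This is not monotonicity of $S22$ as a global rule (which fails, since $2 \le 3$ but $3\notin \mathcal I_{S22}$); rather it is a confinement statement: I would argue that, because all interior pattern cells that are black are already active and black cells never affect the threshold count negatively, the only way the output cells can become active is through a \emph{chain} of cells each acquiring exactly two active neighbors, and such a chain, once available in $y$, remains available in $y'$. Concretely I would track, for each pattern, the set $A_0$ of initially active cells, and show by induction on the number of iterations that any cell that ever becomes active lies on a ``wire'' — a specific path of white cells shown in the figure — whose activation is triggered only when the appropriate input cell is on.

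With this principle in hand, part (a) reduces to checking, for each of the three patterns (disjunction, conjunction, fixed-value) and each of the $2\times 2$ combinations of $(y_{n_1}\vee y_{n_2},\,y_{w_1}\vee y_{w_2})$ that should yield a false output, that no wire reaches $s_1,s_2,e_1,e_2$. For the fixed-value pattern all four combinations must give $0$, so I would show the output cells are never adjacent to two simultaneously-active cells under any scheme; the black cells are arranged so that the output cells always see at most one active neighbor, hence by freezing they stay inactive forever. For the conjunction pattern I must rule out the case where exactly one input wire fires: I would trace that a single fired wire terminates at the internal $\wedge$-cell region without ever giving the output cells a second active neighbor. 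For the disjunction pattern only the $(0,0)$ combination needs checking, which is analogous to the fixed-value case. Part (b) is the reverse: for each combination that should yield $1$, exhibit one concrete updating scheme — essentially ``fire the relevant wire cell by cell, then fire the output cells'' — and verify each fired cell indeed has exactly two active neighbors at the moment it is updated, so that it iterates. Here the induced-path property (each wire is a chordless path, as in the proof of Lemma~\ref{lem:boundary}) guarantees the count stays exactly $2$ along the chain. The final sentence of the lemma — that with all inputs inactive every non-input/output cell is stable — follows immediately from part (a)'s argument applied with all inputs off: no wire fires, every white cell keeps its single (or zero) active neighbor, and black cells are active and frozen.

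The main obstacle I anticipate is making the ``wire/chain'' argument rigorous without an exhaustive pixel-by-pixel simulation of all $10\times 10 = 100$ cells under arbitrarily interleaved updating schemes. The danger is a subtle false positive: some non-obvious combination of a black cell, a half-fired wire, and an adversarial update order creating a transient configuration in which an output cell momentarily sees two active neighbors. To control this I would prove a small lemma pinning down, for each pattern, the \emph{reachable set} $R$ — the union over all schemes and all times of the active cells — and show $R$ is exactly $A_0$ together with the wires that the active input cells can trigger; then observe $s_1,s_2,e_1,e_2 \notin R$ in the relevant cases. The reachable set is monotone in the schedule in the confined sense above, and its computation is a finite forward-closure that I can present as a single figure or table per pattern rather than as prose. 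The rest — the explicit good schedule for part (b) and the stability claim — is then routine verification against Figure~\ref{fig:gatesS22}.
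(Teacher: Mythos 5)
Your proposal is correct and follows the same overall strategy as the paper's proof: a direct case analysis over the three patterns and the relevant input combinations, with part~(a) handled by showing the activation cannot reach the output cells and part~(b) by exhibiting an explicit cell-by-cell firing schedule along an induced path of white cells. The main organizational difference is that you make part~(a) rigorous via an over-approximated \emph{reachable set} computed as a forward closure (a cell joins if at least two of its neighbors are already in the closure), whereas the paper argues more locally: it observes that each pattern is a fixed point when all inputs are off (every inactive cell sees a neighbor count different from $2$), and for the one nontrivial false-output case (conjunction with a single active input side) it checks by hand that the cells at coordinates $(3,3),(3,4),(4,3),(4,4)$ never accumulate two active neighbors under any ordering of the four cells adjacent to the fired input. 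Your closure argument is arguably the cleaner formalization of what the paper encodes implicitly in the gray cells of its figures, and it is sound for an upper bound: any cell that ever becomes active had exactly two active neighbors at its activation time, both already reachable, so the $\geq 2$ closure over-approximates the true reachable set even though $S22$ itself is non-monotone. Two caveats. First, your ``monotonicity principle under pinned inputs'' is both unnecessary (part~(a) only ever needs to be verified for the input combinations whose target output is $0$, which you check individually anyway) and the shakiest step as stated, since adding an active input can in principle \emph{block} a chain that needed exactly two neighbors; I would drop it rather than try to prove it. Second, for the final ``Moreover'' clause the paper actually proves something slightly stronger than the literal statement --- it checks that a signal arriving backwards through the output cells cannot infiltrate the pattern (the cells $(8,3),(8,4),(3,8),(4,8)$ retain at least three inactive non-output neighbors) --- which is what is really needed when the patterns are tiled together in the reduction; your ``nothing fires, so it is a fixed point'' argument proves only the literal claim.
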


\begin{proof}

Observe first that the lemma straightforwardly holds when $y_{n_1} = y_{n_2} = y_{w_1} = y_{w_2} = 0$, because the disjunction, conjunction and fixed-value patterns are defined as fixed points for rule $S22$ (the number of active neighbors of each cell is different than $2$ on each cell). Therefore, in this case $(a)$ and $(b)$ hold when $x$ is any of the three patterns.   We now assume that one of $n_1, n_2, w_1$ or $w_2$ is initially active. By the symmetry of the pattern, we can assume without loss of generality, $w_1$ or $w_2$ is active.

First, observe that if $x$ is the fixed-value pattern, then the configuration on all  cells with exception of the inputs and output are in a fixed-point (see Figure \ref{fig:gatesS22}). As $g(y_{n_1}  \vee y_{n_1} , y_{w_1}  \vee y_{w_2} ) = 0$ we deduce that $(a)$ and $(b)$ hold.  This still holds when $y_{n_1} = y_{n_2} =  y_{w_1} = y_{w_2} = 1$. We deduce that, when $x$ is the fixed-value pattern, (a) and (b) hold for every combination of states of cells $n_1$, $n_2$, $w_1$ and $w_2$. 

Now suppose that $x$ is a disjunction pattern. As $g(y_{n_1}  \vee y_{n_1} , y_{w_1}  \vee y_{w_2} ) = 1$, we obtain that in this case $(a)$ trivially holds.  On the other hand, it is not hard to construct an updating scheme that activates the output gates on this case. In Figure \ref{fig:conjunctionS22dyn} is shown an example of an updating scheme that activates the output cells, when $w_2$ is active. We deduce that $(b)$ holds when $x$ is the disjunction pattern. 

Therefore, (a) and (b) hold when $w_1$ or $w_2$ are active. By the symmetry of the pattern, the same is true when $n_1$ or $n_2$ are active. We deduce that, when $x$ is the disjunction pattern, (a) and (b) hold  
 for every combination of states of cells $n_1$, $n_2$, $w_1$ and $w_2$. 

\begin{figure}[h]
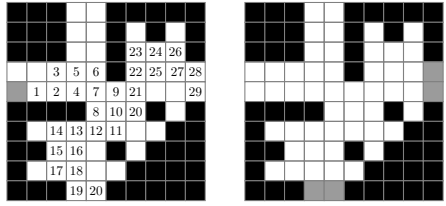

\centering

\caption{Left: Example of an updating scheme of the disjunction pattern that produces the correct output value in the output cells, when $w_2$ is active. The numbers represent the order in which the cells are updated. Unnumbered cells are updated in an arbitrary order after updating cell numbered $29$ ($e_2$). Right: a representation of the cells (in gray) that can become active when a signal comes through the output cells. }
\label{fig:disjunctionS22dyn}
\end{figure}

Suppose then that $x$ is the conjunction gadget. First, if $n_1$ and $n_2$ are inactive, then  $g(y_{n_1}  \vee y_{n_1} , y_{w_1}  \vee y_{w_2} ) = g(1,0) = 0$. In this case, observe that cells in coordinates $(3,3), (3,4), (4,3)$ and $(4,4)$ have all initially $4$ inactive neighbors. On the other hand, trying all possible updating scheme of cells $(3,1), (3,2), (4,1)$ and $(4,2)$, we obtain that cells $(3,3)$ and $(3,4)$ have at most one active neighbors, while all the other inactive cells in the pattern will remain with the same number of active neighbors. We deduce that $s_1, s_2, e_1$ and $e_2$ remain inactive for every updating scheme of the pattern. Therefore, (a) and (b) hold when $w_1$ or $w_2$ are active and $n_1$ and $n_2$ are inactive. By symmetry of the pattern (a) and (b) also hold in the complementary case, i.e., when $n_1$ or $n_2$ are active and $w_1$ and $w_2$ are inactive. 

Suppose now that one of $n_1, n_2$ is active and one of  $w_1, w_2$ is active. Then   $g(y_{n_1}  \vee y_{n_1} , y_{w_1}  \vee y_{w_2} ) = g(1,1) = 1$. In this case (a) trivially holds. On the other hand, we can exhibit an updating scheme that activates the output gates in latter case. In Figure \ref{fig:conjunctionS22dyn} an example of an updating scheme that activates the output when $n_2$ and $w_2$ are active cells is shown. Thus, (b) hold.

We deduce that, when $x$ is the conjunction pattern, (a) and (b) hold for every combination of states of cells $n_1$, $n_2$, $w_1$ and $w_2$. 

\begin{figure}[h]
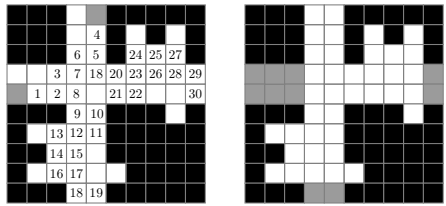

\centering

\caption{Left: Example of an updating scheme of the disjunction pattern that produces the correct output value in the output cells, when $w_2$ is active. The numbers represent the order in which the cells are updated. Unnumbered cells are updated in an arbitrary order after updating cell numbered $30$ ($e_2$). Right: a representation of the cells (in gray) that can become active when a signal comes through the output cells, and/or when only one signal comes through the input cells. }
\label{fig:conjunctionS22dyn}
\end{figure}

Finally, we remark that, when $s_1, s_2, w_1, w_2$ are inactive then when $x$ is any of the three patterns, then cells $(8,3), (8,4), (3,8)$ and $(4,8)$ have at least three inactive neighbors that are not output cells. Therefore, the number of active neighbors of every inactive cell in the pattern (eventually different to the output cells) is different than $2$ (see Figures \ref{fig:disjunctionS22dyn} and \ref{fig:conjunctionS22dyn}). Therefore, when  $s_1, s_2, w_1, w_2$ are inactive every cell in the pattern different than the input or the output cells is stable. 
\end{proof}

Roughly, Lemma \ref{lem:robustsym} states three useful properties of the disjunction, conjunction and fixed-value patterns. First, they do no produce \emph{false positives}. Second, if they should output a true signal, there is an updating scheme that produces it. Third, if the signal comes through the wrong direction, for example through one of the output cells, then it is impossible that such a signal enters to the pattern and changes one of the input cells or the output cells in the other side.  

The selector pattern have a different behavior. Observe that in the selector pattern, the dynamic of every cell in coordinates $(i,j)$ such that $i\geq 3$ and $j\geq 3$ is independent of the values of the states of the input cells. 
 The following lemma states that each updating scheme of the selector patterns simulate one of the choices of the selector block, namely the east-selector or the south-selector blocks.

\begin{lemma}\label{lem:selectorS22}
Let $x$ be the selector pattern and let  $\sigma$ be an updating scheme of the cells in $x$. Then either $\{s_1, s_2\}$ are stable and at least one of $\{n_1, n_2\}$ is iterated, or $\{n_1, n_2\}$ are stable and at least one of $\{s_1, s_2\}$ is iterated. 
\end{lemma}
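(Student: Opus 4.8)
The plan is to analyze the $S22$-dynamics of the $10\times 10$ selector pattern of Figure~\ref{fig:gatesS22} directly. Since the selector has no input cells (its associated functions $g$ are constant), the evolution of the pattern is entirely intrinsic, so it suffices to understand which cells can become active under an arbitrary updating scheme. First I would compute, in the initial configuration, the number of active neighbours of every inactive cell, in order to pin down the set $T$ of cells having exactly two active neighbours; these are precisely the cells that can be iterated first under $S22$. Reading off the pattern, the cells reachable from $T$ (where a cell is declared reachable once, for some updating scheme, it attains exactly two active neighbours) decompose into a \emph{(possibly empty) common stem} followed by two \emph{branches} $B^N$ and $B^S$, with the properties that activating any of $n_1,n_2$ requires first activating cells of $B^N$, activating any of $s_1,s_2$ requires first activating cells of $B^S$, and the stem together with a small \emph{cut} of cells separating $B^N$ from $B^S$ is invariant under the reflection of the pattern exchanging top and bottom.

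Next I would establish two facts. \emph{Liveness}: because every window of $|G(10)|$ consecutive steps of $\sigma$ updates every cell exactly once, and because nothing can be iterated before a cell of the stem (the trigger cells of the stem lie in $T$ and their neighbours cannot fire earlier), the stem is necessarily iterated, so the final configuration reached by $\sigma$ strictly contains the initial one. \emph{Mutual exclusion} (the core of the argument): I would isolate the cut — in the cleanest case a pair of adjacent cells $a^N, a^S$ at the heads of the two branches — such that in the initial configuration $a^N$ and $a^S$ each have exactly one active neighbour outside the stem, so that once the stem is iterated both of them have exactly two active neighbours. Since $S22$ births only on exactly two active neighbours and is freezing, an inactive cell with at least three active neighbours is stable forever. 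Hence the instant one of $a^N,a^S$ is iterated the other acquires a third active neighbour and becomes stable; an induction along the chain of fireable cells of the losing branch (using that active-neighbour counts never decrease) then shows that no cell of that branch — in particular neither of its two output cells — is ever iterated, for any continuation of $\sigma$. If the true bottleneck is not literally an adjacent pair but a slightly larger cut, the same counting reasoning applies to it; and since the pattern has only $100$ cells, the claim is in the worst case a finite mechanical verification.

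Putting the pieces together: fix an arbitrary updating scheme $\sigma$ and consider the first cell of $B^N\cup B^S$ that it iterates — this exists by liveness, and by the structure of the reachability chains it is a cut cell, say on the $N$ side. By mutual exclusion and the branch induction, no cell of $B^S$ is ever iterated, so $s_1$ and $s_2$ are stable; and following a legal firing order along $B^N$ (which, as one checks on the pattern, does reach an $n_i$) shows that at least one of $n_1,n_2$ is iterated. The symmetric conclusion holds when the first iterated cut cell lies on the $S$ side. This is exactly the dichotomy asserted by the lemma. For completeness I would also remark, as already done in the proof of Lemma~\ref{lem:robustsym} for the other patterns, that every remaining inactive cell of the selector pattern has active-neighbour count different from $2$ in the relevant configurations, so nothing else unexpected occurs.

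The step I expect to be the main obstacle is the mutual-exclusion claim. The delicate point is not the local rule computation but that it must hold against a fully adversarial $\sigma$: the adversary may interleave updates so as to make partial progress along both branches, and one must argue that the stem and the cut genuinely act as a one-shot switch — the first commitment on one branch is irreversible and immediately and permanently disables the other — rather than merely that some naive sequential order respects the dichotomy.
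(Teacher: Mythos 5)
Your proposal takes essentially the same route as the paper: the paper's proof identifies the adjacent pair $v_1=(4,5)$ and $v_2=(4,6)$ as the only cells with exactly two active neighbours in the initial pattern (so your ``stem'' is empty and your cut is exactly this pair), and its core step is precisely your mutual-exclusion argument --- whichever of the two is iterated first gives the other a third active neighbour, permanently freezing it and hence its entire branch, while the winning branch propagates to the corresponding output pair under any continuation of $\sigma$. The remaining check that the winning chain never accumulates a third active neighbour is, as you anticipate, handled in the paper as a finite verification on the pattern (illustrated by the explicit firing orders of Figure~\ref{fig:selectorS22dyn}).
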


\begin{proof}

Let $\sigma$ be any updating scheme of the cells in $x$. Let $v_1$ and $v_2$ be the cells in coordinates $(4,5)$ and $(4,6)$, respectively. Observe that every cell different than $v_1$ or $v_2$ has a number of active neighbors different than $2$, and therefore they wont change if they are updated before $v_1$ and $v_2$. Suppose that $v_1$ is iterated before $v_2$. Then, after that time-step $v_2$ has three active neighbors and becomes stable, implying that all cells in columns greater than $6$ are stable. Similarly, if $v_2$ is iterated before $v_1$, then $v_1$ is stable and all cells in columns smaller than $4$ that belong to the connected component of inactive cells containing $v_1$ (i.e. cells with coordinates $(i,j)$, with $i\geq 4$ and $j \in \{3,4\}$) are stable. 
On the other hand, if $v_1$ is iterated, then it is not hard to see that any updating scheme of the pattern eventually iterates $s_1$ or $s_2$. Similarly, if $v_2$ is iterated any updating scheme of $x$ eventually iterates $e_1$ or $e_2$. In Figure \ref{fig:selectorS22dyn} we represent two examples of updating schemes activating output cells, one when $v_1$ is iterated and the other when $v_2$ is the one that becomes active. 
\end{proof}

\begin{figure}[h]
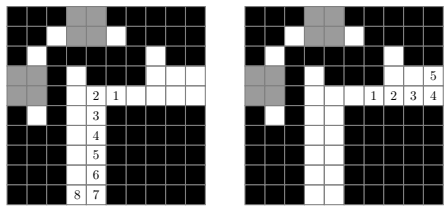

\centering

\caption{Left: Example of an updating scheme of the selector pattern that produces an output value in the southern output cells (south-selector). Right: Example of an updating scheme of the selector pattern that produces an output value in the eastern output cells (east-selector). The numbers represent the order in which the cells are updated. Unnumbered cells can be updated in any order after that the numbered cells are updated.}
\label{fig:selectorS22dyn}
\end{figure}

Roughly, Lemma \ref{lem:selectorS22} indicates that every updating scheme of the selector pattern simulates either a south-selector (when $v_1$ is iterated before $v_2$), or an east-selector (when $v_2$ is iterated before $v_1$). Therefore, for each updating $\sigma$ scheme of the selector pattern, we call $\sigma$ an \emph{east-selector updating scheme} if $\sigma$ updates $v_2$ before $v_1$. Otherwise, we call $\sigma$ an \emph{south-selector updating scheme}. Observe that Figure \ref{fig:selectorS22dyn} shows that there exists at least one east-selector updating scheme, and at least one south-selector updating scheme. 

We are now ready to state the main result of this section. 

\begin{theorem}
For rule $S22$, the problem \AsyncStability is \NPC.
\end{theorem}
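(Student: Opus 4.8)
The plan is to combine Theorem~\ref{theo:RGCSATisNPC} with Lemmas~\ref{lem:robustsym} and~\ref{lem:selectorS22} via a direct simulation. Since \AsyncStability restricted to $S22$ is already in \NP, it suffices to prove \NP-hardness, and the reduction will be from \RGCSAT. Given an instance $(C,v)$ of \RGCSAT, where $C$ is a grid-embedded Boolean circuit over $G(n)$ using only disjunction, conjunction, fixed-value and selector blocks and $v$ is the block in coordinates $(p,q)$, I would build a configuration $x$ of rule $S22$ over a large square torus as follows: replace each block of $C$ by the corresponding $10\times 10$ pattern of Figure~\ref{fig:gatesS22}, laid out so that consecutive patterns overlap in a single row (resp.\ column), in such a way that the east (resp.\ south) output cells $e_1,e_2$ (resp.\ $s_1,s_2$) of the pattern simulating block $(i,j)$ are identified with the west (resp.\ north) input cells $w_1,w_2$ (resp.\ $n_1,n_2$) of the pattern simulating block $(i,j+1)$ (resp.\ $(i+1,j)$). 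Because the perimeter of $C$ consists of fixed-value blocks, whose patterns are fixed points of $S22$ by Lemma~\ref{lem:robustsym}, the construction closes consistently on the torus. The decision cell $u$ is declared to be the south output cell $s_1$ of the pattern simulating $v$. The whole configuration is computable in polynomial time in $|C|$, hence in the size of the original formula through the chain of reductions.

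It then remains to prove that $u$ is unstable for $x$ under $S22$ if and only if $v$ is satisfiable in $C$. For the ``if'' direction I would fix a truth assignment $\tilde u$ of $C$ with $C(\tilde u)_v=1$ and exhibit an asynchronous updating scheme $\sigma$ iterating $u$, obtained by processing the patterns in increasing order of $i+j$ (the same order used to evaluate $C$). Each selector pattern is updated using an east-selector or a south-selector updating scheme, whose existence is guaranteed by Lemma~\ref{lem:selectorS22}, according to the coordinate of $\tilde u$ attached to that selector. Each disjunction, conjunction or fixed-value pattern is updated, when its turn comes, using the updating scheme from part (b) of Lemma~\ref{lem:robustsym}; by induction on $i+j$, its input cells already carry the values dictated by $C(\tilde u)$ when that segment begins, so the segment drives its output cells to the value $g$ of the corresponding block, i.e.\ to $C(\tilde u)$ evaluated at that block. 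Concatenating these per-pattern segments yields one valid sequential updating scheme of the whole grid that sets $s_1$ of the pattern of $v$ to $C(\tilde u)_v=1$, so $u$ is iterated and hence unstable.

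For the ``only if'' direction I would argue the contrapositive. Assume $v$ is not satisfiable and suppose, for contradiction, that some updating scheme $\sigma$ iterates $u$. Restricting $\sigma$ to each selector pattern yields, by Lemma~\ref{lem:selectorS22}, either an east-selector or a south-selector behaviour, and these choices define a truth assignment $\tilde u$ of $C$. Using part (a) of Lemma~\ref{lem:robustsym} (no false positives) together with its ``moreover'' clause (a signal entering a pattern through an output cell cannot alter the input cells or the outputs on the opposite side) and the stability of the border fixed-value patterns, one shows by induction on $i+j$ that the output cells of the pattern simulating a block $g$ become active only if $C(\tilde u)_g=1$. In particular $u$ active would force $C(\tilde u)_v=1$, contradicting unsatisfiability of $v$; hence $u$ is stable. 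Since \RGCSAT is \NPC by Theorem~\ref{theo:RGCSATisNPC} and \AsyncStability for $S22$ is in \NP, this proves it is \NPC.

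The part I expect to be delicate is the composition argument underlying both directions: one must check that gluing the per-pattern updating schemes produces a genuine cell-by-cell sequential updating scheme of the global configuration, and, more importantly, that the induction on $i+j$ is sound, i.e.\ that updating a pattern neither disturbs the already-settled patterns with smaller $i+j$ nor receives spurious signals from them. This is precisely where the three properties isolated in Lemma~\ref{lem:robustsym} are used: freezing rules out de-activation, part (a) rules out false positives, and the ``moreover'' clause rules out backward propagation through the shared output/input cells; combined with the torus being padded by stable fixed-value patterns, these make the directed (east/south) flow of signals well-defined and monotone, so the simulation faithfully tracks the layered evaluation of $C$.
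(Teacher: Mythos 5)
Your proposal is correct and follows essentially the same route as the paper: a reduction from \RGCSAT obtained by tiling the grid with the $10\times 10$ patterns of Figure \ref{fig:gatesS22}, taking $s_1$ of the pattern of $v$ as the decision cell, and using Lemma \ref{lem:selectorS22} together with part (b) of Lemma \ref{lem:robustsym} for the forward direction and part (a) (no false positives) for the converse. The composition subtlety you flag at the end is exactly the point the paper handles implicitly via the freezing property and the robustness clauses of Lemma \ref{lem:robustsym}, so your treatment is, if anything, slightly more explicit than the paper's.
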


\begin{proof}
In order to show our result we reduce \RGCSAT to \AsyncStability for rule $S22$. Let $(C,v)$ be an instance of \RGCSAT, defined over a directed square grid $G(n)$. Let $\tilde{x}$ be the configuration of active and active cells on the square grid $G(10n)$, constructed replacing each block $b$ of $C$ with the  pattern of the corresponding type given by Figure \ref{fig:gatesS22}. For each block $b$ of $C$, call $x_b$ the set of cells in the pattern representing $b$, and call $s_b$ the output cell $s_1$ of $x_b$ (observe that it is possible to choose any other output cell).  We claim that $s_b$ is unstable if and only if $b$ is satisfiable. 

Suppose first that $b$ is satisfiable, and let $u$ be a truth-assignment of $C$ satisfying $b$. Remember that a truth-assignment of $C$ is a choice over the selectors blocks, picking them as south or east-selectors. Let $\sigma$ the update scheme that first updates each selector pattern, in such a way that $\sigma$ restricted to the pattern is a east or south-selector updating scheme depending on the choices given by $u$. After updating every selector pattern, we define $\sigma$  to update each cell inside a conjunction or disjunction pattern according to the order given by Figures \ref{fig:disjunctionS22dyn} and \ref{fig:conjunctionS22dyn} and we define it in way such that, it continues updating each pattern in the same order than the circuit is evaluated. From Lemma \ref{lem:robustsym}  (b), we obtain that necessarily $s_b$ is unstable. 

Suppose now that $v_b$ is unstable, and let $\sigma$ an updating scheme that iterates it. For each selector $b$ of $C$, consider the restriction of $\sigma$ to the cells of $x_b$, and call it $\sigma_b$. Now we consider the truth assignment $u$ such that $u_b = 1$ if $\sigma_b$ is a east-selector updating scheme and $u_b = 0$ otherwise. We claim that $C(u)_b = 1$. Indeed, the choice of $\sigma$ assures that there exists a $t>0$ such that $S22^{\sigma(t)} (x)_{s_b} = 1$. On the other hand, Lemma \ref{lem:robustsym} (a) implies that the truth-value of $b$ is lower-bounded by   $S22^{\sigma(t)} (x)_{s_b}$. Therefore, we deduce that $b$ is satisfiable.

We conclude that for rule $S22$  \AsyncStability is \NPC.
\end{proof}

\section{Concluding Remarks}\label{sec:conclu}
In this paper we have analyzed the dynamics of asynchronous freezing cellular automata by studying the computational complexity of \AsyncStability problem which consist in answering whether there exists an update scheme that iterates a cell in asynchronous freezing cell automata.

First we have shown that in the one-dimensional case, the algorithm of \cite{goles:hal-01294144} for the synchronous case can be extended for the asynchronous update scheme implying that the problem is in $\NL$.  

As our main aim was to essentially understand what makes a rule simple or complex, we wondered whether the efficient verification algorithm found for the one-dimensional FCA can be extended to the two-dimensional case with two states, both in the triangular grid and square grid topologies.

We have found that, even when we were able to exhibit efficient parallel algorithms to solve \AsyncStability for the vast majority of life-like freezing rules, there exist some rules for which the problem is \NPC. Specifically,  we have found that in the triangular grid \AsyncStability is in $\NC$ for all life-like FCA, as well as all the rules in the square grid with the exception of the rule $S22$, which is \NPC. Moreover, we remark that, unlike the one dimensional case for which it was possible to exhibit one algorithm in order to solve the latter problem for all the rules, we introduced different approaches to solve \AsyncStability: i) infiltration approach and ii) monotony approach (by studying the rules that we have called monotone-like). In this regard, we observe that the infiltration approach is intrinsically interesting as it can be used to extend the obtained results to rules that are not necessarily Life-like rules. More precisely, consider the set $I_F$ associated to a given rule $F$. Then, by using the infiltration approach one can derive that all the rules such that $1 \in I_F$ are \NC for the triangular grid as we can always determine whether the objective cell is in $V_+$.  However, for the square grid it is not possible to straightforwardly use the latter approach to solve \AsyncStability. For example, if $I_F = \{1,4\}$, we do not know how predict the state of the objective cell in the case in which, initially, it has two active neighbors. Contrarily, as the number of neighbors is just three in the triangular grid, the problem can be solved directly as the objective cell is in $V_+$.

Our classification of simple life-like rules contain two groups, the \emph{infiltration rules}, which are the rules $F$ for which $1 \in \mathcal{I}_F$; and the \emph{monotone-like rules}, which are the rules $F$ for which $N-1 \in \mathcal{I}_F$, where $N$ is the size of the neighborhood. This sets contain all life-like rules defined over the triangular grid, and almost all life-like rules in the square grid, with exception of rule $S22$, which results to be \emph{complex} (i.e. \AsyncStability is \NPC). In the light of the obtained results, a future work could consider the asynchronous dynamics of Life-like FCA in two-dimensions considering the Moore neighborhood. Obviously by the same arguments given in this paper,  \AsyncStability is in \NC for all the  rules such that $1 \in \mathcal{I}_F$ and $N-1 \in \mathcal{I}_F$. An interesting research question are: how complex is this problem in the other cases?. Are all other rules \NPC?  We remark that the well-known \emph{Life-without-death} belongs to this set of rules. 

Finally, we remark that the \NP-completeness of the rule $S22$ is closely related with the capability of the dynamics to simulate three specific gadgets: the conjunction, disjunction and the selector blocks. Moreover, the constructions satisfy a specific topological restriction, which is the directed-grid oriented with south-east edges.  As consequence of the latter, we have shown that it is possible to simulate crossings and ultimately simulate SAT.  We consider that this simulation framework is of independent interest and could be used in the study of other asynchronous CA (including, for instance, Life-like FCA defined with the Moore neighborhood).


\section{Acknowledgements}

This work has been partially supported by: CONICYT via  PAI + Convocatoria Nacional Subvenci\'on a la Incorporaci\'on en la Academia A\~no 2017 + PAI77170068 and FONDECYT  11190482	(P.M.), CONICYT via PFCHA / DOCTORADO NACIONAL/2018 - 21180910 + PIA AFB 170001 (M.R.W) and ECOS C16E01 (E.G and M.R.W.); and CONICYT via Programa Regional STIC-AmSud (CoDANet) c\'od. 19-STIC-03 (E.G. and P.M.).  Additionally, D.M. thanks the support of Universidad Adolfo Ib\'a\~nez.

\bibliographystyle{elsarticle-num}
\bibliography{biblio}

\end{document}